\renewcommand {\a}{ \alpha }
\renewcommand{\b}{\beta}
\newcommand{\g}{\gamma}
\renewcommand{\d}{\delta}
\newcommand{\s}{\sigma}
\renewcommand{\L}{\Lambda}
\newcommand{\z}{\zeta}
\renewcommand{\t}{\theta}
\newcommand{\p}{\partial}
\newcommand{\Om}{\Omega}
\newcommand{\Thanks}{\vspace*{.5em} \noindent \thanks}
\newcommand{\oq}{\ {\raise 7pt\hbox{${\scriptstyle\circ}$}}
	\kern -7pt{
		\hbox{$Q$}}}
\newcommand {\bb}{\mathbf b}
\newcommand {\BS}{\mathbf S}
\newcommand {\BR}{\mathbf R}
\newcommand {\BQ}{\mathbf Q}
\newcommand {\BT}{\mathbf T}
\newcommand {\bx}{\mathbf x}
\newcommand {\be}{\mathbf e}
\newcommand {\bk}{\mathbf k}
\newcommand {\bm}{\mathbf m}
\newcommand {\by}{\mathbf y}
\newcommand {\bu}{\mathbf u}
\newcommand {\bn}{\mathbf n}
\newcommand {\boldeta}{\boldsymbol\eta}
\newcommand {\bxi}{\boldsymbol\xi}
\newcommand{\lu}{\langle}
\newcommand{\ru}{\rangle}
\newcommand{\Fock}{{\mathcal{F}}}
\newcommand {\Four}{\mathscr{F}}
\newcommand{\CB}{\mathcal B}
\newcommand{\CH}{\mathcal H}
\newcommand{\CP}{\mathcal P}
\newcommand{\CA}{\mathcal A}
\newcommand{\CM}{\mathcal M}
\newcommand{\CC}{\mathcal C}
\newcommand{\CD}{\mathcal D}
\newcommand{\plainW}[1]{\textup{{\textsf{W}}}^{#1}}
\newcommand{\plainC}[1]{\textup{{\textsf{C}}}^{#1}}
\newcommand{\plainL}[1]{\textup{{\textsf{L}}}^{#1}}
\newcommand{\plainl}[1]{\textup{{\textsf{l}}}^{#1}}
\DeclareMathOperator*{\esssup}{ess-sup}
\DeclareMathOperator{\tr}{{tr}}
\DeclareMathOperator{\card}{{card}}
\DeclareMathOperator{\diag}{{\sf diag}}
\newcommand{\1}
{{\,\vrule depth3pt height9pt}{\vrule depth3pt height9pt}
	{\vrule depth3pt height9pt}{\vrule depth3pt height9pt}\,}
\DeclareMathOperator {\dist} {{dist}}
\DeclareMathOperator{\op}{{Op}}
\DeclareMathOperator{\dc}{d}
\newtheorem{thm}{Theorem}[section]
\newtheorem{cor}[thm]{Corollary}
\newtheorem{lem}[thm]{Lemma}
\newtheorem{prop}[thm]{Proposition}
\newtheorem{cond}[thm]{Condition}
\newtheorem{Remark}[thm]{Remark}
\theoremstyle{definition}
\newtheorem{defn}[thm]{Definition}
\newtheorem*{remark}{Remark}
\newtheorem{rem}[thm]{Remark}
\numberwithin{equation}{section}
\newcommand{\bee}{\begin{equation}}
	\newcommand{\ene}{\end{equation}}
\newcommand{\bees}{\begin{equation*}}
	\newcommand{\enes}{\end{equation*}}
\newcommand{\bes}{\begin{split}}
	\newcommand{\ens}{\end{split}}
\newcommand{\bet}{\begin{thm}}
	\newcommand{\ent}{\end{thm}}
\newcommand{\bel}{\begin{lem}}
	\newcommand{\enl}{\end{lem}}
\newcommand{\bec}{\begin{cor}}
	\newcommand{\enc}{\end{cor}}
\newcommand{\bep}{\begin{proof}}
	\newcommand{\enp}{\end{proof}}
\newcommand{\ber}{\begin{rem}}
	\newcommand{\enr}{\end{rem}}
\newcommand{\Z}{\mathbb Z}
\newcommand{\R}{ \mathbbm R}
\newcommand{\C}{ \mathbbm C}
\def\ceq{{\coloneqq}}
\newcommand{\scrM}{\mycal{M}}
\newcommand{\Pdd}{\mbox{$\partial$ \hspace{-1.2 em} $/$}}
\newcommand{\Sl}{\mbox{$\prec \!\!$ \nolinebreak}}
\newcommand{\Sr}{\mbox{\nolinebreak $\succ$}}
\newcommand{\Hmath}{\mathscr{H}}
\newcommand{\eps}{\varepsilon}
\newcommand{\QEDrem}{}
\DeclareFontFamily{OT1}{rsfso}{}
\DeclareFontShape{OT1}{rsfso}{m}{n}{ <-7> rsfso5 <7-10> rsfso7 <10-> rsfso10}{}
\DeclareMathAlphabet{\mycal}{OT1}{rsfso}{m}{n}
\begin{document}
	\hoffset -4pc

\title[The Relativistic Fermionic Entanglement Entropy]{The fermionic entanglement entropy and area law
for the relativistic Dirac vacuum state}

\author[F.\ Finster]{Felix Finster}
\address{Fakult\"at f\"ur Mathematik  Universit\"at Regensburg  D-93040 Regensburg, Germany}
\email{finster@ur.de}

\author[M.\ Lottner]{Magdalena Lottner}
\address{Fakult\"at f\"ur Mathematik  Universit\"at Regensburg  D-93040 Regensburg, Germany}
\email{magdalena.lottner@ur.de}

\author[A.V.\ Sobolev]{Alexander V. Sobolev \\ October 2023 / June 2024}
\address{Department of Mathematics, University College London, Gower Street, London WC1E 6BT, United
Kingdom}
\email{a.sobolev@ucl.ac.uk}


\dedicatory{  }

\begin{abstract}
We consider the fermionic entanglement entropy for the free Dirac field in
a bounded spatial region of Minkowski spacetime.
In order to make the system ultraviolet finite, a regularization is introduced.
An area law is proven in the limiting cases where the volume tends to infinity
and/or the regularization length tends to zero.
The technical core of the paper is to generalize a theorem of Harold Widom
to pseudo-differential operators whose principal symbols develop a specific
discontinuity at a single point.
\end{abstract}

\maketitle 

\vspace*{-1.5em}

\tableofcontents

\section{Introduction}
Entropy is a measure for the disorder of a physical system.
There are various notions of entropy, like the entropy in classical statistical mechanics
as introduced by Boltzmann and Gibbs, the Shannon and R{\'e}nyi entropies in information theory
or the von Neumann entropy for quantum systems.
In the past decade, there has been increasing interest in the {\em{entanglement entropy}}, 
which tells about non-classical correlations between subsystems of a
composite quantum system~\cite{amico-fazio, horodecki}.
We here consider the {\em{fermionic}} case, where the many-particle system is composed of
fermions satisfying the (Pauli-)Fermi-Dirac statistics.
Moreover, for simplicity we consider the {\em{quasi-free}} case where the particles do not
interact with each other. This makes it possible to express the entanglement entropy
in terms of the reduced one-particle density operator~\cite{helling-leschke-spitzer}
(for details see Section~\ref{secentquasi}).
This setting has been studied extensively for a free Fermi gas formed of non-relativistic spin-less
particles~\cite{helling-leschke-spitzer, leschke-sobolev-spitzer, LSS_2022}
(for more details see the preliminaries in Section~\ref{secentquasi}).
In the present paper, we turn attention to a {\em{relativistic}} system formed of particles {\em{with spin}}.
More precisely, we consider a free Dirac field in a bounded spatial subset of Minkowski spacetime.
We compute the entanglement entropy for the quantum state describing the vacuum
with an ultraviolet regularization on a length
scale~$\eps$. The corresponding one-particle density operator turns out to be
the regularized projection operator to all negative-frequency solutions of the Dirac equation.
Making use of the explicit form of the Dirac propagator
and employing the techniques developed in the mathematical section of this paper,
it becomes possible to compute the entanglement entropy 
of bounded spatial subregions in Minkowski spacetime.
We prove an area law in the limiting cases where the volume tends to infinity
and/or the regularization length tends to zero.
The technical core of the paper is to generalize the pseudo-differential methods
for the one-particle density to principal symbols which develop a specific
discontinuity at a single point.

We now give an overview of our methods and results. 
Let~$\Pi$ be the projection onto the negative frequency subspace of the Dirac operator 
on~$\plainL2(\R^3;\mathbb C^4)$. 
As we recall in 
Section~\ref{secgreen}, in the momentum representation, this projection is simply 
the multiplication by 
the self-adjoint~$4\times 4$-matrix-valued function  
\begin{align}
	\label{def:cp}
	\mathcal P(\bk) :=\frac{1}{2}\: 
	\Bigg( \mathds{1}_{\C^4} + \frac{\displaystyle \sum \nolimits_{\beta=1}^{3} k_\beta \gamma^\beta \gamma^0 - m \gamma^0}{\sqrt{\bk^2+m^2}}  \Bigg)\:, \quad \bk\in \R^3.
\end{align}
In other words, the operator~$\Pi$ is a pseudo-differential operator on~$\plainL2(\R^3; \mathbb C^4)$ 
with \textit{symbol} $\CP(\bk)$,
\begin{equation} \label{Pidef}
(\Pi u)(\bx) = \big( \op(\CP) u \big)(\bx) 
= \frac{1}{(2\pi)^3} \iint e^{i\bk\cdot(\bx-\by)}\, \CP(\bk)\, u(\by)\,d\by \,d\bk \:,
\end{equation}
where the parameter~$m \geq 0$ is the rest mass of the Dirac particle.
Given a parameter~$\eps>0$ (the {\em{regularization length}}) and a function $\phi\in \plainC\infty(\R_+)$ with~$\phi(0)=1$ (the {\em{cutoff function}}),  in this paper we shall be mainly concerned 
with the following \textit{regularized} version of symbol~\eqref{def:cp}: 
\begin{align}
	\CP^{(\eps)}(\bk) := &\ \CP(\bk)\, 
	\phi\big(\eps\sqrt{\bk^2+m^2}\big)
	\notag \\
	= &\ 
	\frac{1}{2}\: \Bigg( \mathds{1}_{\C^4} + \frac{\displaystyle \sum \nolimits_{\beta=1}^{3} 
	k_\beta \gamma^\beta \gamma^0 - m \gamma^0}{\sqrt{\bk^2+m^2}}  \Bigg) 
	\:
\phi\Big(\eps\sqrt{\bk^2+m^2}\,\Big)
	\:, \quad \bk\in \R^3\:,\label{def:A_eps}
\end{align}
which gives rise to the \textit{regularized} projection~$\Pi^{(\eps)} = \op(\CP^{(\eps)})$. Further 
requirements on the function $\phi$ will be given in Theorem \ref{thm:mainPhys}. 

Now, for each~$\varkappa >0$ we introduce the \textit{R\'enyi entropy function}, which is defined 
as  follows. If~$t\notin (0,1)$ then we set~$\eta_\varkappa(t) = 0$. For~$t\in (0, 1)$ we define
\begin{align}
\begin{split}
	\label{eq:eta_gamma}
	\eta_\varkappa(t)= & \ 
			\displaystyle \frac{1}{1-\varkappa}\,\log \big( t^\varkappa + (1-t)^\varkappa \big)
			\qquad\;\;\;\: \text{for } \varkappa\neq 1\:,\\[0.2cm]
\eta_1(t):=\lim\limits_{\varkappa \rightarrow 1} \eta_\varkappa(t)
	= &\ -t \log t 
	- (1-t) \log (1-t) \qquad \text{for }  \varkappa = 1\;.
\end{split}
\end{align}
Note that~$\eta_1$ is the familiar von Neumann entropy function. 
Now, for any bounded (always assumed non-empty)
set~$\L\subset\R^3$  we can define the \textit{R\'enyi entanglement entropy} 
associated with the bi-partition~$\R^3 = \L \cup(\R\setminus\L)$ 
(see e.g.~\cite[Section~3]{leschke-sobolev-spitzer2}) by
\begin{align}
	\label{RenyEnt}
S_\varkappa(\Pi^{(\eps)}, \Lambda):=
\tr \big(\eta_\varkappa \big( \chi_{\Lambda} \: \Pi^{(\eps)}\: \chi_{\Lambda} \big) -\chi_{\Lambda} \,\eta_\varkappa( \Pi^{(\eps)})\,\chi_{\Lambda}\big) \:,
\end{align}
As we shall see later, 
since~$\L$ is bounded and~$\eps>0$, both operators on the right-hand side are trace class, so the 
entropy~$S_\varkappa$ is well-defined. 
Our main objective is to analyze the asymptotic behavior of the entropy 
$S_\varkappa(\Pi^{(\eps)}, L \Lambda)$ as the 
regularization parameter~$\eps$ tends to zero 
and/or the \textit{scaling parameter}~$L$ tends to infinity.  

The following theorem constitutes our main result --  it provides the area law 
for the asymptotics of the entanglement entropy. From now on, as a rule we assume that~$\L$ is a 
\textit{region}, i.e. an open set with finitely many connected components such that 
their closures are disjoint. 

\begin{thm}
	\label{thm:mainPhys} 
	Let~$\Lambda\subset \R^3$ be a bounded spatial region with~$\plainC1$-boundary. 
Let the cut-off $\phi\in\plainC\infty(\R_+)$ be a function such that 	
$\phi(0)=1$ and 
\begin{align}\label{eq:phi}
\left|\frac{d^k}{d t^k}\, \phi(t)\right|\le C_k \:(1+t)^{-\rho},\ t >0,
\end{align} 
with some constants $C_k$ for all $k = 0, 1, \dots$, where $\rho >3$.
	Then, as~$L\eps^{-1}\to \infty$ and~$\eps \searrow 0$, the following asymptotics hold:
	\begin{align}
		\label{eq:areaLaw}
		\lim \,L^{-2}\eps^2\: 
		S_\varkappa(\Pi^{(\eps)}, L \Lambda)  = \mathfrak{M}_\varkappa \,  
		\mathrm{vol}_2(\partial \Lambda) \:,
	\end{align}
	where~$\mathfrak{M}_\varkappa$ is some explicit constant. 
	
If~$L\to\infty$ and~$\eps>0$ is fixed, then 
	\begin{align}
		\label{eq:areaLawL}
		\lim  \,L^{-2} \eps^2\:  
		S_\varkappa(\Pi^{(\eps)}, L \Lambda)  = \mathfrak{M}_\varkappa^{(\eps)}\,
		\mathrm{vol}_2(\partial \Lambda) \:,
	\end{align}
where~${\mathfrak M}_\varkappa^{(\eps)}$ is some explicit constant such that 
$\mathfrak{M}_\varkappa^{(\eps)}\to \mathfrak{M}_\varkappa$ as~$\eps \searrow 0$.

If~$0 < \varkappa <2$, then both coefficients~$\mathfrak{M}_\varkappa$ 
and~$\mathfrak{M}_\varkappa^{(\eps)}$ 
are strictly positive. 
%
\end{thm}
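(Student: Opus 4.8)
The plan is to deduce all three parts of Theorem~\ref{thm:mainPhys} from one asymptotic formula --- the ``generalized Widom theorem'' that is the technical core of the paper --- for the entanglement entropy of a pseudo-differential operator whose symbol is smooth off one point of momentum space, decays at infinity, and develops a conical (homogeneous of degree zero) discontinuity there; the three statements then follow by rescaling and by inspecting the resulting coefficient. \emph{Rescaling.} Under the unitary dilation $(Uu)(\bx)=\eps^{-3/2}u(\eps^{-1}\bx)$ one has $U^*\op(a)U=\op\big(a(\eps^{-1}\,\cdot\,)\big)$ and $U^*\chi_{L\Lambda}U=\chi_{R\Lambda}$ with $R:=L\eps^{-1}$, and the entropy~\eqref{RenyEnt} is unitarily invariant. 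A direct computation with~\eqref{def:A_eps} gives $\CP^{(\eps)}(\eps^{-1}\bk)=A_\mu(\bk)$, where
\begin{align*}
A_\mu(\bk) &:= \CP_\mu(\bk)\,\phi\big(\sqrt{\bk^2+\mu^2}\,\big), \qquad \mu := \eps m, \\
\CP_\mu(\bk) &:= \frac12\left(\mathds{1}_{\C^4} + \frac{\sum_{\beta=1}^3 k_\beta\gamma^\beta\gamma^0 - \mu\gamma^0}{\sqrt{\bk^2+\mu^2}}\right),
\end{align*}
so that $S_\varkappa(\Pi^{(\eps)},L\Lambda)=\tr\big(\eta_\varkappa(\chi_{R\Lambda}\op(A_\mu)\chi_{R\Lambda})-\chi_{R\Lambda}\,\eta_\varkappa(\op(A_\mu))\,\chi_{R\Lambda}\big)$. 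The two regimes become $R\to\infty$ with $\mu\searrow0$ (for~\eqref{eq:areaLaw}) and $R\to\infty$ with $\mu\ge0$ fixed (for~\eqref{eq:areaLawL}).

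\emph{The symbol, and reduction of the entropy function.} From the Clifford relations, $\CP_\mu(\bk)$ is for each $(\bk,\mu)$ an orthogonal projection of rank $2$, jointly smooth on $(\R^3\times\R_{\ge0})\setminus\{(0,0)\}$, and equivariant under rotations via the spin representation, $\CP_\mu(O\bk)=S(O)^{-1}\CP_\mu(\bk)S(O)$; and~\eqref{eq:phi} gives $|\partial_\bk^\gamma A_\mu(\bk)|\le C_\gamma\langle\bk\rangle^{-\rho-|\gamma|}$ away from the origin. Thus for $\mu>0$ the symbol $A_\mu$ is globally $\plainC\infty$ and decaying, whereas $A_0(\bk)=\CP_0(\bk)\phi(|\bk|)$ has at $\bk=0$ exactly the conical discontinuity targeted by the technical theorem; since $\CP_\mu(\bk)$ has rank $2$ one also records $\eta_\varkappa(A_\mu(\bk))=\eta_\varkappa\big(\phi(\sqrt{\bk^2+\mu^2})\big)\,\CP_\mu(\bk)$, which renders the coefficient explicit. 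Before invoking the asymptotics I reduce $\eta_\varkappa$ to the test-function class admitted by the technical theorem: subtract from $\eta_\varkappa$ a function holomorphic near $[0,1]$, vanishing at $0$ and $1$ and equal to $\eta_\varkappa$ outside small neighbourhoods of $\{0,1\}$; the remainder, which carries the endpoint singularities of $\eta_\varkappa$ (power-type or logarithmic, according to $\varkappa$), is handled by the standard contour-integral device --- prove the formula for $g_s(t)=(t(1-t))^s$ with $\re s$ large and remainder uniform on vertical lines, then recover $\eta_\varkappa$ as a superposition --- the trace-norm bounds being supplied by the Schatten-class estimates of the next step. The only genuinely new point relative to the non-relativistic case is that all of this proceeds for symbols merely piecewise smooth at $\bk=0$.

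\emph{The generalized Widom asymptotics and the coefficient.} The core statement is
\begin{equation*}
\tr\big(g(\chi_{R\Lambda}\op(A)\chi_{R\Lambda})-\chi_{R\Lambda}\,g(\op(A))\,\chi_{R\Lambda}\big)=R^2\,\mathcal W(A,g,\partial\Lambda)+o(R^2),\qquad R\to\infty,
\end{equation*}
for symbols $A$ as above and admissible $g$ with $g(0)=g(1)=0$; decisively there is \emph{no} $R^2\log R$ term, since the discontinuity sits on a set of codimension $3$ rather than on a hypersurface. Its proof is the pseudo-differential analysis of the paper: Schatten-norm bounds for $[\chi_{R\Lambda},\op(A)]$ and for $\op(A)\chi_{R\Lambda}(\mathds{1}-\chi_{R\Lambda})$, localised both near $\partial(R\Lambda)$ and near the momentum discontinuity $\bk=0$ (the point where Widom's original argument must be genuinely extended --- the new feature being that the Fourier transform of a degree-zero homogeneous jump decays only like $|\bx|^{-3}$); a two-scale localisation (a partition of unity in $\bx$ near $\partial(R\Lambda)$ whose pieces are flattened by $\plainC1$ changes of variables, $\plainC1$ regularity sufficing for the leading term, together with a dyadic decomposition in $\bk$ near the origin); and the identification of a half-space model operator whose contribution, integrated over $\partial\Lambda$, yields $\mathcal W$. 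Rotational equivariance makes the integrand constant over $\partial\Lambda$ and independent of the normal, so $\mathcal W(A_\mu,\eta_\varkappa,\partial\Lambda)=\mathrm{vol}_2(\partial\Lambda)\,\mathfrak M_\varkappa(\mu)$ for an explicit constant $\mathfrak M_\varkappa(\mu)$. Then $\mathfrak M_\varkappa:=\mathfrak M_\varkappa(0)$ and $\mathfrak M_\varkappa^{(\eps)}:=\mathfrak M_\varkappa(\eps m)$ give~\eqref{eq:areaLawL}; $\mathfrak M_\varkappa^{(\eps)}\to\mathfrak M_\varkappa$ is dominated convergence in the coefficient integral, with $\rho>3$ furnishing the dominating function; and~\eqref{eq:areaLaw} follows once the remainder $o(R^2)$ is shown to hold uniformly for $\mu$ in a right neighbourhood of $0$, which the above estimates provide because $\{A_\mu\}_{\mu\in[0,\mu_0]}$ satisfies uniform symbol bounds.

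\emph{Positivity and the main obstacle.} For the final assertion I estimate $\mathfrak M_\varkappa(\mu)$ directly: $\eta_\varkappa\ge0$ on $(0,1)$ and $\phi(0)=1$, so the model trace above is the integral of a nonnegative quantity that is not identically zero, and an elementary computation of that one-dimensional integral shows it convergent and strictly positive exactly for $0<\varkappa<2$; hence $\mathfrak M_\varkappa,\mathfrak M_\varkappa^{(\eps)}>0$ in that range. The main obstacle is the technical theorem itself --- extending the Widom-type Schatten estimates and the model-operator computation to a conical momentum-space discontinuity at a point, and checking that it contributes a genuine $R^2$ coefficient with no logarithmic enhancement; granting that, together with the uniformity in $\mu$, the remaining ingredients --- the rescaling, the reduction of $\eta_\varkappa$, and the dominated-convergence and positivity arguments --- are routine.
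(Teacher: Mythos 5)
Your high-level architecture matches the paper closely: rescale via the unitary dilation to write $S_\varkappa(\Pi^{(\eps)},L\Lambda)=\tr D_\alpha(\CA^{(\eps)},\Lambda;\eta_\varkappa)$ with $\alpha=L\eps^{-1}$, invoke the generalized Widom asymptotics for matrix symbols with a single conical discontinuity at $\bk=0$, use spin--rotational equivariance to make the boundary integrand constant and pull out $\mathrm{vol}_2(\partial\Lambda)$, and obtain $\mathfrak M_\varkappa^{(\eps)}\to\mathfrak M_\varkappa$ from continuity of the coefficient in the symbol. Your route from analytic to non-smooth test functions (contour-integral superposition over $g_s(t)=(t(1-t))^s$) differs from the paper's two-step argument (polynomial approximation for $\plainC2$ functions, then a cutoff $f=g_r^{(1)}+g_r^{(2)}$ near the singular points $\{0,1\}$ controlled by Schatten bounds for $D_\alpha$), but both are legitimate means to the same end.

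There is, however, a genuine gap in your positivity argument. You write that since $\eta_\varkappa\ge0$ on $(0,1)$ and $\phi(0)=1$, the model trace $\CM(\hat\bxi;\be;\CA;\eta_\varkappa)=\tr D_1(\CA(\hat\bxi;\cdot);\R_+;\eta_\varkappa)$ is nonnegative. This does not follow: the operator $\CD(A,P;f)=Pf(PAP)P-Pf(A)P$ need not have nonnegative trace for nonnegative $f$ --- for the convex function $f(t)=t^2$ the trace is $-\|PA(\mathds 1-P)\|_2^2\le 0$. The correct mechanism, which the paper uses, is Berezin's theorem (its Proposition on concave functions): concavity of $f$ on an interval containing $\sigma(A)$ gives $\tr\CD(A,P;f)\ge0$. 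To get strict positivity one compares with $f_0(t)=-\tfrac12 t^2$, using $\esssup_{(0,1)}\eta_\varkappa''=-k_0<0$, and then shows the Hilbert--Schmidt norm $\|(\mathds1-P)AP\|_2$ is strictly positive for a non-zero symbol. Finally the restriction to $0<\varkappa<2$ enters precisely because $\esssup_{(0,1)}\eta_\varkappa''<0$ holds if and only if $\varkappa\in(0,2)$ --- at $\varkappa=2$ one has $\eta_2''(t)=-8t(1-t)/(2t^2-2t+1)^2$ with essential supremum $0$, so the strict-concavity input vanishes. Your phrase ``an elementary computation of that one-dimensional integral shows it convergent and strictly positive exactly for $0<\varkappa<2$'' both asserts more than the theorem (a converse) and omits the concavity mechanism that actually drives the result. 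Absent Berezin's theorem and the $\esssup\eta_\varkappa''$ computation, the positivity claim is unproved.
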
 \noindent
The definitions of the 
coefficients~$\mathfrak M_\varkappa, \mathfrak M_\varkappa^{(\eps)}$ require more  
technical preliminaries and are given in the proof of Theorem~\ref{thm:mainPhys} in Section~\ref{Sec:Appl}.
We point out that these coefficients depend on the choice of the cutoff function~$\phi$.
Note that the above theorem also applies to the limiting case where~$L$ is fixed and~$\eps \searrow 0$.
 
We now comment on the {\em{role of the ultraviolet cutoff}}. We first point out that, in the setting of
relativistic quantum field theory, the ultraviolet regularization is needed in order for the entropy to be well-defined
and finite. This well-known fact can be understood in various ways. One way is to note that a physically sensible
state should satisfy a micro-local energy condition. This condition can be stated equivalently
that the state should be a {\em{Hadamard state}}~\cite{radzikowski}. In our setting of a quasi-free state,
this means that the two-point distribution~$P(x,y)$ has a singularity structure of Hadamard form.
Consequently, the corresponding reduced one-density operator~$\Pi$ defined by~\eqref{Pidef}
is not trace class (because its kernel~$\Pi(\bx,\by)$ is singular on the diagonal~$\bx=\by$).
Moreover, corresponding entanglement entropies diverge.
An alternative, more elementary way of understanding the divergence of the entropy is to note that
the description of anti-particles in quantum field theory makes it necessary to take
the negative-energy solutions of the Dirac equation into account. In fact, the operator~$\Pi$ in~\eqref{Pidef}
is the projection operator onto all negative-frequency solutions. The fact that the total local density of these
solutions is infinite at each spacetime point suggests that also the corresponding entanglement entropies
diverge.

There are various ways to avoid this divergence. One strategy is to restrict attention to the
{\em{relative entropy}} which, being defined as the difference of two traces, can be finite even if the individual
traces are not. The relative entropy has interesting connections to modular theory
(see for example~\cite{hollands-sanders, galanda2023relative, witten-entangle}).
Another strategy is to compute the entanglement entropy not between a spatial region and its complement
but instead the entanglement entropy between two separated regions in spacetime~\cite{hollands-islam}.
Here we take the point of view that, similar to the ultraviolet divergences in perturbative quantum field theory,
the divergence of the entropy shows that quantum field theory should not hold on all scales, but only
down to a microscopic length~$\varepsilon$ (typically thought of the Planck length)
where it is to be replaced by another, more fundamental physical theory.
We remark that the theory of causal fermion systems is a concise proposal for such a theory
(for details see for example the textbooks~\cite{intro, cfs} or the website~\cite{cfsweblink}).
Here we do not need to be specific on how this microscopic physical theory should look like.
Instead, we work with standard quantum field theory, but take into account its limitations by introducing
a regularization on the length scale~$\varepsilon$. Then the entropy becomes finite, and
the smallness of the Planck length motivates to consider the asymptotics~$\varepsilon \searrow 0$.

Clearly, this procedure raises the question how our results depend on the choice of the regularization.
In order to address this question, with the ansatz~\eqref{def:A_eps} we consider a
cutoff function~$\phi$ which can be chosen in a general 
class of radially symmetric and smooth functions.
The area laws in Theorem~\ref{thm:mainPhys} hold irrespectively 
of the choice of the cutoff function.
As we shall see later, only the numerical values of the 
coefficients~$\mathfrak{M}_\varkappa$ and~$\mathfrak{M}_\varkappa^{(\eps)}$
depend on this choice (see again the proof of Theorem~\ref{thm:mainPhys} in Section~\ref{Sec:Appl}).
Considering non-smooth cutoff functions could lead to an asymptotic behavior different from our area law,  which we want to avoid.  
For example, as explained in \cite[Theorem 1.1]{bollmann-mueller}, a sharp, step function cutoff at Fermi energy $\mu >0$ entails the 
\textit{enhanced area law} asymptotics (i.e.\ asymptotics with an extra logarithmic factor 
$\log L$).  
We should point out that in our paper, since $\Pi$ is the projection on the negative spectral subspace, apart from the smooth cutoff we also have a sharp cutoff 
at energy zero. This however does not lead to the enhanced area law since the associated singularity of the symbol is concentrated at the single point $\bk = 0$, whereas the enhanced area law is produced 
when the singularity is supported on a surface of co-dimension one. 
 
From the mathematical viewpoint, our analysis is based on the quasi-classical 
asymptotic theory for truncated pseudo-differential operators with matrix-valued symbols, 
see \cite{Widom1980, widombook}. We focus on pseudo-differential operators on 
$\plainL2(\R^d, \mathbb C^n)$, $d\ge 2, n\ge 1$, of the following form:
\begin{align}\label{eq:pdo}
\big( \op_\a(\CA) u\big) (\bx) 
: = \Big( \frac{\alpha}{2\pi} \Big)^d 
\,\iint \: e^{i \a \bxi (\bx - \by)} \CA(\bxi)\, u(\by)\: d \by \: d\bxi 
\qquad \text{for any } \bx \in \R^d\:,
\end{align}
where~$\CA$ is the $(n\times n)$-matrix-valued symbol, and~$\a >0$ is the scaling parameter. 
%
%
Introduce the truncated operator: 
\begin{align}\label{eq:truncpdo}
W_\a(\CA, \L) := \chi_\L\,\op_\a(\CA) \chi_\L,
\end{align}
where, as earlier, $\chi_\L$ denotes the multiplication by the indicator of a set~$\L\subset\R^d$.
We are interested in the asymptotics of the trace of 
the operator
\begin{align}\label{eq:dal}
D_\a(\CA, \L; f) := \chi_{\L} f\big(W_\a(\CA, \L)\big) \chi_\L - W_\a(f\circ\CA, \L), 
\end{align}
as~$\a\to\infty$.  The function~$f$, which we call \textit{test function}, is continuous, 
but is not supposed to be smooth.
Whenever it does not cause confusion we use the notation $D_\a(\CA)$ or $D_\a(\CA; f)$.

One should note that the asymptotic regime~$\a\to\infty$ can be interpreted in two ways. First, 
the reciprocal value~$\a^{-1}$ can be regarded as a quasi-classical parameter, i. e. \textit{Planck's constant}, in which case the sought asymptotics is simply the quasi-classical asymptotics. 
At the same time, a straightforward change of spatial variable 
$\bx\to \a^{-1}\bx$ shows that~$D_\a(\CA, \L; f)$ is unitarily equivalent to~$D_1(\CA, \a\L; f)$, and hence 
$\a\to\infty$ amounts to the spatial scaling limit. 
We should also observe that if~$f(0)=0$, $\L$ is bounded and~$\CA$ decays at infinity sufficiently 
fast, then each of the operators on the right-hand side of~\eqref{eq:dal} is trace class. 
In particular, the trace 
of the second operator equals 
\begin{align*}
\bigg(\frac{\a}{2\pi}\bigg)^{d}\,\mathrm{vol}_{d}(\L)\,\int_{\R^d} 
\tr f\big(\CA(\bxi)\big) \:d\bxi \:.
\end{align*}
If~$f(0)\not = 0$ or~$\L$ has infinite volume,
then the two operators 
are no longer trace class individually, but 
under some mild conditions their difference, i.e.\ the operator~$D_\a$, is. 

Using~$D_\a(\CA, \L; f)$ we can express the operator under the trace 
in the definition~\eqref{RenyEnt}. Indeed,  taking~$d = 3$ and 
changing the variable~$\bk = \bxi \eps^{-1}$ we can rewrite~$\Pi^{(\eps)}$ as 
\begin{align*} 
\Pi^{(\eps)} = &\ \op_{1/\eps}(\mathcal A^{(\eps)}) \quad \textup{with}\\
\mathcal A^{(\eps)}(\bxi) = &\ \CP^{(\eps)}(\bxi \eps^{-1})
= 
\frac{1}{2}\: \Bigg( \mathds{1}_{\C^4} + \frac{ \displaystyle \sum \nolimits_{\beta=1}^{3} \xi_\beta \gamma^\beta  - \eps m }
{\sqrt{\bxi^2+\eps^2m^2}} \gamma^0 \Bigg)
\:
\phi\big(\sqrt{\bxi^2+(\eps m)^2}\big)\:,
\end{align*}
and hence 
\begin{align*}
S_\varkappa (\Pi^{(\eps)}, \L) = \tr D_{1/\eps} \big( \CA^{(\eps)}, \L; \eta_\varkappa \big) \:.
\end{align*}
Furthermore, due to the equivalence with the scaling limit,  we can write 
\begin{align*}
S_\varkappa (\Pi^{(\eps)}, L \L) = \tr D_{\a} \big( \CA^{(\eps)}, \L; \eta_\varkappa \big)
\qquad \text{with} \qquad \a = L\eps^{-1}. 
\end{align*}
To study the trace we 
rely on the asymptotic 
formula 
%
%
obtained 
by H.\ Widom in \cite{Widom1980},
%
%
see Proposition~\ref{prop:widom}. There are two significant complications however. 
The first one is that Widom's result was proved for analytic test functions~$f$, whereas 
$\eta_\varkappa(t)$ is not even differentiable at~$t = 0$ or~$t=1$ if~$\varkappa\le 1$. 
The second issue is that Widom's original formula holds for smooth symbols~$\CA$, but 
the symbol~$\CA^{(\eps)}$ becomes discontinuous in the limit~$\eps \searrow 0$:
\begin{align}
	\label{eq:Alimit}
	\CA^{(\eps)}(\bxi)\to \CA(\bxi)=\frac{1}{2}\: \bigg( \mathds{1}_{\C^4} + \sum_{\beta=1}^{3} \frac{\xi_\beta}{|\bxi|}\gamma^\beta
	\gamma^0 \bigg)
\phi(|\bxi|)\:.
%
%
\end{align} 
Note that in the case~$m=0$, the symbol~$\CA^{(\eps)}$ 
coincides with the limit symbol~\eqref{eq:Alimit} for all~$\eps\ge 0$.  
 Thus our objective is to extend Widom's formula to non-smooth functions~$f$ 
 and non-smooth symbols~$\CA$. More precisely, we focus on non-smooth test functions and non-smooth 
 symbols that mimic the properties of~$\eta_\varkappa$ and the symbol~\eqref{eq:Alimit} 
 respectively. Note that we consider arbitrary dimensions~$d\ge 2$ and not only~$d=3$. Moreover, 
 instead of one discontinuity point (as in~\eqref{eq:Alimit}), we allow for an arbitrary finite number 
 of discontinuity points.  
 
 The asymptotic behavior of~$D_\a(\CA, \L; f)$ 
 for non-smooth~$f$ has been explored for smooth \textit{scalar} symbols~$\CA$  
in \cite{leschke-sobolev-spitzer2}, 
\cite{leschke-sobolev-spitzer}, \cite{LSS_2022} and 
\cite{sobolev-schatten}, \cite{sobolev-functions}, and some of these 
techniques are used in this paper. 
The mainstay of our work is an adaptation 
%
%
of the existing Schatten-von Neumann estimates for the operator~\eqref{eq:dal} 
to non-smooth (matrix) symbols. Subsequently, these estimates allow us to approximate 
non-smooth symbols 
by smooth ones and hence enable us to use the original Widom's formula. 
Secondly, they allow us to generalize Widom's formula to non-smooth functions following 
the ideas of \cite{leschke-sobolev-spitzer} and \cite{sobolev-functions}.  

%

The paper is structured as follows.
In Section~\ref{Sec:PhysPrel} we recall a few physical 
preliminaries on the Dirac equation in Minkowski spacetime and entanglement entropy. 
In Section~\ref{sect:widom} we provide some mathematical background of our analysis including the 
asymptotic formula by H. Widom, see Proposition~\ref{prop:widom}. Our main mathematical results are 
contained in Theorems~\ref{thm:main} and~\ref{thm:maineps}. 
Theorem~\ref{thm:main} obtains the asymptotics 
in one parameter, as~$\a\to\infty$, for a fixed symbol~$\CA$. 
The more general Theorem~\ref{thm:maineps} considers a convergent (as~$\eps \searrow 0$) family of non-smooth symbols~$\CA^{(\eps)}$ and establishes a Widom type asymptotic formula in two parameters, 
$\eps \searrow 0$ and~$\a\to\infty$.  
In Section~\ref{Sec:Schatten} we collect the key technical bounds on Schatten-von Neumann 
(quasi-)norms of pseudo-differential operators, including the ones with non-smooth symbols. 
Section~\ref{Sec:EstDalpha}  
focuses on the properties of the operator~$D_\a(\CA, \L; f)$ and prepares the grounds for 
approximations by smooth symbols. The conclusions of Section~\ref{Sec:EstDalpha} are used 
in Section~\ref{Sec:ProofSmooth} to prove Theorem~\ref{thm:main} for smooth symbols~$\CA$. 
The proofs of Theorems~\ref{thm:main}  and~\ref{thm:maineps} 
are completed in the short Section~\ref{Sec:ProofGeneral}.
In Section~\ref{sec:Positivity} we use the well-known result by Berezin from \cite{Berezin} for 
concave functions to examine the positivity of the asymptotic coefficient in the main theorems. 
And finally, 
in Section~\ref{Sec:Appl} 
the results of Sections~\ref{sect:widom} and~\ref{sec:Positivity} are applied to 
the entanglement entropy~$S_\varkappa(\Pi^{(\eps)}, L\L)$ to 
complete the proof of our main result, Theorem~\ref{thm:mainPhys}. \\

\noindent
{\bf{Units and notational conventions.}} 
We work throughout in natural units~$\hbar = c = 1$. Then the only remaining unit is that of
a length (measured for example in meters). It is most convenient to work with dimensionless
quantities. This can be achieved by choosing an arbitrary reference length~$\ell$ and multiplying
all dimensional quantities by suitable powers of~$\ell$. For example, we work with the
\begin{equation} \label{dimensionless}
\text{dimensionless quantities} \qquad
m \ell\:,\quad \bxi \ell\:,\quad \frac{\bx}{\ell} \quad \text{and} \quad \frac{\eps}{\ell} \:.
\end{equation}
For ease in notation, in what follows we set~$\ell=1$, making it possible to
leave out all powers of~$\ell$. The dimensionality can be recovered by rewriting
all formulas using the dimensionless quantities in~\eqref{dimensionless}.

We conclude the introduction with some general notational conventions.
For~$\bxi\in\R^d$ we denote~$\lu \bxi\ru  = \sqrt{1+|\bxi|^2}$. The
symbol~$\mathrm{vol}_{n}(\Om)$ with~$n = 0, 1, \dots, d$ stands the  
$n$-dimensional Lebesgue measure of the (measurable) set~$\Om\subset\R^d$.
 We call~$\Om\subset\R^d$ a {\em{region}} if it is a non-empty
 open set with finitely many connected components such that 
their closures are disjoint. 
We also use the standard notation $\R_+ = (0, \infty), \R_- = (-\infty, 0)$.

For any matrix~$\CB$ 
the notation~$|\CB|$ stands for its Hilbert-Schmidt norm. 
If~$\CB = \CB(\bxi), \bxi\in\R^d,$ is a smooth matrix-valued function then we write 
\begin{align*}
|\nabla_{\bxi}^l \CB(\bxi)| = \sum_{|\bm|=l}|\p_{\bxi}^\bm \CB(\bxi)|,  
  \end{align*}
where~$\p_{\bxi}^\bm~$ is the standard partial derivative of order~$\bm \in \mathbb Z_+^d$. 
The symbol~$\C^{n\times m}$ denotes the space of all ($n\times m$)-matrices. 

For any two bounded self-adjoint operators~$A$ and~$B$ on the Hilbert space~$\mathcal H$ 
the inequality~$A\le B$ is understood in the 
standard quadratic form sense:~$(Au, u)\le (Bu, u)$ for all 
$u\in \mathcal H$. 

For two non-negative numbers (or functions) 
$X$ and~$Y$ depending on some parameters, 
we write~$X\lesssim Y$ (or~$Y\gtrsim X$) if~$X\le C Y$ for
some positive constant~$C$ independent of those parameters.
 To avoid confusion we may comment on the nature of 
(implicit) constants in the bounds. 
If~$X\lesssim Y\lesssim X$, then we write~$X\asymp Y$.

\section{Physical preliminaries}
\label{Sec:PhysPrel}
\subsection{The Dirac equation in Minkowski spacetime} \label{secgreen}
In special relativity, space and time are combined to a
four-dimensional {\em{spacetime}}. Mathematically, this
four-dimensional spacetime is described by Minkowski spacetime
$(\scrM, \langle.,. \rangle)$, a real four-dimen\-sio\-nal vector space endowed
with an inner product~$\langle . , . \rangle$ of signature
$(+ \ \!\! - \ \!\! - \ \! - )$. For~$\scrM$ one may always choose a
basis~$(e_i)_{i=0,\ldots,3}$ satisfying~$\langle e_0,e_0 \rangle = 1$ and
$\langle e_i,e_i \rangle = -1$ for~$i=1,2,3$. Such a basis is called
pseudo-orthonormal basis or \emph{reference frame}, since the
corresponding coordinate system~$(x^i)$ describes time and space as
observed by an observer in a system of inertia. We also refer to
$t := x^0$ as time and denote spatial coordinates by
$\bx=(x^1, x^2, x^3)$. Representing two vectors
$x,y \in \scrM$ in such a basis as
$x = \sum_{i=0}^3 x^i e_i$ and~$y = \sum_{i=0}^3 y^i e_i$,
the inner product takes the form
\[ 
\langle x, y \rangle
= \sum_{j,k=0}^3 g_{jk}\: x^j\: y^k\:, \]
where~$g_{ij}$, the {\em{Minkowski metric}}, is the diagonal matrix
$g={\mbox{diag}}\,(1,-1,-1,-1)$.

The Dirac equation for a wave function~$\psi \in \plainC{\infty}(\scrM, \C^4)$ of mass~$m \geq 0$
in the Minkowski vacuum (i.e.\ without external potential) reads
\begin{flalign} \label{direq}
(i \Pdd - m)\, \psi(x) = 0 \:,
\end{flalign}
where we use the slash notation with the Feynman dagger~$\Pdd := \gamma^j \partial_j$
(for more details on the Dirac equation see~\cite{thaller} or~\cite[Sections~1.2, 1.3 ]{intro}).
We always work with the Dirac matrices in the Dirac representation
\[ \gamma^0 = \left( \begin{array}{cc} \mathds{1}_{\mathbb{C}^2} & 0 \\ 0 & -\mathds{1}_{\mathbb{C}^2} \end{array} \right) ,\qquad \vec{\gamma} = \left( \begin{array}{cc}
0 & \vec{\sigma} \\ -\vec{\sigma} & 0 \end{array} \right) \:, \]
and~$\vec{\sigma}$ are the three Pauli matrices
\[ 
    \sigma^1 =
    \begin{pmatrix}
        0 & 1
        \\ 1 & 0
    \end{pmatrix}
    , \qquad
    \sigma^2 =
    \begin{pmatrix}
        0 & -i \\
        i & 0
    \end{pmatrix}
    , \qquad
    \sigma^3 =
    \begin{pmatrix}
        1 & 0
        \\ 0 & -1
    \end{pmatrix} \:. \]
The wave functions at a space-time point~$x$ take values in the {\em{spinor space}}~$S_x$, a four-dimensional complex vector space endowed with an indefinite inner product of signature~$(2,2)$,
which we call {\em{spin inner product}} and denote by
\begin{equation} \label{sip}
\Sl \psi | \phi \Sr_x = \sum_{\alpha=1}^4 s_\alpha\: \psi^\alpha(x)^\dagger
\phi^\alpha(x) \:,\qquad s_1=s_2=1,\;\; s_3=s_4=-1\:,
\end{equation}
where~$\psi^\dagger$ is the complex conjugate wave function (in the physics literature, this inner product is
often written as~$\overline{\psi} \phi$ with the so-called adjoint spinor
$\overline{\psi} = \psi^\dagger \gamma^0$).
%
Since the Dirac equation is linear and hyperbolic
(meaning that it can be rewritten as a symmetric hyperbolic system; for details see
for example~\cite[Chapter~13]{intro}), 
its Cauchy problem for smooth initial data is well-posed,
giving rise to global smooth solutions. Moreover, due to finite propagation speed, starting with
compactly supported initial data, we obtain solutions which are spatially compact (meaning that
their restriction to any Cauchy surface has compact support).
For any two such solutions~$\psi, \phi$, the vector field~$\Sl \psi | \gamma^j \phi \Sr$ is divergence-free;
this is referred to as {\em{current conservation}}. Applying Gauss' divergence theorem,
this implies that the spatial integral
\begin{equation} \label{print}
(\psi | \phi) \big|_t :=  \int_{\R^3} \Sl \psi | \gamma^0 \phi \Sr|_{(t, \bx)}\: d^3\bx
\end{equation}
is independent of the choice of the space-like hypersurface labelled by the time parameter~$t$.
This integral defines a scalar product on the space of smooth solutions of the Dirac equation
with spatially compact support. Forming the completion, we obtain a Hilbert space, which we
denote by~$(\Hmath, (.|.))$. The norm on~$\Hmath$ is denoted by~$\| . \|$.

In what follows, we will work in a fixed coordinate system with a distinguished time function~$t=x^0$.
In this setting, it is most convenient to write the Dirac equation in an equivalent way which resembles
the Schr\"odinger equation. To this end, we multiply
the Dirac equation~\eqref{direq} by~$\gamma^0$ and isolate the~$t$-derivative on
one side of the equation,
\begin{equation}
\label{Hamilton}
i \partial_t  \psi = H \psi
\qquad \text{where} \qquad
H := -\gamma^0 (i
\vec{\gamma} \vec{\nabla} - m)
\end{equation}
(note here that~$\sum_{j=0}^3 \gamma^j \partial_j = \gamma^0 \partial_0 + \vec{\gamma}
\vec{\nabla}$). The operator~$H$ is referred to as the {\em{Dirac Hamiltonian}},
and~\eqref{Hamilton} is the Dirac equation in the {\em{Hamiltonian form}}.
By direct computation one verifies that the Hamiltonian is a symmetric operator on
the Hilbert space~$\Hmath$. Working at fixed time~$t=0$, in view of~\eqref{print},
the Hilbert space~$\Hmath$ can be identified with with the square-integrable spinors,
\[ \Hmath = L^2(\R^3, \C^4) \:. \]
In what follows, we shall always work with this identification.

Applying the (unitary) Fourier transform
\[ 
\big( \Four \phi \big) (\bk) := \frac{1}{\sqrt{(2 \pi)^3}} \int e^{-i\bk \bx} \phi(\bx) \: d\bx \:,
\quad \phi\in L^2(\R^3,\C^4) \:,
\] 	
the Hamiltonian~$H$ may be rewritten as
\[  
H= \Four^{-1} \Bigg(  \sum_{\beta=1}^{3} k_\beta \gamma^\beta +m \Bigg)\: \gamma^0\, \Four \:.  
\]
Using that the Hermitian $4 \times 4$-matrix~$\big(  \sum_{\beta=1}^{3} k_\beta \gamma^\beta +m \big)\: \gamma^0$
is trace-free and that its square can be computed with the help of the anti-commutation relations
to be~$\bk^2+m^2$ times the identity matrix, we conclude that that its eigenvalues
are~$\pm \sqrt{\bk^2+m^2}$, both with multiplicity two.
Hence, diagonalizing this matrix with a suitable unitary matrix~$S$ gives
\[ \bigg(  \sum_{\beta=1}^{3} k_\beta \gamma^\beta +m \bigg) \gamma^0 = S^{-1} J S 
\quad \text{with} \quad J := \sqrt{\bk^2 + m^2} \: \diag \big( -1,-1,1,1 \big) \:.  \]
Therefore, the projection onto the negative spectral subspace of~$H$ is given by
\begin{equation} \label{Pidef2}
\Pi := (S\,\Four)^{-1} \:\frac{1}{2} \bigg( \mathds{1}_{\C^4} - \frac{1}{\sqrt{\bk^2+m^2}}\: J \bigg)\: S\, \Four \:.
\end{equation}
Inserting the regularizing factor~%
%
%
$\phi\big(\eps\sqrt{\bk^2+m^2}\big)$, 
where~$\eps>0$ is the regularization length we obtain the \emph{regularized projection operator}
\begin{align}
\begin{split}
	\label{eq:KernelFermProj}
	\Pi^{(\eps)} :=& \: (S\Four)^{-1} \:\frac{1}{2}\: 
\phi\big(\eps\sqrt{\bk^2+m^2}\big)	
	%
	\: \Big( \mathds{1}_{\C^4} - \frac{1}{\sqrt{\bk^2+m^2}} \:J \Big)\: S\, \Four \\
	=&\:  \Four^{-1} \:\frac{1}{2} \:
\phi\big(\eps\sqrt{\bk^2+m^2}\big)	
%
	\: \bigg( \mathds{1}_{\C^4} - \frac{ \big( \sum_{\beta=1}^{3} k_\beta \gamma^\beta +m \big)\: \gamma^0 }{\sqrt{\bk^2+m^2}} \bigg)\: \Four\:.
\end{split}
\end{align}

\begin{Remark} {\bf{(Connection with the kernel of the fermionic projector)}} 
{\em{
For simplicity, we here restrict attention to the Hamiltonian formulation and work
exclusively with operators acting on the spatial Hilbert space~$L^2(\R^3, \C^4)$.
Nevertheless, the operator~$\Pi^{(\eps)}$ is closely related to kernels in spacetime, as we now explain.
The subspace of negative-energy solutions of the 
Dirac equation in Minkowski space can be described
by the {\em{regularized kernel of the fermionic projector}}~$P^{(\eps)}(x,y)$ defined by
\[ 
P^{(\eps)}(x,y) = \int \frac{1}{(2\pi)^4} \bigg( \sum_{i=0}^3 k_j \gamma^j + m \bigg) \,\delta
\big( \langle k,k \rangle -m^2 \big)\, \Theta(-k^0)\: \phi \big( \eps |k^0| \big)
\: e^{-i \langle k, x-y \rangle} \:d^4k \:, \]
where the cutoff function~$\phi$ and the parameter~$\eps>0$ again describes the regularization,
$\langle \cdot, \cdot \rangle$ is the Minkowski inner product, and~$\Theta$ denotes the Heaviside function.
This kernel plays a central role in the theory of causal fermion systems 
(for more details see \cite[Chapter~1]{cfs} or~\cite[Chapters~15 and~16]{intro}). In this context, the most common and simplest choice is an exponential cutoff
\[ \phi(\tau) = e^{- \tau}, \quad \tau \ge 0 \]
(for more details and some background see~\cite[Sections~1.2.2 and~2.4.1]{cfs}). We emphasize however that in the 
current paper we allow rather arbitrary cut-offs $\phi$ satisfying the decay properties  stated in Theorem 
\ref{thm:mainPhys}.

If we choose both arguments on the Cauchy surface~$\{t=0\}$, i.e.\
\[ x = (0, \bx),\quad y = (0, \by) \qquad \text{with~$\bx, \by \in \R^3$} \]
and carry out the integral over~$k^0$, we obtain
\[ P^{(\eps)} \big( (0,\bx),\, (0, \by) \big) 
=  \frac{1}{(2\pi)^4} \int \frac{1}{2 |\omega|}\:
(\slashed{k} + m) \, \phi( \eps |\omega|) \Big|_{\omega = -\sqrt{\bk^2+m^2}}\;
\: e^{i \bk (\bx-\by)} \:d^3\bk \:. \]
Comparing with~\eqref{eq:KernelFermProj}, one sees that
\[ 
\Pi^{(\eps)}(\bx,\by) = -2\pi\: P^{(\eps)}((0,\bx), (0, \by)\big)\:\gamma^0	 \:. \]
Hence the integral kernel of the spatial operator~$\Pi^{(\eps)}$ is obtained from the
regularized kernel of the fermionic projector simply by multiplying with a prefactor and with the
matrix~$\gamma^0$ from the right. This matrix~$\gamma^0$ will appear frequently in our formulas;
it can be understood as describing the transition from the setting in a Lorentzian spacetime
to the purely spatial formulation on a given Cauchy surface.
}} \QEDrem
\end{Remark}

\subsection{The entanglement entropy of a quasi-free fermionic quantum state} \label{secentquasi}
Given a Hilbert space~$(\Hmath, \langle .|. \rangle_\Hmath)$ (the ``one-particle Hilbert space''),
we let~$(\Fock, \langle .|. \rangle_\Fock)$ be the corresponding fermionic Fock space, i.e.\
\[ \Fock = \bigoplus_{k=0}^\infty \;\underbrace{\Hmath \wedge \cdots \wedge \Hmath}_{\text{$k$ factors}} \]
(where~$\wedge$ denotes the totally anti-symmetrized tensor product).
We define the {\em{creation operator}}~$\Psi^\dagger$ by
\[ \Psi^\dagger \::\: \Hmath \rightarrow \text{\rm{L}}(\Fock) \:,\qquad
\Psi^\dagger(\psi) \big( \psi_1 \wedge \cdots \wedge \psi_p \big) := \psi \wedge \psi_1 \wedge \cdots \wedge \psi_p \:, \]
(where~$\text{\rm{L}}(\Fock)$ denotes the linear mappings from~$\Fock$ to itself).
Its adjoint is the annihilation operator denoted by~$\Psi(\overline{\psi}) := (\Psi^\dagger(\psi))^*$.
These operators satisfy the canonical anti-commutation relations
\[ \label{CAR}
	\big\{ \Psi(\overline{\psi}), \Psi^\dagger(\phi) \big\} = (  \psi | \phi ) \qquad\text{and}\qquad
	\big\{ \Psi(\overline{\psi}), \Psi(\overline{\phi}) \big\} = 0 = \big\{ \Psi^\dagger(\psi), \Psi^\dagger(\phi) \big\} \:. \]
Next, we let~$W$ be a {\em{statistical operator}} on~$\Fock$, i.e.\ a positive semi-definite linear operator of trace one,
\[ W \::\: \Fock \rightarrow \Fock\:,\qquad W \geq 0 \quad \text{and} \quad \tr_\Fock(W)=1 \:. \]
Given an observable~$A$ (i.e.\ a symmetric operator on~$\Fock$), the expectation value of the measurement
is given by
\[ \langle A \rangle := \tr_\Fock\big( A W) \:. \]
The corresponding {\em{quantum state}}~$\omega$ is the linear functional which to every observable
associates the expectation value, i.e.\
\[ \omega \::\: A \mapsto \tr_\Fock\big( A W) \:. \]
The {\em{von Neumann entropy}} of the state~$\omega$ is defined by
\[ 
S := - \tr_\Fock \big( W \, \log W \big) \:. \]

In this paper, we restrict our attention to the subclass of so-called \emph{quasifree} states, fully determined by their two-point distributions
\[ \omega_2(\overline{\psi}, \phi) := \omega\big( \Psi^\dagger(\phi)\, \Psi(\overline{\psi}) \big) \:. \]
\begin{defn}
The {\bf{reduced one-particle density operator}}~$D$ is the positive linear operator on the Hilbert
space~$(\Hmath, \langle .|. \rangle_\Hmath)$ defined by
\[ \omega_2(\overline{\psi}, \phi) = \langle \psi \,|\, D \phi\rangle_\Hmath \:. \]
\end{defn} \noindent
The von Neumann entropy~$S_1(\omega)$ of the quasi-free fermionic
quantum state can be expressed in terms of the reduced one-particle density operator by
\begin{equation} \label{Sred}
S_1(\omega) = \tr \eta(D) \:,
\end{equation}
where~$\eta = \eta_1$ is the von Neumann entropy function defined in~\eqref{eq:eta_gamma}. 
This formula appears commonly in the literature
(see for example~\cite[Equation 6.3]{ohya-petz}, \cite{klich, casini-huerta, longo-xu}
and~\cite[eq.~(34)]{helling-leschke-spitzer}).
A detailed derivation can be found in~\cite[Appendix~A]{fermientropy}.
Similar to~\eqref{Sred} also other entropies can be expressed in terms of the reduced one-particle density
operator. Namely, the R{\'e}nyi entropy and the corresponding entanglement entropy can be written
as~$S_\varkappa(\omega) = \tr \eta_\varkappa(D)$ and~\eqref{RenyEnt}, respectively.
These formulas are also derived in~\cite[Appendix~A]{fermientropy}.

We here consider a quasi-free state formed of solutions of the Dirac equation in Minkowski space.
Thus we choose the Hilbert space~$\Hmath$ as the solution space of the Dirac equation
with scalar product~$\langle .|. \rangle = (.|.)$.
Moreover, we consider the {\em{regularized vacuum state}}, in which case the reduced two-particle density
operator is equal to the regularized projection operator onto all negative-frequency solutions
of the Dirac equation, i.e.\
\[ D=\Pi^{(\eps)} \]
with~$\Pi^{(\eps)}$ as in~\eqref{eq:KernelFermProj}.
We point out that, in the limiting case~$\eps \searrow 0$, the operator~$D$ goes over to the
projection operator to all negative-frequency solutions~\eqref{Pidef2}.
The corresponding quantum state~$\omega$ is the vacuum state in Minkowski space.

\begin{Remark} {\em{
	We point out that our definition of entanglement entropy differs from
	the conventions in~\cite{helling-leschke-spitzer, leschke-sobolev-spitzer2}
	in that we do not add the entropic difference operator of the complement of~$\Lambda$.
	This is justified as follows. On the technical level, our procedure is easier, because
	it suffices to consider compact spatial regions.
	Conceptually, restricting attention to the entropic difference operator of~$\Lambda$
	can be understood from the fact that occupied one-particle states which are supported either inside or outside~$\Lambda$
	do not contribute to the entanglement entropy.
	Thus it suffices to consider the one-particle states which are non-zero both inside and outside.
	These ``boundary states'' are taken into account already in our definition of
	the entanglement entropy~\eqref{RenyEnt}. }} \QEDrem
\end{Remark}

\section{Widom's formula and its generalizations}\label{sect:widom}
As explained in the Introduction, our approach is based on the asymptotic analysis 
of truncated pseudo-differential operators on  
$\plainL2(\R^d; \mathbb C^n)$, $d\ge 2, n\ge 1$, with 
matrix-valued symbols~$\CA(\bxi)$, that are  defined by~\eqref{eq:pdo}, \eqref{eq:truncpdo}. 
In the scalar case, i.e. for~$n=1$, we often use for symbols the lower case letters and 
write~$a(\bxi)$ instead of~$\CA(\bxi)$. 
In this and the next few sections the objective is to study the asymptotics of the trace of 
the operator~$D_\a(\CA, \L; f)$ defined in~\eqref{eq:dal}, 
as~$\a\to\infty$, for non-smooth symbols~$\CA$ and suitable test functions~$f$.  
Sometimes, one or more arguments will be omitted if it does not cause confusion, and we simply
write~$D_\a(\CA; f)$ or~$D_\a(\CA)$.
In order to state our results, we need to describe first  the asymptotic coefficient. 
For a vector~$\be\in \mathbbm S^{d-1}$, we represent~$\bxi\in\R^d$ as  
\begin{align*}
	\bxi = \hat\bxi + t \be \qquad \text{with } t\in \R \text{ and }\hat\bxi \in 
	\BT_\be := \{\bxi \: | \: \bxi\cdot \be = 0 \} \:.
\end{align*}
Instead of~$\CA(\bxi)$ we sometimes write 
\begin{align*}
	\CA(\hat\bxi; t) := \CA(\hat\bxi + t \be) \:.
\end{align*}
Fixing the value $\hat\bxi$ and considering $\CA(\hat\bxi, t)$ as a function 
of one real variable $t$, we introduce the operator 
$W_1(\CA(\hat\bxi; \,\cdot\,); \R_+)\big)$ acting on $\plainL2(\R; \mathbb C^n)$.
For a function~$f:\R\to\mathbbm C$ denote
\begin{align*} 
	\CM(\hat\bxi; \be; \CA; f) 
	:= &\ \tr\big[\chi_{\R_+}\,f\big(W_1(\CA(\hat\bxi; \,\cdot\,); \R_+)\big)\, \chi_{\R_+}
	- W_1\big(f(\CA(\hat\bxi; \,\cdot\,)); \R_+\big)\big] \\
	= &\ \tr D_1\big( \CA(\hat\bxi; \,\cdot\,); \R_+; f\big),\quad \hat\bxi\in \BT_\be\:,
\end{align*} 
where $\R_+ = (0, \infty)$, and introduce  
\begin{align}\label{eq:frakm}
\mathfrak M(\be; \CA; f) := \frac{1}{(2\pi)^{d-1}}  
\, \int_{\BT_{\be}} \,\CM(\hat\bxi; \be; \CA; f) \, d\hat\bxi \:. 
\end{align}
Finally, denoting the outer unit normal to~$\partial \Lambda$ at~$\bx \in \partial \Lambda$ by~$\bn_\bx$,
we can define the main asymptotic coefficient: 
\begin{align}\label{eq:bcoef}
{\sf B}(\CA; f) : = 
\int_{\p\L} \mathfrak M(\bn_\bx; \CA; f)\, dS_\bx\:.
\end{align}
As Lemma~\ref{lem:mbound} later on shows, under the conditions of the main Theorems  
\ref{thm:main} and \ref{thm:maineps} the coefficients \eqref{eq:frakm} and \eqref{eq:bcoef} are well-defined. In particular, they are finite for $\plainC\infty$ symbols $\CA$ and 
polynomial $f$.
Our analysis stems from the following result for smooth symbols~$\CA$ due to H. Widom \cite{Widom1980}.    
Let~$\CA\in\plainC\infty(\R^d; \C^{n\times n})$, $d\ge 2$, be a matrix-valued symbol 
(not necessarily Hermitian) such that 
\begin{align}\label{eq:wid}
|\nabla_{\bxi}^k\CA(\bxi)|\lesssim \lu \bxi\ru^{-\rho} \qquad \text{with} \qquad k=0, 1, \dots \text{ and } \rho >d. 
\end{align}     
The following is Widom's result stated in the form adapted for our needs. 

\begin{prop}\label{prop:widom} 
Let~$d\ge 2$, and let~$\L\subset\R^d$ be a bounded $\plainC1$-region.  
Suppose that~$\CA\in \plainC\infty(\R^d; \C^{n\times n})$ is a matrix-valued symbol satisfying~\eqref{eq:wid}.
Let~$f$ be a polynomial with~$f(0) = 0$. Then 
\begin{align}\label{eq:widom}
\tr f(W_\a(\CA; \L)) = \bigg(\frac{\a}{2\pi}\bigg)^{d}\,\mathrm{vol}_{d}(\L)\,\int_{\R^d} 
\tr f\big(\CA(\bxi)\big) \:d\bxi + \a^{d-1} {\sf B}(\CA; f) + o(\a^{d-1}) \:.
\end{align}
\end{prop}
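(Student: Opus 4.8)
The plan is to reconstruct H.~Widom's argument from \cite{Widom1980}, \cite{widombook}. A convenient feature of the present formulation is that $f$ is a polynomial with $f(0)=0$, so no functional calculus is involved and the matrix-valued (and possibly non-Hermitian) nature of $\CA$ costs only bookkeeping. First I would write $f(t)=\sum_{j=1}^{N}c_j t^j$ and use linearity of the trace to reduce \eqref{eq:widom} to the monomials $f(t)=t^j$, with ${\sf B}(\CA;\cdot)$ assembled by the same linearity. The case $j=1$ is immediate: from \eqref{eq:pdo}, $W_\a(\CA;\L)$ is trace class with $\tr W_\a(\CA;\L)=(\a/2\pi)^{d}\,\mathrm{vol}_d(\L)\int_{\R^d}\tr\CA(\bxi)\,d\bxi$, with no remainder, while ${\sf B}(\CA;t)=0$ because $\chi_{\R_+}W_1(\CA(\hat\bxi;\,\cdot\,);\R_+)\chi_{\R_+}=W_1(\CA(\hat\bxi;\,\cdot\,);\R_+)$ forces $\CM\equiv 0$.

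For $j\ge 2$, set $P:=\op_\a(\CA)$, $\chi:=\chi_\L$, $\bar\chi:=\chi_{\R^d\setminus\L}$, and split $P=\chi P\chi+\chi P\bar\chi+\bar\chi P\chi+\bar\chi P\bar\chi$ relative to $\L$ and its complement. Since $\op_\a$ applied to a $\bxi$-only symbol is a Fourier multiplier, composition is exact, $\op_\a(\CA)^j=\op_\a(\CA^j)$; hence $\chi P^j\chi=W_\a(\CA^j;\L)=W_\a(f\circ\CA;\L)$ and $\tr(\chi P^j\chi)$ equals exactly the bulk term $(\a/2\pi)^{d}\,\mathrm{vol}_d(\L)\int\tr\CA(\bxi)^j\,d\bxi$. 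Everything therefore reduces to the identity $\tr W_\a(\CA;\L)^j-(\text{bulk})=\tr\big[(\chi P\chi)^j-\chi P^j\chi\big]$, whose right-hand side, expanded via the block decomposition, is a sum of traces of length-$j$ words in the four blocks that begin and end in $\L$ and visit $\L^c$ at least once. Each such word contains an off-diagonal factor $\chi P\bar\chi$ or $\bar\chi P\chi$, whose kernel $K_\a(\bx,\by)$ with $\bx\in\L$, $\by\notin\L$ is, by the smoothness and decay \eqref{eq:wid} of $\CA$, rapidly decreasing in $\a\,\dist(\bx,\partial\L)$ and $\a\,\dist(\by,\partial\L)$; this localizes the whole expression to an $O(\a^{-1})$-collar of $\partial\L$ and already gives the bound $O(\a^{d-1})$.

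To extract the precise coefficient I would choose a partition of unity subordinate to a finite cover of $\partial\L$ by $\plainC1$-charts (plus one interior chart, which by the block argument contributes only $o(\a^{d-1})$ beyond the bulk), and in each boundary chart flatten $\partial\L$ and freeze $\CA$ at the base point $\bx_0\in\partial\L$. The flattened, frozen model lives on a half-space $\{\bx:\bx\cdot\bn_{\bx_0}>0\}$ with the Fourier multiplier $\CA(\hat\bxi+t\,\bn_{\bx_0})$; the Fourier transform in the $d-1$ tangential variables diagonalizes them, so that for each tangential frequency $\hat\bxi$ the model trace becomes the one-dimensional Wiener--Hopf trace $\tr D_1\big(\CA(\hat\bxi;\,\cdot\,);\R_+;t^j\big)=\CM(\hat\bxi;\bn_{\bx_0};\CA;t^j)$. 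Integrating over $\hat\bxi\in\BT_{\bn_{\bx_0}}$ against $(2\pi)^{-(d-1)}\,d\hat\bxi$ gives $\mathfrak M(\bn_{\bx_0};\CA;t^j)$ as in \eqref{eq:frakm}, and reassembling over $\partial\L$ with the surface measure yields $\a^{d-1}\,{\sf B}(\CA;t^j)$ as in \eqref{eq:bcoef}. Summing against the $c_j$, and recalling ${\sf B}(\CA;t)=0$, completes the proof; finiteness of all the quantities involved follows from the decay \eqref{eq:wid} together with Lemma~\ref{lem:mbound}.

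The hard part is quantitative: one must show that flattening a merely $\plainC1$ boundary, freezing the symbol at the base point, and replacing the localized truncated operator by its half-space model each introduce errors that are genuinely $o(\a^{d-1})$, not just $O(\a^{d-1})$. This calls for trace-norm and Schatten-norm estimates on the commutators $[\chi,P]$ and on the relevant kernel differences, and the smoothness and decay hypotheses \eqref{eq:wid} are exactly what supplies the needed integration-by-parts gains; the matrix structure enters only through the Hilbert--Schmidt norm and the bound $|\CB\CC|\le|\CB|\,|\CC|$, so it does not touch the analytic core. These are, in essence, the estimates that the later sections of the paper sharpen and then extend to non-smooth symbols $\CA$ and non-smooth test functions $f$.
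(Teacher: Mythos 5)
The paper does not actually prove Proposition~\ref{prop:widom}: it is quoted (in a form adapted to the paper's notation) as Widom's theorem from \cite{Widom1980} and \cite{widombook}, so there is no in-paper proof to compare against. What you have written is a reconstruction of the \emph{strategy} of Widom's original argument, and as such it is structurally accurate. The algebraic reductions are all correct: linearity reduces to monomials $t^j$; the $j=1$ case is exact with $\CM\equiv 0$; for $j\ge 2$ the bulk term is identified exactly through the Fourier-multiplier composition identity $\op_\a(\CA)^j=\op_\a(\CA^j)$, and the remainder $\tr\big[(\chi P\chi)^j-\chi P^j\chi\big]$ expands into words each containing an off-diagonal factor $\chi P\bar\chi$ or $\bar\chi P\chi$; localization, flattening and freezing reduce to the half-space Wiener--Hopf model, which after the tangential Fourier transform produces exactly $\CM(\hat\bxi;\bn_{\bx_0};\CA;t^j)$, and integrating against $(2\pi)^{-(d-1)}\,d\hat\bxi$ and the surface measure reassembles ${\sf B}(\CA;t^j)$ consistently with \eqref{eq:frakm} and \eqref{eq:bcoef}.

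What is missing is the quantitative core of Widom's theorem, which you flag yourself but do not supply: the trace-norm estimates showing that (i) cutting off to an $O(\a^{-1})$-collar of $\p\L$, (ii) flattening a merely $\plainC1$ boundary, and (iii) freezing $\CA$ at each base point cost only $o(\a^{d-1})$ rather than $O(\a^{d-1})$. Asserting that the smoothness and decay of $\CA$ ``supply the needed integration-by-parts gains'' is not a proof of these bounds, and the $\plainC1$ (as opposed to $\plainC\infty$) regularity of $\p\L$ is precisely the delicate point of the adapted form the paper states. Your closing sentence also mislocates where such estimates live in the paper: the Schatten-norm machinery of Sections~\ref{Sec:Schatten}--\ref{Sec:ProofGeneral} is used to \emph{extend} Widom's smooth-symbol, polynomial-test-function formula, taken as a black box, to non-smooth symbols and non-smooth test functions; it is not a proof of Proposition~\ref{prop:widom} itself. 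So you have a correct road map of Widom's argument, not a proof of the proposition.
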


\begin{remark}
\begin{enumerate}
\item Under the conditions of Proposition~\ref{prop:widom}  
 both operators in definition~\eqref{eq:dal}  of~$D_\a(\CA, \L; f)$ are trace class, and 
 the first term on the right-hand side of~\eqref{eq:widom} is exactly the 
trace~$\tr\, W_\a(f\circ\CA, \L)$.  In this case we also have the equality 
$\chi_{\L} f\big(W_\a(\CA, \L)\big) \chi_\L = f\big(W_\a(\CA, \L)\big)$, and therefore the formula
\eqref{eq:widom} can be rewritten as
\begin{align}\label{eq:widom11}
\lim_{\a\to\infty}\,\a^{1-d} \tr D_\a(\CA, \L; f) = 
{\sf B}(\CA; f) \:.
\end{align}
On the other hand, if~$f$ is a polynomial such that~$f(0)\not = 0$, then~\eqref{eq:widom} does not make sense, but~\eqref{eq:widom11} still holds.
\item
One should mention that in contrast with the matrix case, for scalar symbols 
$\CA = a$ the coefficient~${\sf B}(a; f)$ can be found explicitly, 
see \cite{widomtrace} and \cite{widombook}.  
\end{enumerate}
\end{remark}

The main mathematical contribution of our work is the extension of the above result to non-smooth 
symbols~$\CA$ and non-smooth functions~$f$. Remembering 
the relation~$\Pi^{(\eps)} = \op_1(\CP^{(\eps)})$, we 
study symbols that model the symbol~$\CP^{(\eps)}$ and its limit as~$\eps \searrow 0$. Namely, 
we consider symbols~$\CA$ that are~$\plainC\infty$ outside of a fixed finite 
set~$\boldsymbol\Xi = \{\bxi^{(1)}, \bxi^{(2)},\dots, \bxi^{(N)}\}\subset \R^d$, and satisfy the bound
\begin{align}\label{eq:Asing}
	|\nabla_{\bxi}^k \,\CA(\bxi)|\lesssim \lu \bxi\ru^{-\rho} 
	\dc(\bxi)^{-k},\quad \dc(\bxi) := \min\big(\dist(\bxi, \boldsymbol\Xi), 1\big),\quad 
	k = 0, 1, \dots \:,
\end{align}
where~$\rho > d$ and~$\lu \bxi\ru := \sqrt{1 + |\bxi|^2}$. 
We are also interested in families of symbols that converge in the following sense.
\begin{cond}\label{cond:Aeps} Let~$\boldsymbol{\Xi} \subset \R^d$ be a finite set.
	We assume that the family of Hermitian symbols 
	$\CA^{(\eps)}\in \plainC\infty(\R^d\setminus\boldsymbol\Xi; \C^{n\times n})$, $\eps\in [0, 1]$,   
	satisfies the bounds 
	\begin{align*}
		|\nabla_{\bxi}^k\, \CA^{(\eps)}(\bxi)|\lesssim \lu \bxi\ru^{-\rho} \dc(\bxi)^{-k},
		 \quad k = 0, 1, \dots,
	\end{align*}
	for some~$\rho > d$, uniformly in~$\eps$. 
	Away from~$\boldsymbol\Xi$ the symbols~$\CA^{(\eps)}$ converge to~$\CA:=\CA^{(0)}$ uniformly, i.e. 
	for each~$h>0$ we have
	\begin{align*}
		\sup_{\bxi: \dc(\bxi)>h}|\CA^{(\eps)}(\bxi) - \CA(\bxi)|\to 0,\quad \textup{as~$\eps \searrow 0$}\:.
	\end{align*}  
\end{cond}
In our analysis we can include functions~$f$ that are more general than the 
entropy functions~\eqref{eq:eta_gamma}:

\begin{cond}
	\label{cond:f2}
	Assume that there is a finite collection of points 
	${\sf T} = \{t_1, t_2, \dots, t_K\}\subset\R$ and a number~$\g\in (0, 1]$ such that the
	function~$f\in \plainC{}(\R)\cap \plainC2(\R\setminus\sf T)$ satisfies the bounds 
\[ 
\bigg|\frac{d^k}{dt^k} f(t) \bigg|\lesssim \sum_{l=1}^K |t-t_l|^{\g-k} 
\quad \text{for}\  k =1, 2, \quad \text{and}\ t\notin \sf T\:. \]
\end{cond}
 
Using these conditions we can now state the main technical results of this paper. 
The first of them deals with fixed non-smooth symbols~$\CA$. 

\begin{thm}
	\label{thm:main}
	Let~$d\ge 2$, and let 
$\L\subset\R^d$ be a bounded~$\plainC1$-region.
	Assume that the function~$f$ satisfies Condition~\ref{cond:f2} for some 
	$\g\in (0, 1]$. 
Suppose that a Hermitian matrix-valued symbol 
$\CA\in\plainC\infty(\R^d\setminus\boldsymbol\Xi; \C^{n\times n})$	
	 satisfies~\eqref{eq:Asing} for
	a finite set~$\boldsymbol{\Xi} \subset \R^d$ and~$\rho>0$ with~$\rho\g > d$. 
	Then the formula
	\begin{align}\label{eq:widom1}
\lim_{\a\to\infty}\,\a^{1-d} \tr D_\a(\CA, \L; f) = {\sf B}(\CA; f)
\end{align}	
	holds.
\end{thm}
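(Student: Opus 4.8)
The plan is to derive Theorem~\ref{thm:main} from Widom's formula (Proposition~\ref{prop:widom}) by a two-stage approximation: approximate the test function~$f$ by polynomials, and the symbol~$\CA$ by smooth symbols, controlling all the errors thus introduced --- uniformly in $\a\ge1$ and up to order $\a^{d-1}$ --- by the Schatten--von Neumann (quasi-)norm bounds of Sections~\ref{Sec:Schatten} and~\ref{Sec:EstDalpha}. Two reductions streamline the bookkeeping: since $\op_\a(\mathrm{Id})=\mathrm{Id}$, adding a constant to the test function does not change $D_\a(\CA,\L;f)$, so we may assume $f(0)=0$; and we fix once and for all a compact interval $I\subset\R$ containing the spectrum of every operator $W_\a(\CA,\L)$, $\a\ge1$ (possible since $\|W_\a(\CA,\L)\|\le\sup|\CA|$), as well as the spectra of all matrices $\CA(\bxi)$, $\bxi\in\R^d$; thereafter only the values of $f$ on $I$ matter, and $D_\a(\CA,\L;\cdot)$ is linear in the test function.

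\smallskip\noindent\emph{Step 1 (smooth symbols).} Assume first $\BXI=\varnothing$, so that $\CA\in\plainC\infty(\R^d;\C^{n\times n})$ satisfies~\eqref{eq:wid} with $\rho>d$. Given $\d>0$, split $f=p_\d+g_\d$ on $I$, where $p_\d$ is a polynomial with $p_\d(0)=0$ and the remainder $g_\d\in\plainC{}(\R)\cap\plainC2(\R\setminus{\sf T})$ still obeys Condition~\ref{cond:f2}, chosen so that $\|g_\d\|_{\plainL\infty(I)}$ and the (quasi-)norm $\mathcal N(g_\d)$ appearing in the Schatten bounds of Section~\ref{Sec:Schatten} (a norm adapted to Condition~\ref{cond:f2}) both tend to $0$ as $\d\searrow0$; such a decomposition is obtained by cutting $f$ off near~${\sf T}$ and approximating the regular part by polynomials, in the spirit of~\cite{leschke-sobolev-spitzer,sobolev-functions}, and it is here that the hypothesis $\rho\g>d$ enters, ensuring the Schatten estimates hold with some exponent $q\in(0,1)$. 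For the polynomial part, Proposition~\ref{prop:widom} (in the form~\eqref{eq:widom11}) gives $\a^{1-d}\tr D_\a(\CA,\L;p_\d)\to{\sf B}(\CA;p_\d)$. For the remainder, we combine the Schatten bound $\|D_\a(\CA,\L;g_\d)\|_{\mathfrak S_q}^{q}\lesssim\a^{d-1}\,\mathcal N(g_\d)^{q}$ with the elementary operator-norm bound $\|D_\a(\CA,\L;g_\d)\|\lesssim\|g_\d\|_{\plainL\infty(I)}$ and the interpolation inequality $\|T\|_{\mathfrak S_1}\le\|T\|_{\mathfrak S_q}^{q}\,\|T\|^{1-q}$ to obtain
\[
\limsup_{\a\to\infty}\,\a^{1-d}\,\big|\tr D_\a(\CA,\L;g_\d)\big|\ \lesssim\ \mathcal N(g_\d)^{q}\,\|g_\d\|_{\plainL\infty(I)}^{1-q}\ \xrightarrow[\d\searrow0]{}\ 0 .
\]
As ${\sf B}(\CA;p_\d)\to{\sf B}(\CA;f)$ when $\d\searrow0$ (Lemma~\ref{lem:mbound} and dominated convergence), letting $\a\to\infty$ and then $\d\searrow0$ proves~\eqref{eq:widom1} for smooth $\CA$.

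\smallskip\noindent\emph{Step 2 (general symbols).} Now let $\CA$ satisfy~\eqref{eq:Asing} for a finite set $\BXI$. For $h\in(0,1)$ choose $\CA_h\in\plainC\infty(\R^d;\C^{n\times n})$ agreeing with $\CA$ on $\{\dc(\bxi)>h\}$, uniformly bounded, and satisfying~\eqref{eq:wid}; then $\CA-\CA_h$ is supported in $\{\dc(\bxi)\le h\}$, a set of measure $\lesssim h^{d}$, with $\sup|\CA-\CA_h|\lesssim1$. By Step~1, $\a^{1-d}\tr D_\a(\CA_h,\L;f)\to{\sf B}(\CA_h;f)$ for each fixed $h$. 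The Schatten estimates of Section~\ref{Sec:Schatten}, applied to the symbol difference and the non-smooth $f$, yield a modulus $\omega(h)\to0$ with
\[
\big\|D_\a(\CA,\L;f)-D_\a(\CA_h,\L;f)\big\|_{\mathfrak S_1}\ \lesssim\ \a^{d-1}\,\omega(h),\qquad\text{uniformly in }\a\ge1 ,
\]
whence $\limsup_{\a\to\infty}\big|\a^{1-d}\tr D_\a(\CA,\L;f)-{\sf B}(\CA_h;f)\big|\lesssim\omega(h)$. Since $\BXI$ is finite, for every $\bx\in\p\L$ and $\mathrm{vol}_{d-1}$-a.e.\ $\hat\bxi\in\BT_{\bn_\bx}$ the line $\hat\bxi+\R\bn_\bx$ avoids $\BXI$, so $\CA_h$ and $\CA$ coincide on it for $h$ small enough; dominated convergence (with the bound of Lemma~\ref{lem:mbound}) then gives $\mathfrak M(\bn_\bx;\CA_h;f)\to\mathfrak M(\bn_\bx;\CA;f)$ and, after integration over $\p\L$, ${\sf B}(\CA_h;f)\to{\sf B}(\CA;f)$ as $h\searrow0$. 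Letting $h\searrow0$ finishes the proof.

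\smallskip\noindent\emph{Main obstacle.} Everything substantial sits in Section~\ref{Sec:Schatten}: proving the Schatten--von Neumann (quasi-)norm bounds for $D_\a(\CA,\L;f)$ and for differences of such operators, with the sharp power $\a^{d-1}$ and scale-uniform control by the data $\CA,f$, in the matrix-valued, non-smooth regime. The matrix-valuedness is the genuine difficulty, as functions of Hermitian matrices lack the clean perturbative behaviour of scalar functions; one needs double-operator-integral (Birman--Solomyak type) estimates adapted to the truncated pseudo-differential setting, together with a careful localisation around the symbol singularities $\BXI$ and the function singularities ${\sf T}$, so that precisely the hypotheses $\g\in(0,1]$ and $\rho\g>d$ suffice.
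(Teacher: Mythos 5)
Your overall strategy is the right one and matches the paper's: approximate the test function by polynomials (keeping track of the singular set $\mathsf T$ and the size of the cutoff), approximate the symbol by smooth ones, feed the smooth/polynomial case into Proposition~\ref{prop:widom}, and control all errors in trace norm with the sharp power $\a^{d-1}$. Step~1 is a mild re-packaging of what the paper does in Theorem~\ref{thm:mainsmooth} (the paper does it in two stages, $\plainC2$ then singular $f$, and uses the $q=1$ case of Lemma~\ref{lem:dalpha} directly rather than the interpolation $\|T\|_1\le\|T\|_q^q\|T\|^{1-q}$; both are fine), and your observation that $\mathsf B(\CA_h;f)\to\mathsf B(\CA;f)$ follows from the pointwise bound of Lemma~\ref{lem:mbound} together with dominated convergence is exactly the paper's Lemma~\ref{lem:bcont}.

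The genuine gap is in Step~2, in the assertion
\begin{align*}
\big\|D_\a(\CA,\L;f)-D_\a(\CA_h,\L;f)\big\|_{1}\ \lesssim\ \a^{d-1}\,\omega(h),
\end{align*}
which you justify by ``the Schatten estimates of Section~\ref{Sec:Schatten}, applied to the symbol difference and the non-smooth $f$.'' This cannot be obtained by controlling the symbol difference $\CA-\CA_h$ (or the operator difference $W_\a(\CA,\L)-W_\a(\CA_h,\L)$), because $D_\a(\cdot;f)$ is nonlinear in the symbol and the naive route gives the wrong power of~$\a$. Concretely, $\CA-\CA_h$ is supported in a set of measure $\asymp h^d$, so both $\|W_\a(f\circ\CA-f\circ\CA_h,\L)\|_1$ and any double-operator-integral bound on $\|f(W_\a(\CA,\L))-f(W_\a(\CA_h,\L))\|_1$ are of order $(\a h)^{d}$, not $(\a h)^{d-1}$; the $\a^{d}$ contributions cancel only inside the combination defining $D_\a$, and extracting that cancellation is precisely what needs an argument. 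The paper's device (Proposition~\ref{prop:onfull}, quoted from \cite{LSS_2022}, applied in Lemma~\ref{lem:compare}) is to insert an auxiliary operator $J=\op_\a(\z_{2h})$ that commutes with both $\op_\a(\CA)$ and $\op_\a(\CA_h)$ and annihilates their difference, after which the comparison reduces to commutator norms $\|[(\mathds 1-J)\op_\a(\cdot),\chi_\L]\|_\s^\s$ and $\|[J,\chi_\L]\|_\s^\s$, $\|[J,\chi_\L]\|_1$; these are genuinely boundary-localised and scale like $\a^{d-1}$. A further subtlety you should be aware of: the derivatives of $\CA_h$ (and of the localised pieces $(\mathds 1-\z_{2h})\CA$, $(\mathds 1-\z_{2h})\CA_h$) blow up like $h^{-k}$ near $\BXI$, so the commutator estimates are not uniform in $h$ without a rescaling $\bxi\mapsto \bxi^{(j)}+h\bxi$ that turns $h$-dependent smoothness constants into the scaled parameter $\a h$; this is why the bound comes out as $(\a h)^{d-1}$ rather than $\a^{d-1}$. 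Neither the $J$-operator device nor the rescaling is implied by double-operator-integral estimates ``applied to the symbol difference,'' so as written Step~2 does not go through.
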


The next theorem considers families of convergent symbols.

\begin{thm}
	\label{thm:maineps}
	Let~$d\ge 2$, and let the region~$\L\subset\R^d$  and the function~$f$ be as in Theorem~\ref{thm:main}.  
	Suppose that the family of Hermitian matrix-valued symbols 
	$\CA^{(\eps)}$ satisfies Condition~\ref{cond:Aeps} for some~$\rho>0$ with~$\rho\g > d$. Then, as~$\a\to\infty$ and~$\eps \searrow 0$, 
	we have  
	\begin{align}\label{eq:double}
		\lim\, \a^{1-d}\, \tr\, D_\a(\CA^{(\eps)}, \L; f) = {\sf B}(\CA; f) \:.
	\end{align}
\end{thm}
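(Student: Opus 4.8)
The plan is to deduce Theorem~\ref{thm:maineps} from Theorem~\ref{thm:main}, applied to the single symbol $\CA=\CA^{(0)}$, together with a stability estimate for the trace $\tr D_\a(\,\cdot\,,\L;f)$ under perturbations of the symbol, made uniform in~$\eps$. For large~$\a$ and $\eps\in(0,1]$ write
\[
\a^{1-d}\,\tr D_\a(\CA^{(\eps)},\L;f) - {\sf B}(\CA;f) = \a^{1-d}\big[\tr D_\a(\CA^{(\eps)}) - \tr D_\a(\CA)\big] + \big[\a^{1-d}\tr D_\a(\CA) - {\sf B}(\CA;f)\big].
\]
The second bracket tends to $0$ as $\a\to\infty$ by Theorem~\ref{thm:main}, because $\CA^{(0)}=\CA$ and $f$ satisfy the hypotheses of that theorem — namely~\eqref{eq:Asing} with $\rho\g>d$ and Condition~\ref{cond:f2} — which are contained in the assumptions of Theorem~\ref{thm:maineps}. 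Hence~\eqref{eq:double} follows from a bound of the shape
\[
\a^{1-d}\big|\tr D_\a(\CA^{(\eps)}) - \tr D_\a(\CA)\big| \le \varphi(h) + C(h)\,\omega_h(\eps) + \rho_h(\a), \qquad h>0,
\]
where $\varphi(h)\to 0$ as $h\to 0$, $\omega_h(\eps)\to 0$ as $\eps\searrow 0$ for each fixed $h$, and $\rho_h(\a)\to 0$ as $\a\to\infty$ for each fixed $h$: given $\delta>0$ one first chooses $h$ with $\varphi(h)<\delta/4$, then $\eps_0$ with $C(h)\omega_h(\eps)<\delta/4$ for $\eps\le\eps_0$, and finally $A_0$ with $\rho_h(\a)<\delta/4$ and $|\a^{1-d}\tr D_\a(\CA)-{\sf B}(\CA;f)|<\delta/4$ for $\a\ge A_0$, which is precisely~\eqref{eq:double}.

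To obtain the stability bound I would localize near the singular set $\BXI$. Fix $h>0$, pick a smooth cut-off $\zeta_h$ equal to $0$ on $\{\dc(\bxi)\le h/2\}$ and to $1$ on $\{\dc(\bxi)\ge h\}$, and set $\CB^{(\eps)}:=\zeta_h\CA^{(\eps)}$. Since $\dc(\bxi)\asymp h$ on $\supp\nabla\zeta_h$, the symbol $\CB^{(\eps)}$ is \emph{smooth} on all of $\R^d$ and still obeys~\eqref{eq:Asing} with constants independent of $h$ \emph{and} of $\eps$ — here the uniformity postulated in Condition~\ref{cond:Aeps} is essential. Split
\[
\tr D_\a(\CA^{(\eps)}) - \tr D_\a(\CA) = \big[\tr D_\a(\CA^{(\eps)}) - \tr D_\a(\CB^{(\eps)})\big] + \big[\tr D_\a(\CB^{(\eps)}) - \tr D_\a(\CB^{(0)})\big] + \big[\tr D_\a(\CB^{(0)}) - \tr D_\a(\CA)\big].
\]
For the first and third brackets, $\CA^{(\eps)}$ and $\CB^{(\eps)}$ (resp.\ $\CB^{(0)}$ and $\CA$) coincide on $\{\dc(\bxi)\ge h\}$ and all four symbols obey~\eqref{eq:Asing} with the same constants, so the near-singularity approximation bounds from Section~\ref{Sec:EstDalpha} — the very ones used to replace non-smooth symbols by smooth ones in the proof of Theorem~\ref{thm:main} — control them, after multiplication by $\a^{1-d}$, by $\varphi(h)+\rho_h(\a)$ with $\varphi(h)\to 0$ as $h\to 0$, uniformly in $\eps$. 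For the middle bracket, $\CB^{(\eps)}$ and $\CB^{(0)}$ are smooth symbols obeying~\eqref{eq:wid} (with an $h$-dependent constant) and their difference $\zeta_h(\CA^{(\eps)}-\CA)$ is supported where $\dc(\bxi)\ge h/2$; on that set the uniform sup-norm convergence $\CA^{(\eps)}\to\CA$ of Condition~\ref{cond:Aeps}, combined with the uniform derivative bounds, upgrades by a standard compactness-and-interpolation argument to convergence of $\CB^{(\eps)}$ to $\CB^{(0)}$ in the weighted $\plainC\infty$ topology. The Schatten-von Neumann stability bounds for $D_\a$ with smooth symbols (again Section~\ref{Sec:EstDalpha}, together with the quasi-norm estimates of Section~\ref{Sec:Schatten} that permit replacing $f$ by a polynomial away from ${\sf T}$) then dominate $\a^{1-d}\big|\tr D_\a(\CB^{(\eps)})-\tr D_\a(\CB^{(0)})\big|$ by $C(h)\,\omega_h(\eps)$ with $\omega_h(\eps)\to 0$ as $\eps\searrow 0$. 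Adding the three contributions yields the displayed stability bound. Note that the coefficient comparison ${\sf B}(\CB^{(0)};f)\to{\sf B}(\CA;f)$ as $h\to 0$ is \emph{not} needed, since the third bracket is disposed of by the localization estimate rather than by Widom's formula.

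The main obstacle is to carry out these stability estimates at the critical order $\a^{d-1}$. Each of the two operators in the definition~\eqref{eq:dal} of $D_\a$ is separately of size $\a^{d}$, so the difference $D_\a(\CB^{(\eps)})-D_\a(\CB^{(0)})$ cannot be estimated term by term — the $\a^{d}$-contributions must be arranged to cancel. As in Widom's method one expands in commutators $[\chi_\L,\op_\a(\,\cdot\,)]$, which belong to the relevant Schatten class with a gain of one power of $\a$, while simultaneously extracting the smallness of $\CB^{(\eps)}-\CB^{(0)}$ and, in the near-singularity part, of the region $\{\dc(\bxi)<h\}$; the non-differentiability of $f$ at the points of ${\sf T}$ forces this to be combined with a polynomial approximation of $f$ whose error is controlled through the Schatten quasi-norm bounds of Section~\ref{Sec:Schatten}. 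What must be verified at each stage is that all implicit constants depend on the symbols only through the uniform bounds of Condition~\ref{cond:Aeps} (and otherwise only on $f$, $\L$, $d$, $n$), so that the entire apparatus already developed for Theorem~\ref{thm:main} transfers to the convergent family; since those inequalities are established in Sections~\ref{Sec:Schatten} and~\ref{Sec:EstDalpha}, the proof of~\eqref{eq:double} itself is then short, as announced for Section~\ref{Sec:ProofGeneral}.
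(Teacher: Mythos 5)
Your proposal is essentially the paper's argument re-packaged, and it is correct modulo one small logical caveat. The paper treats Theorem~\ref{thm:main} as the special case $\CA^{(\eps)}\equiv\CA$ of Theorem~\ref{thm:maineps} and therefore proves \emph{only} Theorem~\ref{thm:maineps}: it writes
\begin{align*}
\limsup\,\big|\a^{1-d}\,\tr D_\a(\CA^{(\eps)};f) - {\sf B}(\CA;f)\big|
\le &\ \limsup_{\a\to\infty}\sup_{\eps}\a^{1-d}\|D_\a(\CA^{(\eps)};f)-D_\a(\CA^{(\eps)}_h;f)\|_1 \\
&\ + \limsup_{\eps\searrow 0}\sup_{\a\gtrsim 1}\a^{1-d}\|D_\a(\CA^{(\eps)}_h;f)-D_\a(\CA_h;f)\|_1 \\
&\ + \limsup_{\a\to\infty}\big|\a^{1-d}\tr D_\a(\CA_h;f)-{\sf B}(\CA_h;f)\big| \\
&\ + |{\sf B}(\CA_h;f)-{\sf B}(\CA;f)|\:,
\end{align*}
and disposes of the four terms by Lemma~\ref{lem:compare}, Lemma~\ref{lem:cont}, Theorem~\ref{thm:mainsmooth}, and Lemma~\ref{lem:bcont} respectively. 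You instead invoke Theorem~\ref{thm:main} as a black box for the limit of $\a^{1-d}\tr D_\a(\CA)$ and reduce to a stability estimate for $\a^{1-d}[\tr D_\a(\CA^{(\eps)})-\tr D_\a(\CA)]$. In the paper's logical order this is circular, since Theorem~\ref{thm:main} is derived \emph{from} Theorem~\ref{thm:maineps}; to make your route self-contained you would first need to prove Theorem~\ref{thm:main} independently (which is harmless: run the paper's four-term decomposition with the constant family). Note also that Lemma~\ref{lem:bcont} is not actually avoided — it is simply hidden inside your invocation of Theorem~\ref{thm:main}, whose proof needs ${\sf B}(\CA_h;f)\to{\sf B}(\CA;f)$.

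Your three-term splitting of $\tr D_\a(\CA^{(\eps)})-\tr D_\a(\CA)$ through $\CB^{(\eps)}=\zeta_h\CA^{(\eps)}$ is exactly the content of Lemma~\ref{lem:compare} (first and third brackets, giving $\varphi(h)\asymp h^{d-1}$ uniformly in $\eps$ and $\a$) and Lemma~\ref{lem:cont} (middle bracket). One inaccuracy in the description of the middle bracket: you speak of upgrading sup-norm convergence to convergence ``in the weighted $\plainC\infty$ topology'' by compactness and interpolation, and of replacing $f$ by polynomials away from ${\sf T}$. That is not how Lemma~\ref{lem:cont} works — it needs only the uniform sup-norm convergence of Condition~\ref{cond:Aeps} (which propagates to operator-norm convergence of $W_\a(\CA^{(\eps)})$ and hence of $f(W_\a(\CA^{(\eps)}))$ by Lemma~\ref{lem:normcon}), combined with the uniform Schatten bound of Lemma~\ref{lem:dalpha} to control the tail of singular values. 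The stronger derivative-wise convergence you gesture at is unnecessary and in fact does not follow from Condition~\ref{cond:Aeps} (uniform derivative \emph{bounds} plus sup-norm convergence do not give sup-norm convergence of derivatives). This over-complication does not invalidate the argument, because the weaker input suffices, but the mechanism you describe is not the one the estimate actually runs on.

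What your reorganization buys is conceptual modularity: ``asymptotics for one symbol'' plus ``stability of the trace under symbol perturbation'' reads cleanly. What it costs is having to secure the base case Theorem~\ref{thm:main} separately. The paper's single decomposition proves both results at once with the same total effort, which is why it is presented that way.
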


As the next Proposition shows, if the symbol 
$\CA$ is ``radially symmetric", then the integral 
\eqref{eq:frakm} is independent of the unit vector~$\be$, which simplifies the expression 
for the coefficient~${\sf B}(\CA; f)$.  

\begin{prop}\label{prop:area} 
Let~$d\ge 2$. 
Suppose that~$f$ satisfies Condition~\ref{cond:f2} with some~$\g\in (0, 1]$,  
and that~$\p\L$ is a union of finitely many bounded 
piece-wise~$\plainC1$-surfaces. 
Suppose that a Hermitian matrix-valued symbol 
$\CA\in\plainC\infty(\R^d\setminus\boldsymbol\Xi; \C^{n\times n})$ 
satisfies~\eqref{eq:Asing} with some~$\rho >0$ such that~$\rho\g > d$.  
Suppose also that for each~$\BR\in \mathrm{SO}(d)$ there exists a matrix 
$\BQ = \BQ_{\mathbf R}\in \mathrm{SU}(n)$  such that 
\begin{align}\label{eq:inva}
\CA( \bxi) = \BQ \,\CA(\BR \bxi)\, \BQ^* \qquad \textup{for a.e.~$\bxi\in\R^d$}\:.
\end{align}
Then~$\boldsymbol\Xi = \{0\}$, and 
the integral~\eqref{eq:frakm} does not depend on the vector~$\be\in\mathbb S^{d-1}$ and 
\begin{align}\label{eq:split}
{\sf B}(\CA; f) = \mathfrak{M} (\CA; f)\, \mathrm{vol}_{d-1}(\partial \Lambda) \:,
\end{align}
where we have denoted
$\mathfrak{M} (\CA; f) = \mathfrak M(\be; \CA; f)$ 
for an arbitrary~$\be$. 
\end{prop}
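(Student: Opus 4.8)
The plan is to exploit the rotational symmetry \eqref{eq:inva} to reduce, by a measure-preserving rotation, the hyperplane integral \eqref{eq:frakm} along an arbitrary direction $\be$ to the one along a fixed direction, thereby showing $\mathfrak M(\be;\CA;f)$ is independent of $\be$; the factorization \eqref{eq:split} of ${\sf B}(\CA;f)$ then follows at once from \eqref{eq:bcoef}.

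First I would pin down the singular set. For $\BR\in\mathrm{SO}(d)$ with the associated $\BQ=\BQ_{\BR}\in\mathrm{SU}(n)$, the identity $\CA(\bxi)=\BQ\,\CA(\BR\bxi)\,\BQ^*$ together with the fact that $\BQ$ is a \emph{constant} unitary shows that $\CA$ is $\plainC\infty$ near $\bxi$ iff it is $\plainC\infty$ near $\BR\bxi$, and that the bound \eqref{eq:Asing} is preserved under $\bxi\mapsto\BR\bxi$. Hence the locus where $\CA$ fails to be smooth is invariant under $\mathrm{SO}(d)$; since $d\ge2$, the only finite $\mathrm{SO}(d)$-invariant subset of $\R^d$ is $\{0\}$ (every nonzero orbit is an entire sphere), so this locus, being contained in the finite set $\boldsymbol\Xi$, lies in $\{0\}$, and we may take $\boldsymbol\Xi=\{0\}$.

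The core step is to show $\mathfrak M(\be;\CA;f)=\mathfrak M(\be';\CA;f)$ for any $\be,\be'\in\mathbb S^{d-1}$. Since $d\ge2$, $\mathrm{SO}(d)$ acts transitively on $\mathbb S^{d-1}$, so choose $\BR\in\mathrm{SO}(d)$ with $\BR\be=\be'$ and set $\BQ=\BQ_{\BR}$. Then $\BR$ restricts to a linear isometry, hence a measure-preserving bijection, of $\BT_\be$ onto $\BT_{\be'}$, and for $\hat\bxi\in\BT_\be$, writing $\hat\bxi':=\BR\hat\bxi\in\BT_{\be'}$, relation \eqref{eq:inva} gives for a.e.\ $t\in\R$
\begin{align*}
\CA(\hat\bxi';t)=\CA\big(\BR(\hat\bxi+t\be)\big)=\BQ^*\,\CA(\hat\bxi+t\be)\,\BQ=\BQ^*\,\CA(\hat\bxi;t)\,\BQ.
\end{align*}
Because $\BQ$ is a constant matrix, $\op_1\big(\BQ^*\CA(\hat\bxi;\cdot)\BQ\big)=\BQ^*\,\op_1\big(\CA(\hat\bxi;\cdot)\big)\BQ$ on $\plainL2(\R;\C^n)$, and $\BQ$ commutes with the scalar multiplier $\chi_{\R_+}$, so $W_1(\CA(\hat\bxi';\cdot);\R_+)=\BQ^*\,W_1(\CA(\hat\bxi;\cdot);\R_+)\,\BQ$. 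Since $\CA$ is Hermitian-valued and $f$ is continuous, the functional calculus commutes with conjugation by $\BQ$; likewise $f\circ\CA(\hat\bxi';\cdot)=\BQ^*(f\circ\CA(\hat\bxi;\cdot))\BQ$ pointwise. Hence both operators defining $D_1(\CA(\hat\bxi';\cdot);\R_+;f)$ are the $\BQ$-conjugates of those defining $D_1(\CA(\hat\bxi;\cdot);\R_+;f)$, and taking the conjugation-invariant trace yields $\CM(\hat\bxi';\be';\CA;f)=\CM(\hat\bxi;\be;\CA;f)$ for a.e.\ $\hat\bxi\in\BT_\be$. The change of variables $\hat\bxi'=\BR\hat\bxi$ in \eqref{eq:frakm} then gives $\mathfrak M(\be';\CA;f)=\mathfrak M(\be;\CA;f)=:\mathfrak M(\CA;f)$. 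All the integrals involved are finite by Lemma~\ref{lem:mbound} under the present hypotheses ($\rho\g>d$, Condition~\ref{cond:f2}); the single direction $\hat\bxi=0$, which meets $\boldsymbol\Xi=\{0\}$, is a null set in $\BT_\be$ and causes no difficulty.

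It remains to insert $\mathfrak M(\bn_\bx;\CA;f)=\mathfrak M(\CA;f)$ into \eqref{eq:bcoef}: since $\p\L$ is a union of finitely many bounded piecewise-$\plainC1$ surfaces, $\int_{\p\L}dS_\bx=\mathrm{vol}_{d-1}(\p\L)$ is finite and ${\sf B}(\CA;f)=\mathfrak M(\CA;f)\,\mathrm{vol}_{d-1}(\p\L)$, which is \eqref{eq:split}. The step I expect to need the most care — the main obstacle — is the chain of operator identities in the previous paragraph: verifying that conjugation by the constant unitary $\BQ$ passes through the truncation $\chi_{\R_+}$ and through the non-smooth functional calculus $f(\cdot)$ at the level of the truncated operator $D_1$, together with the (routine but necessary) check, via Lemma~\ref{lem:mbound}, that $\CM$ and $\mathfrak M$ remain well defined and finite for the non-smooth symbols and test functions at hand.
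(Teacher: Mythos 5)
Your proof is correct and follows essentially the same route as the paper: identify $\boldsymbol\Xi=\{0\}$ from rotational invariance, use transitivity of $\mathrm{SO}(d)$ on $\mathbb S^{d-1}$ together with \eqref{eq:inva} to show that $D_1$ for direction $\be'$ is the $\BQ$-conjugate of $D_1$ for direction $\be$, invoke invariance of trace under unitary conjugation to get $\CM(\BR\hat\bxi;\be';\CA;f)=\CM(\hat\bxi;\be;\CA;f)$, change variables, and substitute into \eqref{eq:bcoef}. The paper compresses your operator-level conjugation argument into the phrase ``by cyclicity of trace,'' but the content and structure are identical.
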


The identity~$\boldsymbol\Xi = \{0\}$ is an immediate consequence of the symmetry~\eqref{eq:inva}. 
Indeed if~$\boldsymbol\Xi$ contained a singular 
point~$\bxi_0\not = 0$, then by~\eqref{eq:inva} the symbol $\CA$ would have 
a singularity on the sphere~$ |\bxi| = |\bxi_0|$, which is not a finite set. 

Note that in Proposition~\ref{prop:area}, the region~$\L$ is not supposed to be bounded. In fact, any 
$\L$ satisfying the following condition, would be suitable:

\begin{cond}\label{cond:domain} 
The set~$\L\subset \R^d, d\ge 2$, is a region with 
piece-wise~$\plainC1$-smooth boundary,  
and  either~$\L$ or~$\R^d\setminus\L$ is bounded. 
\end{cond}

We note that~$\L$ and~$\R^d\setminus\overline{\L}$ 
satisfy Condition~\ref{cond:domain} simultaneously. 
The boundedness of~$\L$ in Theorems~\ref{thm:main} and~\ref{thm:maineps} 
is assumed only because both of them are derived from 
Proposition~\ref{prop:widom} where~$\L$ is supposed to be bounded. 
Most of the intermediate results in the forthcoming sections hold 
for the regions satisfying 
Condition~\ref{cond:domain}. 

Throughout the paper the implicit constants 
in the bounds involving the region $\L$, will of course depend on the region $\L$, but we 
will not point this out every time.

\section{Schatten-von-Neumann bounds for pseudo-differential operators} 
\label{Sec:Schatten}
\subsection{Singular values and Schatten-von Neumann classes} 
In this section we state some definitions and results on singular values and Schatten-von Neumann classes.
We refer to \cite[Ch.~11]{birman-solomjak} for more details on this topic. 

For a compact operator~$A$ in a separable Hilbert space~$\CH$ we denote 
by~$s_k(A), k = 1, 2, \dots,$ its singular values i.e. eigenvalues 
of the self-adjoint compact operator~$ \sqrt{A^*A}$ labeled in non-increasing order counting 
multiplicities.
For the sum~$A+B$ the following inequality holds: 
\begin{align}\label{eq:kyfan}
s_{2k}(A+B)\le s_{2k-1}(A+B)\le s_k(A) + s_k(B) \:. 
\end{align}
We say that~$A$ belongs to the Schatten-von Neumann class~$\BS_p$, $p>0$, 
if 
\[ \|A\|_p \coloneqq \big(\tr (A^*A)^{\frac{p}{2}}\big)^{\frac{1}{p}} \]
is finite. The functional 
$\|A\|_p$ defines a norm if~$p\ge 1$ 
and a quasi-norm if~$0<p<1$. With this (quasi-)norm the class~$\BS_p$ is a complete space. 
For~$0<p \leq 1$ the quasi-norm is actually a \emph{$p$-norm}, that is,  
it satisfies the following ``triangle inequality" for all~$A, B\in\BS_p$:
\begin{align}\label{eq:ptriangl}
\|A+B\|_p^p\le \|A\|_p^p + \|B\|_p^p\:.
\end{align} 
This inequality is used systematically in what follows. We point out a useful estimate 
for individual eigenvalues for operators in~$\BS_p$:
\begin{align}\label{eq:ind}
s_k(A)\le k^{-\frac{1}{p}}\, \|A\|_p,\ k = 1, 2, \dots.
\end{align}
Moreover, note that for block operators 
$A$ with entries~$A_{jk}$, $1\le j, k \le n$, where~$A_{jk}\in \BS_p$, $0 < p \le 1$, by 
\eqref{eq:ptriangl} we have   
\begin{align}
\label{est:SingValuesBlockOp}
\|A\|_{p}^p \leq \sum_{i,j=1}^n \|A_{ij}\|_p^p  \:, \qquad 0 < p \leq 1 \:,
\end{align} 
i.e. in order to obtain~$\BS_p$-bounds for~$A$ it suffices to obtain them for its entries.

\subsection{Lattice norm bounds}
Here we obtain bounds for suitable~$\BS_q$ (quasi-)norms for the operators 
$h \op_1(a)$, $h = h(\bx)$, $a = a(\bxi)$,
which have been studied quite extensively.

To this end we first introduce the following notation.
Let~$\mathcal C = [0, 1)^d\subset \R^d$ and let 
$\mathcal C_\bu = \mathcal C + \bu$ with~$\bu\in\R^d$.
For a function~$h\in\plainL{r}_{\textup{\tiny loc}}(\R^d)$, $r \in (0, \infty),$ 
denote
\begin{equation} \label{eq:brackh}
\begin{cases}
\1 h\1_{r,\delta} =
\biggr[\sum_{\bn\in\Z^d}
\biggl(\,\int\limits_{\CC_{\bn}} |h(\bx)|^r d\bx\biggr)^{\frac{\d}{r}}\biggr]^{\frac{1}{\d}} \:, &
0<\d<\infty\:,\\[0.3cm]
\1 h\1_{r, \infty} =
\sup_{\bu\in\R^d}
\biggl(\,\int\limits_{\CC_{\bu}} |h(\bx)|^r d\bx\biggr)^{\frac{1}{r}} \:, &
\d = \infty\:.
\end{cases}
\end{equation}
These functionals are sometimes called \textit{lattice  
quasi-norms} (norms for~$r, \d\ge 1$).  
If~$\1 h\1_{r, \d}<\infty$ we say that  
$h\in \plainl{\d}(\plainL{r})(\R^d)$.

We need the following
estimate from  \cite[Theorem 11.1]{birman-solomjak2} (see also \cite{birman-karadzov}, Section 5.8),
and quoted  in \cite[Theorem 4.5]{simon2005} for~$q\in [1, 2]$.

\begin{prop}\label{BS:prop}
Suppose that~$f\in  \plainl{q}(\plainL{2})(\R^d)$ 
and
$g\in  \plainl{q}(\plainL{2})(\R^d)$, 
with some~$q \in (0, 2]$. 
Let~$K: \plainL2(\R^d) \to \plainL2(\R^d)$ be the operator with the kernel 
\[
f(\bx) e^{i\bx\cdot \by}g(\by),\quad \bx\in\R^d, \by\in \R^d \:. 
\]
Then  
\[
\| K\|_{q} \lesssim \1 f\1_{2, q} \1 g\1_{2, q}
\]
with a constant independent of~$f$ and~$g$. 
\end{prop}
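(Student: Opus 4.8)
The plan is to decompose $f$ and $g$ over the unit lattice cubes, to reduce each resulting block of $K$ to a single model operator on $\CC=[0,1)^d$ by translation invariance, and to estimate that model operator by Taylor‑expanding the exponential into a rapidly convergent series of rank‑one operators. The case $q=2$ is immediate, since $\|K\|_2^2=\iint|f(\bx)|^2|g(\by)|^2\,d\bx\,d\by=\|f\|_{\plainL2(\R^d)}^2\,\|g\|_{\plainL2(\R^d)}^2$ and $\|f\|_{\plainL2(\R^d)}=\1 f\1_{2,2}$ by the definition~\eqref{eq:brackh}. For $0<q\le 1$ I would write $f=\sum_{\bn\in\Z^d}f_\bn$ and $g=\sum_{\bm\in\Z^d}g_\bm$ with $f_\bn:=f\chi_{\CC_\bn}$ and $g_\bm:=g\chi_{\CC_\bm}$, so that $K=\sum_{\bn,\bm}K_{\bn\bm}$, where $K_{\bn\bm}$ has kernel $f_\bn(\bx)e^{i\bx\cdot\by}g_\bm(\by)$. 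The $q$‑triangle inequality~\eqref{eq:ptriangl} then gives $\|K\|_q^q\le\sum_{\bn,\bm}\|K_{\bn\bm}\|_q^q$, and the point I want to stress is that \emph{no decay of $\|K_{\bn\bm}\|_q$ in $|\bn-\bm|$ is required}: since $\1 f\1_{2,q}^q=\sum_\bn\|f_\bn\|_{\plainL2(\R^d)}^q$ and likewise for $g$, a \emph{uniform} block bound $\|K_{\bn\bm}\|_q\lesssim\|f_\bn\|_{\plainL2(\R^d)}\|g_\bm\|_{\plainL2(\R^d)}$ already produces $\|K\|_q^q\lesssim\1 f\1_{2,q}^q\,\1 g\1_{2,q}^q$.

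To obtain this uniform block bound I would change variables $\bx=\bu+\bn$, $\by=\bv+\bm$ with $\bu,\bv\in\CC$, and use $e^{i(\bu+\bn)\cdot(\bv+\bm)}=e^{i\bn\cdot\bm}\,e^{i\bu\cdot\bm}\,e^{i\bu\cdot\bv}\,e^{i\bn\cdot\bv}$ to absorb the unimodular factor $e^{i\bu\cdot\bm}$ into a modified $\tilde f(\bu)$ and $e^{i\bn\cdot\bv}$ into a modified $\tilde g(\bv)$. Since translations and multiplications by unimodular functions act unitarily on the relevant $\plainL2$ spaces, this yields $\|K_{\bn\bm}\|_q=\|\tilde K\|_q$, where $\tilde K$ acts on $\plainL2(\CC)$ with kernel $\tilde f(\bu)e^{i\bu\cdot\bv}\tilde g(\bv)$ and $\|\tilde f\|_{\plainL2(\CC)}=\|f_\bn\|_{\plainL2(\R^d)}$, $\|\tilde g\|_{\plainL2(\CC)}=\|g_\bm\|_{\plainL2(\R^d)}$. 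Everything therefore reduces to the \emph{model estimate} $\|\tilde K\|_q\le C_{d,q}\,\|\tilde f\|_{\plainL2(\CC)}\,\|\tilde g\|_{\plainL2(\CC)}$ for arbitrary $\tilde f,\tilde g\in\plainL2(\CC)$.

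The model estimate is the heart of the matter. I would expand the entire function $\tilde f(\bu)e^{i\bu\cdot\bv}\tilde g(\bv)=\sum_{l\ge 0}\frac{i^l}{l!}\,\tilde f(\bu)(\bu\cdot\bv)^l\tilde g(\bv)$, the series converging in $\plainL2(\CC\times\CC)$ (it is dominated by $|\tilde f(\bu)|\,|\tilde g(\bv)|\,e^d$). By the multinomial theorem $(\bu\cdot\bv)^l=\sum_{|\beta|=l}\binom{l}{\beta}\bu^\beta\bv^\beta$, so the $l$‑th term is a finite sum of rank‑one operators, the one indexed by $\beta$ having kernel $\binom{l}{\beta}\big(\tilde f(\bu)\bu^\beta\big)\big(\tilde g(\bv)\bv^\beta\big)$ and hence Schatten $q$‑norm equal to $\binom{l}{\beta}\,\|\tilde f\,\bu^\beta\|_{\plainL2(\CC)}\,\|\tilde g\,\bv^\beta\|_{\plainL2(\CC)}\le\binom{l}{\beta}\,\|\tilde f\|_{\plainL2(\CC)}\,\|\tilde g\|_{\plainL2(\CC)}$, since $|\bu^\beta|,|\bv^\beta|\le 1$ on $\CC$. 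Applying~\eqref{eq:ptriangl} within the $l$‑th term, together with $\binom{l}{\beta}^q\le\binom{l}{\beta}$ (valid as $q\le1$) and $\sum_{|\beta|=l}\binom{l}{\beta}=d^l$, and then~\eqref{eq:ptriangl} once more over $l$, gives $\|\tilde K\|_q^q\le\big(\sum_{l\ge 0}d^l/(l!)^q\big)\,\|\tilde f\|_{\plainL2(\CC)}^q\,\|\tilde g\|_{\plainL2(\CC)}^q$, and this series converges for every $q>0$. (That the $\BS_q$‑limit of the partial sums really is $\tilde K$ follows since $\BS_q\hookrightarrow\BS_2$ for $q\le 2$ and the partial sums converge to $\tilde K$ in $\BS_2$; the same remark legitimises the infinite summation over $\bn,\bm$ above.) This settles $0<q\le 1$. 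For $1<q<2$ the bilinear map $(f,g)\mapsto K$ is bounded $\plainl{1}(\plainL2)\times\plainl{1}(\plainL2)\to\BS_1$ and $\plainl{2}(\plainL2)\times\plainl{2}(\plainL2)\to\BS_2$ by the two cases just proven, so bilinear complex interpolation — using $[\BS_1,\BS_2]_\theta=\BS_q$ and $[\plainl{1}(\plainL2),\plainl{2}(\plainL2)]_\theta=\plainl{q}(\plainL2)$ with $1/q=1-\theta/2$ — yields the intermediate estimate.

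The step I expect to require the most thought is the model estimate, and within it the choice of expansion: a single Fourier expansion of $e^{i\bu\cdot\bv}$ in one of the variables over $\CC$ produces coefficients decaying only like $(1+|k_j|)^{-1}$ in each coordinate, which fails to be $q$‑summable when $q\le 1$, whereas the factorial in the Taylor expansion dominates the factor $d^l$ and forces convergence uniformly over all $q>0$. The remaining ingredients — the lattice decomposition, the unimodular reduction, the rank‑one bookkeeping, and the final interpolation — are routine.
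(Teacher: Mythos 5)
The paper does not prove this proposition: it quotes it directly from Birman--Solomjak \cite[Theorem~11.1]{birman-solomjak2} (see also \cite{birman-karadzov} and Simon's account in \cite[Theorem~4.5]{simon2005} for $q\in[1,2]$). So there is no ``paper's own proof'' to compare against; what you have produced is a self-contained argument for a cited external result.

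Your argument is correct, and it is a genuine (and rather clean) proof. The lattice decomposition $K=\sum_{\bn,\bm}K_{\bn\bm}$ together with the observation that the $q$-triangle inequality \eqref{eq:ptriangl} \emph{factorizes} the double sum is exactly the mechanism that makes $\plainl{q}(\plainL2)$ the right hypothesis; this is the same skeleton as in Birman--Solomjak. The unimodular/translation reduction of each block to a single model operator on $\CC\times\CC$ is correct: translations are $\plainL2$-isometries, the phases $e^{i\bu\cdot\bm}$, $e^{i\bn\cdot\bv}$, $e^{i\bn\cdot\bm}$ are absorbed without changing any $\plainL2$ norms, and the block bound is therefore \emph{uniform} in $\bn,\bm$, which is all that is needed. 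The local estimate via Taylor expansion of $e^{i\bu\cdot\bv}$ into rank-one pieces is where your proof diverges most visibly from the standard references (which typically invoke eigenvalue bounds for smooth kernels on a cube or Sobolev-embedding-based $s$-number estimates); your version is more elementary and self-contained, with the crucial point being that $(l!)^{-q}$ crushes the combinatorial factor $d^l$, so $\sum_l d^l/(l!)^q<\infty$ for every $q>0$. The closure step (passage from finite partial sums in $\BS_q$ to $K$ via $\BS_2$-convergence of kernels plus Fatou for singular values) and the bilinear complex interpolation for $1<q<2$ (using $[\BS_1,\BS_2]_\theta=\BS_q$ and $[\plainl1(\plainL2),\plainl2(\plainL2)]_\theta=\plainl{q}(\plainL2)$ with $1/q=1-\theta/2$) are both standard and correctly applied. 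I see no gap.

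One small remark: for the purposes of this paper, only $0<q\le 1$ is actually used (the proposition is invoked in Lemma~\ref{lem:nonsm} with $\s\in(0,1]$), so your interpolation step, while correct, is not strictly needed to support the subsequent arguments; the $q\le 1$ case, which you handle directly, is the operative one.
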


\subsection{Multi-scale symbols}
Consider~$a\in\plainC\infty(\R^d)$  
for which there exist positive continuous functions  
$v$ and~$\tau$ such that~$v$ is bounded and 
\begin{equation}\label{scales:eq}
|\nabla_{\bxi}^k a(\bxi)|\lesssim  
\tau(\bxi)^{-k} v(\bxi)\,,\ \qquad k = 0, 1, \dots, \quad \bxi\in\R^d\,,
\end{equation}
with constants independent of~$\bxi$. 
It is natural to call~$\tau$ the \emph{scale (function)}. 
The scale~$\tau$ is assumed to be globally Lipschitz 
with Lipschitz constant~$\nu <1$, that is, 
\begin{equation}\label{Lip:eq}
|\tau(\bxi) - \tau(\boldeta)| \le \nu |\bxi-\boldeta|\,,\ \ \mbox{ for all }\bxi,\boldeta\in\R^d\,.
\end{equation}
The function~$v$ is assumed to satisfy the bounds
\begin{equation}\label{w:eq}
v(\boldeta)\asymp v(\bxi)\,,\ \mbox{ for all }\boldeta\in B\bigl(\bxi, \tau(\bxi)\bigr),
\end{equation}
with implicit constants independent of~$\bxi$ and~$\boldeta$.  
For example, for smooth scalar symbols~$a$ the bound~\eqref{eq:wid} 
translates into~\eqref{scales:eq} with 
$v(\bxi) = \lu\bxi\ru^{-\b}$ and~$\tau(\bxi) = 1$.  In general, 
it is useful to think of~$v$ and~$\tau$ as 
(functional) parameters. They, in turn, can depend on other 
parameters, e.g.~numerical parameters like~$r\ge0$  in~\eqref{eq:Asing1} 
in the next section. 

We need only one result involving multi-scale symbols.

\begin{prop}\label{prop:cross_smooth} \cite[Lemma 3.4]{leschke-sobolev-spitzer}
Suppose that the region~$\L$ satisfies Condition~\ref{cond:domain}, and let 
the functions~$\tau$ and~$v$ be as described above. 
Suppose that the symbol~$a$ satisfies 
\eqref{scales:eq}, and that 
the conditions 	
\begin{align}\label{tau_low:eq}
\a \tau_{\textup{\tiny inf}}\gtrsim 1\,, \quad 
\tau_{\textup{\tiny inf}} \coloneqq \inf_{\bxi\in\R^d}\tau(\bxi)>0\,,
\end{align}
hold. Then  for any~$\s\in (0, 1]$ we have 
\[ 
\| [\op_\a(a), \chi_\L]\|_{\s}^\s
\lesssim \a^{d-1} \int_{\R^d} \frac{v(\bxi)^\s}{\tau(\bxi)}\,\mathrm{d}\bxi\:. \]
%
%
This bound is uniform in the symbols~$a$ satisfying~\eqref{scales:eq} with the same 
implicit constants. 
\end{prop}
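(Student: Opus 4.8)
The plan is to reduce the commutator estimate to the kernel bound of Proposition~\ref{BS:prop} after a suitable partition of unity adapted to the scale function $\tau$. First I would construct a covering of $\R^d$ by balls $B_j = B(\bxi_j, \tau(\bxi_j))$ with bounded overlap multiplicity; the Lipschitz condition~\eqref{Lip:eq} with constant $\nu < 1$ is exactly what guarantees such a covering exists with overlap independent of the symbol, and that $\tau(\boldeta) \asymp \tau(\bxi_j)$ throughout $B_j$ (comparability of the scale on each ball). Using~\eqref{w:eq} the amplitude $v$ is also comparable to a constant $v_j := v(\bxi_j)$ on each $B_j$. I would then pick a subordinate smooth partition of unity $\psi_j$ with $\mathrm{supp}\,\psi_j \subset B_j$, $\sum_j \psi_j = 1$, and $|\nabla^k \psi_j| \lesssim \tau(\bxi_j)^{-k}$, and write $a = \sum_j a\psi_j$, so that $[\op_\a(a), \chi_\L] = \sum_j [\op_\a(a\psi_j), \chi_\L]$. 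By the $\s$-triangle inequality~\eqref{eq:ptriangl} it suffices to bound $\|[\op_\a(a\psi_j), \chi_\L]\|_\s^\s$ by $\a^{d-1} v_j^\s \, \tau(\bxi_j)^{d-1}$ (up to a uniform constant) and sum: $\sum_j v_j^\s \tau(\bxi_j)^{d-1} \lesssim \int v(\bxi)^\s \tau(\bxi)^{-1}\,d\bxi$ by the bounded-overlap property, since each ball has volume $\asymp \tau(\bxi_j)^d$.

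For a single block I would rescale to unit scale. The symbol $a_j := a\psi_j$ is supported in a ball of radius $\asymp \tau_j := \tau(\bxi_j)$, satisfies $|\nabla^k a_j| \lesssim \tau_j^{-k} v_j$, and the condition $\a\tau_{\textup{\tiny inf}}\gtrsim 1$ ensures $\beta_j := \a\tau_j \gtrsim 1$. Writing the commutator kernel explicitly, $[\op_\a(a_j),\chi_\L]$ has integral kernel $\big(\chi_\L(\bx) - \chi_\L(\by)\big)$ times the kernel of $\op_\a(a_j)$; the key point is that the factor $\chi_\L(\bx) - \chi_\L(\by)$ is supported where $\bx$ or $\by$ lies in a neighbourhood of $\partial\L$, and since $\L$ has piecewise $\plainC1$-boundary (Condition~\ref{cond:domain}) one can localize this to the boundary. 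After the change of variables $\bx = \a^{-1}\tau_j \bx'$ (and similarly $\by$, $\bxi = \tau_j^{-1}\bxi'$), the operator becomes a standard pseudodifferential operator of order $0$ on unit scale, flattened near the boundary, and the two multiplication factors become compactly supported $\plainl q(\plainL 2)$ functions with lattice norms controlled by $v_j$ and by the surface area of $\partial\L \cap B_j$ measured at unit scale, which is $\asymp (\a\tau_j/\a)^{d-1}\cdot\a^{d-1}$ in the original variables. Feeding these into Proposition~\ref{BS:prop} with $q = \s$ gives the bound $\|[\op_\a(a_j),\chi_\L]\|_\s \lesssim \beta_j^{(d-1)/\s} v_j \cdot (\text{area factor})$, and unwinding the scaling produces exactly $\|[\op_\a(a_j),\chi_\L]\|_\s^\s \lesssim \a^{d-1} v_j^\s \tau_j^{d-1}$.

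The main obstacle I expect is the careful bookkeeping in the single-block estimate: one must verify that the kernel of $[\op_\a(a_j),\chi_\L]$ genuinely factors (after flattening the boundary) into the form $F(\bx)e^{i\bx\cdot\by}G(\by)$ with the claimed lattice norms, handling the fact that $\chi_\L(\bx)-\chi_\L(\by)$ is not itself a product. The standard device is to write $\chi_\L(\bx)-\chi_\L(\by) = \chi_\L(\bx)(1-\chi_\L(\by)) - (1-\chi_\L(\bx))\chi_\L(\by)$ and estimate each piece; for each piece one multiplication factor is supported near $\partial\L$, and one inserts a smooth cutoff $\zeta$ equal to $1$ on a $\tau_j$-neighbourhood of $\mathrm{supp}\,a_j$ to absorb the pseudodifferential kernel's decay, at the cost of an integration-by-parts estimate exploiting $|\nabla^k a_j|\lesssim \tau_j^{-k}v_j$ — this is where the hypothesis that \emph{all} derivatives of $a$ are controlled, not just finitely many, is used. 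Since this lemma is quoted verbatim from~\cite[Lemma 3.4]{leschke-sobolev-spitzer}, in the paper itself one would simply cite it; the sketch above indicates why the stated form, with the integral $\int v(\bxi)^\s\tau(\bxi)^{-1}\,d\bxi$ and uniformity in $a$, is the natural output of that argument.
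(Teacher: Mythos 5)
The paper does not prove this proposition; it quotes it verbatim as \cite[Lemma~3.4]{leschke-sobolev-spitzer}, and you correctly note this at the end of your sketch. So there is no proof in the paper to compare against, and the right answer here is ``cite and move on,'' which you did.

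As for the reconstruction itself: the overall architecture (a covering of $\R^d_{\bxi}$ by balls of radius $\asymp\tau$, a subordinate partition of unity, the $\s$-triangle inequality~\eqref{eq:ptriangl}, and summation against the measure $v^\s\tau^{-1}\,d\bxi$ using bounded overlap and $\mathrm{vol}(B_j)\asymp\tau_j^d$) is sound and matches the structure of the cited argument. Two points deserve a flag. First, the rescaling bookkeeping is slightly off: writing $a_j(\bxi)=\tilde a_j(\tau_j^{-1}\bxi)$ one gets $\op_\a(a_j)=\op_{\a\tau_j}(\tilde a_j)$ directly from the definition~\eqref{eq:pdo}, so there is no need to rescale the spatial variable at all; the domain $\L$ stays fixed and only the effective semiclassical parameter changes from $\a$ to $\beta_j=\a\tau_j$. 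Second, and more substantively, the reduction of the single-block commutator bound to Proposition~\ref{BS:prop} is the genuine hard step and your sketch is thin there. Proposition~\ref{BS:prop} handles kernels that factor exactly as $f(\bx)e^{i\bx\cdot\by}g(\by)$; the commutator $[\op_{\beta}(\tilde a),\chi_\L]$, even after the split $\chi_\L(\bx)(1-\chi_\L(\by))-(1-\chi_\L(\bx))\chi_\L(\by)$, does not have this form, because neither factor $\chi_\L,\,1-\chi_\L$ is supported near $\p\L$; the near-boundary localization comes from the off-diagonal decay of the kernel of $\op_{\beta}(\tilde a)$ and has to be extracted by a further dyadic or amplitude-type argument (this is exactly what an estimate like Proposition~\ref{prop:full}, in its $d$-dimensional form, is designed for, and it is the route taken in the cited reference). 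Your remark that one must then exploit bounds on ``\emph{all}'' derivatives of $a$ is also an overstatement: such $\BS_\s$ bounds consume only a finite number of derivatives, of order $\lfloor d/\s\rfloor+1$; the infinite family in~\eqref{scales:eq} is stated for convenience, not necessity.
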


\subsection{Extension to non-smooth symbols} 
For the symbols satisfying~\eqref{eq:Asing} 
one is tempted to apply Proposition~\ref{prop:cross_smooth} with 
$\tau(\bxi) = \dc(\bxi)$. However, this scale function vanishes at the points of the set 
$\boldsymbol\Xi$ and hence does not satisfy 
\eqref{tau_low:eq}. Nevertheless, as we see below, 
Proposition~\ref{prop:cross_smooth} is still applicable. 
In order to state the precise result, for methodological purposes instead of the condition 
\eqref{eq:Asing} we impose a more general condition on~$\CA$. Assume that for some finite set 
$\boldsymbol\Xi = \{\bxi^{(1)}, \bxi^{(2)},\dots, \bxi^{(N)}\}\subset \R^d$ and some 
$r\ge 0$ 
the Hermitian symbol~
$\CA\in \plainC\infty(\R^d\setminus\boldsymbol\Xi; \C^{n\times n})$ satisfies the bounds
\begin{align}\label{eq:Asing1}
|\nabla_{\bxi}^k\, \CA(\bxi)|\lesssim \lu |\bxi|+r\ru^{-\rho}\,  \dc(\bxi)^{-k},
\quad k = 0, 1, \dots, 
\end{align}
where, as in~\eqref{eq:Asing}, $\rho>d$ and~$\dc(\bxi) := \min\big(\dist(\bxi, \boldsymbol\Xi), 1\big)$. 

In what follows we need a partition of unity associated with the set~$\boldsymbol\Xi$. 
For~$h\in (0, 1/4]$ introduce the spatial scale function 
$\tau_h(\bxi) = \frac{1}{2}\big(\dc(\bxi)+h\big)$. 
It is clearly Lip\-schitz with Lip\-schitz constant~$1/2$. Therefore, 
using \cite[Theorem 1.4.10]{hormanderI} we can now construct two functions 
$\z_h^{(1)}, \z_h^{(2)}\in \plainC\infty(\R^d)$ with the properties 
\begin{align*}
\z_h^{(1)} + \z_h^{(2)}=1,\quad 
\z_h^{(1)}(\bxi) = 1\ \textup{for}\ \dc(\bxi) >  4h;  
\quad \z_h^{(2)}(\bxi) = 1\ \textup{for}\ \dc(\bxi) < 2h, 
\end{align*}
and such that 
\begin{align}\label{eq:cutoff}
|\nabla_{\bxi}^k\,\z_h^{(1)}(\bxi)|
+ |\nabla_{\bxi}^k\,\z_h^{(2)}(\bxi)|\lesssim (\dc(\bxi)+h)^{-k},\quad k= 0, 1, \dots, 
\end{align}
where the implicit constants depend only on dimension~$d$ and exponent~$k$. 

\begin{rem}\label{rem:uniform} Most of the bounds in this 
section will be uniform in 
the symbol $\CA$ and the point set $\boldsymbol\Xi$ in the following sense. 
We say that a bound is uniform in $\CA$ if it is uniform 
 in all symbols $\CA$ satisfying \eqref{eq:Asing} or \eqref{eq:Asing1} 
with the same implicit constants. Similarly, 
uniformity in $\boldsymbol\Xi$ means that the result is uniform in all point sets $\boldsymbol\Xi$ 
such that their cardinality $N=\card\boldsymbol\Xi$ is bounded above by the same constant. 

The uniformity in $\boldsymbol\Xi$ will be especially 
important in the proof of Lemma \ref{lem:mbound} further on. 
\end{rem}

\begin{lem} \label{lem:nonsm}
Suppose that the region~$\L$ satisfies Condition~\ref{cond:domain} and the Hermitian symbol~$\CA$ satisfies~\eqref{eq:Asing1} with some~$\rho >d$. 
Let~$\s\in (0, 1]$ be a number such that~$\rho \s > d$.   
Then for all~$\a \gtrsim 1$ we have 
\begin{align}\label{eq:nonsm}
\begin{cases}
	\| [\op_\a(\CA), \chi_\L]\|_{\s}^\s \lesssim \, \a^{d-1}\lu r\ru^{-\rho\s+d}
	& \textup{if~$d\ge 2$}\:,\\[0.2cm]
	\| [\op_\a(\CA), \chi_\L]\|_{\s}^\s \lesssim \,
	\lu r\ru^{-\rho\s+1}\log\big(\a +2\big)
	&\textup{if~$d=1$}\:.
\end{cases}
\end{align} 
The constants in~\eqref{eq:nonsm}  
are uniform in~$\CA$ satisfying~\eqref{eq:Asing1}, and in $\boldsymbol\Xi$, 
in the sense of Remark \ref{rem:uniform}.
\end{lem}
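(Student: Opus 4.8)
The plan is to reduce the estimate for the commutator $[\op_\a(\CA),\chi_\L]$ with a singular symbol to the smooth case treated in Proposition~\ref{prop:cross_smooth}, using the partition of unity $\z_h^{(1)},\z_h^{(2)}$ adapted to $\boldsymbol\Xi$. First I would split $\CA = \CA_1^{(h)} + \CA_2^{(h)}$ with $\CA_j^{(h)} := \z_h^{(j)}\CA$, so that $[\op_\a(\CA),\chi_\L] = [\op_\a(\CA_1^{(h)}),\chi_\L] + [\op_\a(\CA_2^{(h)}),\chi_\L]$, and use the $\s$-triangle inequality \eqref{eq:ptriangl} to estimate the two pieces separately. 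The parameter $h \in (0,1/4]$ is to be optimized at the end; I expect the optimal choice to balance the two contributions, which will turn out to be $h \asymp \a^{-1}$.

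For the \emph{smooth far piece} $\CA_1^{(h)}$, the idea is that on $\supp\z_h^{(1)}$ one has $\dc(\bxi) > 2h$, so $\dc(\bxi) + h \asymp \dc(\bxi) \gtrsim h$; combining the bounds \eqref{eq:Asing1} on $\CA$ with the derivative bounds \eqref{eq:cutoff} on $\z_h^{(1)}$ via the Leibniz rule shows that $\CA_1^{(h)}$ satisfies \eqref{scales:eq} with scale $\tau(\bxi) = \tau_h(\bxi) = \tfrac12(\dc(\bxi)+h)$ and weight $v(\bxi) = \lu |\bxi|+r\ru^{-\rho}\,\mathds 1_{\{\dc(\bxi) > 2h\}}$ (or a smoothed version thereof). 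Since $\tau_h$ is Lipschitz with constant $1/2 < 1$ and $\tau_{\mathrm{inf}} \gtrsim h \gtrsim \a^{-1}$, the hypothesis \eqref{tau_low:eq} holds once $\a \gtrsim h^{-1}$, so Proposition~\ref{prop:cross_smooth} gives, for $d\ge 2$,
\[
\|[\op_\a(\CA_1^{(h)}),\chi_\L]\|_\s^\s \lesssim \a^{d-1}\int_{\dc(\bxi)>2h} \frac{\lu |\bxi|+r\ru^{-\rho\s}}{\dc(\bxi)}\,d\bxi\:.
\]
The remaining task is to bound this integral: near each $\bxi^{(j)}\in\boldsymbol\Xi$ the integral $\int_{h < |\bxi-\bxi^{(j)}| < 1} |\bxi-\bxi^{(j)}|^{-1}\,d\bxi \asymp 1$ for $d\ge 2$ (it is even $O(1)$ uniformly, using $d-1 \ge 1$), while away from $\boldsymbol\Xi$ the factor $\dc(\bxi)^{-1}$ is bounded and the decay $\lu |\bxi|+r\ru^{-\rho\s}$ with $\rho\s > d$ makes it integrable, contributing $\lesssim \lu r\ru^{-\rho\s+d}$. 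The cardinality bound on $\boldsymbol\Xi$ (Remark~\ref{rem:uniform}) makes the sum over $j$ uniform. For $d=1$ the singular integral produces the logarithm $\log(1/h)$, which after the choice $h\asymp\a^{-1}$ becomes $\log(\a+2)$.

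For the \emph{near piece} $\CA_2^{(h)}$, supported where $\dc(\bxi) < 4h$, I would bound the commutator crudely by $\|[\op_\a(\CA_2^{(h)}),\chi_\L]\|_\s^\s \lesssim \|\op_\a(\CA_2^{(h)})\chi_\L\|_\s^\s + \|\chi_\L\op_\a(\CA_2^{(h)})\|_\s^\s$ and estimate each term using the Schatten bound of Proposition~\ref{BS:prop} (after rescaling $\bxi\mapsto\a^{-1}\bxi$ to bring $\op_\a$ to the form $\op_1$, which converts the kernel into the form $f(\bx)e^{i\bx\cdot\by}g(\by)$ with $g$ the indicator-type cutoff of $\a\L$ near its boundary and $f$ controlled by the $\plainL2$-mass of $\widehat{\CA_2^{(h)}}$). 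Since $\CA_2^{(h)}$ is bounded by $\lu|\bxi|+r\ru^{-\rho}$ and supported in a set of measure $\lesssim h^d$ around $\boldsymbol\Xi$, its $\plainL2$-norm is $\lesssim h^{d/2}\lu r\ru^{-\rho}$ (roughly), and the boundary layer contributes the usual $\a^{d-1}$; tracking the powers, this piece is $\lesssim \a^{d-1}(\a h)^{?}\lu r\ru^{-\rho\s+d}$ type, which is dominated by the far piece precisely when $h\asymp\a^{-1}$. Adding the two and taking $h = c\a^{-1}$ yields \eqref{eq:nonsm}. The main obstacle I anticipate is making the near-piece estimate genuinely sharp — a naive Schatten bound may lose powers of $\a h$ — so one likely needs to exploit cancellation in the commutator (not just bound it by the sum of the two operator norms) near $\boldsymbol\Xi$, or else iterate a dyadic decomposition of the annuli $\{2^{-k-1} < \dc(\bxi) < 2^{-k}\}$ for $h < 2^{-k} < 1$ and apply Proposition~\ref{prop:cross_smooth} on each annulus with its local scale, summing the resulting geometric-type series; this dyadic approach is the cleaner route and is what I would ultimately pursue, since it automatically produces the finite integral for $d\ge 2$ and the logarithm for $d=1$ without a separate near/far split.
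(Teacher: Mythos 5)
Your decomposition $\CA=\z_h^{(1)}\CA+\z_h^{(2)}\CA$ with the adapted partition of unity, and your treatment of the far piece via Proposition~\ref{prop:cross_smooth} with scale $\tau_h=\tfrac12(\dc+h)$ and the choice $h\asymp\a^{-1}$, match the paper's proof almost verbatim (one cosmetic difference: the paper simply takes $v(\bxi)=\lu|\bxi|+r\ru^{-\rho}$ globally, not cut off by an indicator — the indicator would in any case violate the comparability condition \eqref{w:eq}).

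The genuine gap is in the near piece, where your own ``$(\a h)^{?}$'' flags exactly what you failed to determine, and your worry about ``losing powers of $\a h$'' shows you have not seen the key mechanism. The paper's point is that the near piece does \emph{not} balance the far piece — it is simply dominated — and this is seen by a clean rescaling, not by tracking powers. With $h=\a^{-1}$, bound $|a_h^{(2)}(\bxi)|\lesssim\sum_j b_h^{(j)}(\bxi)$ where $b_h^{(j)}(\bxi)=\lu r\ru^{-\rho}\phi\big((\bxi-\bxi^{(j)})h^{-1}\big)$ and $\phi$ is the indicator of the ball $|\bxi|\le 4$. Assuming WLOG $\bxi^{(j)}=0$, the substitution $\bxi\to h\bxi$ shows $\op_\a(b_h^{(j)})=\lu r\ru^{-\rho}\op_1(\phi)$ exactly (no loss), an $\a$-independent operator. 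Since $\chi_\L$ is compactly supported (taking $\L$ bounded, else replace $\chi_\L$ by $\mathds 1-\chi_\L$), both $\1\chi_\L\1_{2,\s}$ and $\1\phi\1_{2,\s}$ are finite for every $\s\in(0,1]$, so Proposition~\ref{BS:prop} applied to $\chi_\L\,\Four^{-1}\,M_\phi$ (using unitarity of $\Four$) yields $\|\chi_\L\op_1(\phi)\|_\s\lesssim 1$, hence $\|[\op_\a(a_h^{(2)}),\chi_\L]\|_\s^\s\lesssim\lu r\ru^{-\rho\s}$. This is trivially dominated by $\a^{d-1}\lu r\ru^{-\rho\s+d}$ for $d\ge 2$ (and by $\lu r\ru^{-\rho\s+1}\log(\a+2)$ for $d=1$). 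Your suggested dyadic alternative would likely also work, but as stated you never carry it out, and it is strictly more complicated than the one-step rescaling the paper uses — you should commit to one route and close the near-piece estimate before this constitutes a proof.
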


\begin{proof}
%
In view of~\eqref{est:SingValuesBlockOp} it is clear that it suffices to prove the bound~\eqref{eq:nonsm} for scalar symbols only. 
As in Proposition~\ref{prop:cross_smooth} we use for such symbols the notation~$a = a(\bxi)$. 
Using the partition of unity introduced before the lemma 
we split~$a(\bxi)$ into two symbols depending on~$h$:
\begin{align*}
a = &\ a_h^{(1)}+ a_h^{(2)},\\
a_h^{(1)} = a \z_h^{(1)},&\quad a_h^{(2)} = a \z_h^{(2)}.
\end{align*}
Let us consider first the symbol~$a_h^{(1)}$. Due to~\eqref{eq:Asing1} and~\eqref{eq:cutoff}, 
$a_h^{(1)}$ satisfies the bound
\begin{align*}
|\nabla^k a_h^{(1)}(\bxi)|\lesssim \lu |\bxi| + r\ru^{-\rho} (\dc(\bxi) + h)^{-k}.  
\end{align*}
Therefore, the function~$a_h^{(1)}$ satisfies the bound~\eqref{scales:eq} with  
$v(\bxi) = \lu |\bxi|+r\ru^{-\rho}$ and the scale function 
$\tau_h(\bxi) = \frac{1}{2}(\dc(\bxi) + h)$. It is straightforward 
that for these functions the conditions~\eqref{Lip:eq} and~\eqref{w:eq} are fulfilled
uniformly in $h\in (0, 1/4]$. Moreover, 
fixing ~$h = \a^{-1}$, we guarantee that~$\tau_h$ satisfies~\eqref{tau_low:eq}: 
\begin{align*}
\a \tau_h(\bxi)\ge \frac{1}{2}\a h = \frac{1}{2},\quad \bxi\in\R^d \:. 
\end{align*}
Thus Proposition~\ref{prop:cross_smooth} is applicable, and it gives the bound
\begin{align} \label{eq:ah1}
\| [\op_\a(a_{h}^{(1)}), \chi_\L]\|_{_\s}^\s \lesssim 
&\ 
\a^{d-1}\,
\int\limits_{\R^d} v(\bxi)^\s (\tau_h(\bxi))^{-1}\, d\bxi\notag\\
\lesssim &\ \a^{d-1}\, \lu r\ru^{-\rho\s}
\int\limits_{\dist(\bxi, \boldsymbol\Xi)\le 1} (\dc(\bxi) +  h)^{-1}\, d\bxi \nonumber\\
 &\qquad + \a^{d-1}\, \int\limits_{\dist(\bxi, \boldsymbol\Xi)> 1} \lu |\bxi| + r\ru^{-\rho \s}\, d\bxi\:.
\end{align}
Since
\begin{align*}
(\dist(\bxi, \boldsymbol\Xi) + h)^{-1}\le \sum_{j=1}^N (h+|\bxi - \bxi^{(j)}|)^{-1}
\end{align*}
in the first integral, the right-hand side of~\eqref{eq:ah1} does not exceed 
$\a^{d-1}\lu r\ru^{-\rho\s+d} $ if~$d\ge 2$ and 
\begin{align*}
	\lu r\ru^{-\rho\s} \log ( \a +2 ) + 
	\lu r\ru^{-\rho\s+1},
	 \quad \textup{if}\quad  d = 1\:.
\end{align*}
To estimate the commutator~$[\op_\a(a_h^{(2)}), \chi_\Lambda]$ we assume 
without loss of generality that~$\Lambda$ is bounded (if not, consider the commutator 
with~$\mathds{1}-\chi_\L$), and estimate separately 
its components~$\op_\a(a_h^{(2)}) \chi_\Lambda$  and 
$\chi_\Lambda \op_\a(a_h^{(2)})$. By the definition of the function~$\z_h^{(2)}$,  
its support can be covered by balls of radius~$4h$ centered at the points~$\bxi^{(j)}\in\boldsymbol\Xi$. 
Denote by~$\phi(\bxi)$ the indicator 
of the ball~$|\bxi|\le 4$, and observe that 
\begin{align*}
|a_h^{(2)}(\bxi)|\lesssim \sum_{j=1}^N b^{(j)}_h(\bxi),\quad b_h^{(j)}(\bxi) = 
\lu r \ru^{-\rho} \phi\big((\bxi-\bxi^{(j)})h^{-1}\big)\:.
\end{align*}
Thus it suffices to estimate each pair~$\op_\a(b_h^{(j)}) \chi_\Lambda$  and 
$\chi_\Lambda \op_\a(b_h^{(j)})$ individually. Fix the index~$j$ and 
assume without loss of generality that~$\bxi^{(j)} = 0$.
Then, by rescaling~$\bxi\to h \bxi$, the operator~$\op_\a(b_h^{(j)})$ 
coincides with~$\lu r \ru^{-\rho}\,\op_1(\phi)$ and therefore 
\begin{align*}
\|\op_\a(b_h^{(j)}) \chi_\Lambda\|_{\s}^\s + \|\chi_\Lambda \op_\a(b_h^{(j)})\|_{\s}^\s
\lesssim \lu r\ru^{-\rho \s} \| \chi_\L \op_1(\phi)\|_{\s}^\s \:. 
\end{align*}
As both functions~$\chi_\L$ and~$\phi$ are compactly supported, 
their lattice quasi-norms $\1\chi_\L\1_{2, \s}$ and $\1 \phi\1_{2, \s}$ 
(see \eqref{eq:brackh})  
are finite 
for any $\s\in (0, 1]$. Thus by Proposition~\ref{BS:prop}, the~$\BS_\s$-quasi-norm is bounded by a constant 
that depends only on dimension~$d$ and region~$\L$. 
As a consequence,   
\begin{align*}
	\|[\op_\a(a_h^{(2)}), \chi_\Lambda]\|_{\s}^\s\lesssim \lu r\ru^{-\rho\s} \:.
\end{align*}
Together with the previously derived bound for~$a_h^{(1)}$ this leads to 
\eqref{eq:nonsm}.
\end{proof}

\subsection{More estimates for the case~$d = 1$} 
We also need bounds for 
pseudo-differential operators in~$\plainL2(\R)$ given by 
\begin{align*}
\big(\op_\a^{\rm a}(p) u\big)(x)
= \frac{\a}{2\pi} \int\, e^{i\a(x-y)\xi} p(x, y, \xi) u(y)\, dy\, d\xi \:,
\end{align*}
where the scalar function~$p = p(x, y, \xi)$ is called \textsl{the amplitude}. Let 
\[ 
P_l(x, y, \xi; p) := \sum_{l_1, l_2 = 0}^l |\p_x^{l_1} \p_y^{l_2} p(x, y, \xi)| \:. 
\]
The following proposition is adapted from \cite[Theorem 2.5]{sobolev-schatten} and is stated in the form convenient for our purposes.  

\begin{prop}\label{prop:full} 
Let~$h_1, h_2~\in~\plainL{\infty}(\R)$, and let~$p$
be such that~$P_l\in \plainl{\s}(\plainL{1})(\R^{3})$
with some~$\s\in (0, 1]$ and ~$l = \lfloor\s^{-1}\rfloor+1$.
Then  
\begin{equation}\label{eq:amplitude}
	\|h_1 \op^{\rm a}_1(p) h_2\|_{\s} 
	\lesssim \| h_1\|_{\plainL\infty}
	\| h_2\|_{\plainL\infty} \1 P_l(p)\1_{1, \s} \:,
\end{equation}
with a constant independent of~$p$ and~$h_1, h_2$.
\end{prop}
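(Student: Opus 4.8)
The plan is to adapt the proof of~\cite[Theorem~2.5]{sobolev-schatten} to the normalization~$\a=1$ and to the exact form~\eqref{eq:amplitude}. The estimate is entirely local. First I would fix a smooth partition of unity~$\{\z_\bn\}_{\bn\in\Z^3}$ on~$\R^3$ subordinate to the unit cubes~$\CC_\bn=\CC+\bn$, with each~$\z_\bn$ supported in a fixed dilate~$\CC_\bn^{*}$ of~$\CC_\bn$, with bounded overlap, and with~$|\p^\g\z_\bn|\lesssim1$ uniformly in~$\bn$. Writing~$\bn=(\bn_1,\bn_2,\bn_3)$ for the components in the~$x$-, $y$- and~$\xi$-variables, setting~$p_\bn:=p\,\z_\bn$ and~$A_\bn:=h_1\,\op^{\rm a}_1(p_\bn)\,h_2$, the operator~$\op^{\rm a}_1(p_\bn)$ has kernel supported in a bounded region of the~$(x,y)$-plane determined by~$\bn_1$ and~$\bn_2$, and the~$\s$-triangle inequality~\eqref{eq:ptriangl} reduces the whole problem to a single-cube bound:
\[
\big\|h_1\op^{\rm a}_1(p)h_2\big\|_\s^\s \le \sum_{\bn\in\Z^3}\|A_\bn\|_\s^\s .
\]

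The core step is to show that, uniformly in~$\bn$,
\[
\|A_\bn\|_\s \lesssim \|h_1\|_{\plainL\infty}\,\|h_2\|_{\plainL\infty}\int_{\CC_\bn^{*}}P_l(x,y,\xi;p)\,dx\,dy\,d\xi .
\]
I would first peel off the factors~$\|h_1\|_{\plainL\infty}$ and~$\|h_2\|_{\plainL\infty}$, since singular values are submultiplicative under composition with bounded multiplication operators, leaving~$\|\op^{\rm a}_1(p_\bn)\|_\s$. In the kernel~$\frac1{2\pi}\int e^{i(x-y)\xi}p_\bn(x,y,\xi)\,d\xi$ I would substitute~$\xi=\bn_3+\xi'$ with~$\xi'\in\CC$ and factor~$e^{i(x-y)\xi}=e^{ix\bn_3}\,e^{i(x-y)\xi'}\,e^{-iy\bn_3}$; the two unimodular factors are unitary multiplication operators and so leave the~$\BS_\s$-quasi-norm unchanged, and translating the~$x$- and~$y$-variables to fixed positions is likewise unitary. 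This reduces~$\|\op^{\rm a}_1(p_\bn)\|_\s$ to the~$\BS_\s$-quasi-norm of an integral operator~$B$ on~$\plainL2$ of a fixed bounded interval with kernel~$\frac1{2\pi}\int_{\CC}e^{i(x-y)\xi'}q(x,y,\xi')\,d\xi'$, where~$q$ is~$p_\bn$ evaluated at shifted arguments. The essential gain is that~$\xi'$ now ranges over the \emph{unit} cube, so differentiating this kernel in~$x$ and~$y$ costs only a bounded factor, not a factor~$\sim|\bn_3|$; by Leibniz and~$|\p^\g\z_\bn|\lesssim1$ one gets, for all orders~$k_1,k_2\le l$,
\[
\big|\p_x^{k_1}\p_y^{k_2}K_B(x,y)\big|\lesssim \int_{\CC}\sum_{j_1\le k_1,\,j_2\le k_2}\big|\p_x^{j_1}\p_y^{j_2}q(x,y,\xi')\big|\,d\xi' ,
\]
whose~$\plainL1$-norm over the fixed bounded square is dominated by the shifted integral of~$P_l$. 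Finally I would invoke a Birman--Solomjak type bound for integral operators with compactly supported kernels: expanding~$K_B$ in a Fourier basis of the fixed cube exhibits~$B$ as a series of rank-one operators; the~$\BS_\s$-quasi-norm of a rank-one operator is the product of the~$\plainL2$-norms of its factors, and the Fourier coefficients of~$K_B$ decay like~$\lu j_1\ru^{-l}\lu j_2\ru^{-l}$ by~$l$-fold integration by parts in each variable, so that
\[
\|B\|_\s^\s \lesssim \Big(\sum_{j\in\Z}\lu j\ru^{-l\s}\Big)^{2}\Big(\sum_{k_1,k_2\le l}\big\|\p_x^{k_1}\p_y^{k_2}K_B\big\|_{\plainL1}\Big)^\s .
\]
The product of one-dimensional series converges precisely because~$l=\lfloor\s^{-1}\rfloor+1$ makes~$l\s>1$, and undoing the shifts gives the single-cube bound.

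It then remains to raise the cube bound to the power~$\s$, sum over~$\bn\in\Z^3$, and use the bounded overlap of the cubes~$\CC_\bn^{*}$ together with~$\sum_{\bn}\big(\int_{\CC_\bn}P_l\big)^\s=\1 P_l(p)\1_{1,\s}^\s$ (see~\eqref{eq:brackh}) to arrive at~\eqref{eq:amplitude}. I expect the only real difficulty to lie in the single-cube estimate: obtaining it with exactly the stated order of derivatives and with \emph{no} growth in the frequency cube~$\bn_3$. The two devices that make this work are the shift of~$\xi$ into a unit cube before differentiating (so that the oscillatory factor~$e^{i(x-y)\xi'}$ becomes harmless) and the quantitative Birman--Solomjak embedding, whose admissibility threshold~$l\s>1$ is exactly what forces the choice~$l=\lfloor\s^{-1}\rfloor+1$. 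The remaining ingredients — smoothing the partition of unity so that~$p_\bn$ may be differentiated, bounding overlap constants, and checking the unitarity of the translations and modulations — are routine.
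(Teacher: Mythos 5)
Your argument is correct and reconstructs exactly the type of proof that underlies the cited source: the paper itself gives no proof of Proposition~\ref{prop:full} but refers to \cite[Theorem~2.5]{sobolev-schatten}, and your chain of reductions — partition of unity subordinate to unit cubes in the $(x,y,\xi)$-space, $\s$-triangle inequality~\eqref{eq:ptriangl}, unitary modulation/translation to move the frequency cube to the origin (so that differentiating the phase $e^{i(x-y)\xi'}$ costs only $O(1)$), Leibniz rule to control $P_l(p_\bn)$ by $P_l(p)$, and a Birman--Solomjak-type singular-value bound for a compactly supported kernel via double Fourier expansion and $l$-fold integration by parts — is precisely the Birman--Solomjak machinery on which that reference rests. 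The only thing worth making slightly more explicit in writing it up is that the integration-by-parts estimate on the Fourier coefficients requires the mixed derivative $\p_x^l\p_y^l K_B\in\plainL1$, which is what forces $P_l$ to be the full double sum over $l_1,l_2\le l$ and not just the diagonal; you do implicitly use this, and the threshold $l\s>1$ is indeed exactly what $l=\lfloor\s^{-1}\rfloor+1$ guarantees, so there is no gap.
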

A similar result holds also for operators in~$\plainL2(\R^d)$ with arbitrary~$d\ge 1$ 
but we do not need 
it here.

At this stage, for the one-dimensional estimates we need an assumption 
somewhat different from \eqref{eq:Asing1}. For~$r \ge 0$ and~$r_1 >0$ we assume that 
\begin{align}\label{eq:Asing11}
	|\p_{\xi}^k\, \CA(\xi)|\lesssim \lu |\xi|+r\ru^{-\rho}\, 
	(\dc(\xi) + r_1)^{-k}, \quad 
	k = 0, 1, \dots,
\end{align}
where $\rho >1$ and $\dc(\xi) = \min\{1, \dist(\xi, \boldsymbol\Xi)\}$ for some finite 
set $\boldsymbol\Xi\subset\R$.
Let~$\chi_\pm$ be the indicator function of~$\R_\pm$.

\begin{lem}\label{lem:dis}
Let~$d=1$. Suppose that~$\CA$ satisfies~\eqref{eq:Asing11}.
Let~$\s\in (0, 1]$ be some number such that~$\rho \s >1$.  
Then for all~$r >0$  and all~$r_1>0$ we have 
\begin{equation}\label{eq:onedim}
	\|\chi_\pm\op_1(\CA) \chi_{\mp}\|_{\s}^\s 
	\lesssim  
	\lu r\ru^{-\rho\s + 1}\, \log\bigg(\frac{1}{ r_1}+2\bigg),
\end{equation}
where the implicit constant is uniform in~$\CA$ and in the set $\boldsymbol\Xi$. 
\end{lem}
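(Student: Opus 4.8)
The plan is to follow the near/far decomposition used in the proof of Lemma~\ref{lem:nonsm}, but to replace the application of Proposition~\ref{prop:cross_smooth} --- which is unavailable here since the scaling parameter is fixed at $\a=1$ while the symbol has features down to scale $r_1\ll 1$, violating~\eqref{tau_low:eq} --- by a dyadic-in-scale reduction to a fixed Hankel-type model operator; the factor $\log(r_1^{-1}+2)$ will be the number of dyadic scales involved. First, by~\eqref{est:SingValuesBlockOp} it suffices to treat a scalar symbol $a=a(\xi)$, and by the obvious symmetry (interchanging $\R_+$ and $\R_-$) it is enough to bound $\|\chi_+\op_1(a)\chi_-\|_\s^\s$. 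If $r_1\ge\tfrac14$ then $\dc(\xi)+r_1\asymp 1$ everywhere, $a$ is an ordinary symbol of scale $1$ with $|\p_\xi^k a(\xi)|\lesssim\lu|\xi|+r\ru^{-\rho}$, and the integration-by-parts argument below applied to all of $a$ already gives $\|\chi_+\op_1(a)\chi_-\|_\s^\s\lesssim\lu r\ru^{1-\rho\s}\lesssim\lu r\ru^{1-\rho\s}\log(r_1^{-1}+2)$; so from now on assume $r_1<\tfrac14$. I then split $a=a_{\mathrm{far}}+a_{\mathrm{near}}$ by a smooth cut-off equal to $1$ on $\{\dist(\xi,\boldsymbol\Xi)\ge 1\}$ and vanishing on $\{\dist(\xi,\boldsymbol\Xi)\le\tfrac12\}$, so that, on the respective supports, $|\p_\xi^k a_{\mathrm{far}}(\xi)|\lesssim\lu|\xi|+r\ru^{-\rho}$ (scale $1$, support possibly unbounded) while $|\p_\xi^k a_{\mathrm{near}}(\xi)|\lesssim\lu r\ru^{-\rho}(\dc(\xi)+r_1)^{-k}$ with $a_{\mathrm{near}}$ supported in the bounded set $\{\dc(\xi)\le 1\}$ (using $\lu|\xi|+r\ru\ge\lu r\ru$).

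For the far part I would apply Proposition~\ref{prop:full}. The kernel of $\chi_+\op_1(a_{\mathrm{far}})\chi_-$ is $\chi_+(x)\CK(x-y)\chi_-(y)$ with $\CK(z)=(2\pi)^{-1}\int e^{iz\xi}a_{\mathrm{far}}(\xi)\,d\xi$; integrating by parts $2N$ times in $\xi$ (with $N$ fixed so that $N\s>1$) gives $\CK(z)=\lu z\ru^{-2N}\CK_N(z)$, where $\CK_N(z)=(2\pi)^{-1}\int e^{iz\xi}\big((1-\p_\xi^2)^N a_{\mathrm{far}}\big)(\xi)\,d\xi$ and $|(1-\p_\xi^2)^N a_{\mathrm{far}}(\xi)|\lesssim\lu|\xi|+r\ru^{-\rho}$. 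Thus $\chi_+\op_1(a_{\mathrm{far}})\chi_-=h_1\op_1^{\rm a}(p)h_2$ with $h_1=\chi_+$, $h_2=\chi_-$, and an amplitude $p(x,y,\xi)$ which on $\{x>0>y\}$ equals $\lu x-y\ru^{-2N}(1-\p_\xi^2)^N a_{\mathrm{far}}(\xi)$ and which is smoothly truncated in $(x,y)$ outside that set. The decisive point is that on $\{x>0>y\}$ one has $x-y=x+|y|$, whence $\lu x-y\ru^{2}\ge\lu x\ru\,\lu y\ru$ and $\lu x-y\ru^{-2N}\lesssim\lu x\ru^{-N}\lu y\ru^{-N}$; hence $P_l(p)\lesssim\big(\lu x\ru^{-N}\lu y\ru^{-N}+\mathbf 1_{\{|x|\le 2,\,|y|\le 2\}}\big)\lu|\xi|+r\ru^{-\rho}$, which lies in $\plainl{\s}(\plainL1)(\R^3)$ since $N\s>1$, with lattice quasi-norm controlled (to the power $\s$) by $\int\lu|\xi|+r\ru^{-\rho\s}\,d\xi\asymp\lu r\ru^{1-\rho\s}$, finite because $\rho\s>1$. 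Proposition~\ref{prop:full} then yields $\|\chi_+\op_1(a_{\mathrm{far}})\chi_-\|_\s^\s\lesssim\lu r\ru^{1-\rho\s}$.

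For the near part a single integration by parts is too crude, as it would produce a power $r_1^{1-N\s}$ in place of a logarithm; instead I would decompose dyadically in scale. Using the partition of unity built from the functions $\z_h^{(1)},\z_h^{(2)}$ with $h$ running over the dyadic values $2^{-l}\in[c\,r_1,\,1]$, write $a_{\mathrm{near}}=\sum_l a^{[l]}$ with $a^{[l]}$ supported in $\{\dc(\xi)\sim 2^{-l}\}$ and $|\p_\xi^k a^{[l]}(\xi)|\lesssim\lu r\ru^{-\rho}\,2^{kl}$; the number of terms is $O\big(\log(r_1^{-1}+2)\big)$. Covering $\{\dc(\xi)\sim\lambda\}$ (with $\lambda=2^{-l}$) by balls of radius $\asymp\lambda$ centred at a maximal $\lambda$-separated subset of $\boldsymbol\Xi$ --- this merges clusters of $\boldsymbol\Xi$ unresolved at scale $\lambda$ and keeps all constants uniform in $\boldsymbol\Xi$ --- and using a subordinate partition of unity, split $a^{[l]}=\sum_i a^{[l,i]}$ into at most $\card\boldsymbol\Xi$ pieces, each supported in one such ball. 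Translating in $\xi$ so that the ball is centred at the origin (a modulation) and then rescaling $\xi\mapsto\lambda\xi$, $x\mapsto\lambda^{-1}x$ (a dilation) --- both operations unitary and commuting with $\chi_\pm$ --- converts $\chi_+\op_1(a^{[l,i]})\chi_-$ into $\lu r\ru^{-\rho}\,\chi_+\,\widehat c(\,\cdot-\cdot\,)\,\chi_-$, where $c$ is a symbol supported in a fixed ball with uniformly bounded derivatives, so $\widehat c$ is Schwartz with uniformly controlled seminorms. Repeating the integration-by-parts/amplitude argument of the far part (again using $\lu x-y\ru^{-2N}\lesssim\lu x\ru^{-N}\lu y\ru^{-N}$ on $\{x>0>y\}$, now with $c$ compactly supported in $\xi$) together with Proposition~\ref{prop:full} shows $\|\chi_+\,\widehat c(\,\cdot-\cdot\,)\,\chi_-\|_\s\lesssim 1$ uniformly, whence $\|\chi_+\op_1(a^{[l,i]})\chi_-\|_\s^\s\lesssim\lu r\ru^{-\rho\s}$. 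Summing over the at most $\card\boldsymbol\Xi$ indices $i$ and the $O\big(\log(r_1^{-1}+2)\big)$ indices $l$ via the $\s$-triangle inequality~\eqref{eq:ptriangl} gives $\|\chi_+\op_1(a_{\mathrm{near}})\chi_-\|_\s^\s\lesssim\lu r\ru^{-\rho\s}\log(r_1^{-1}+2)$.

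Adding the two estimates and using $\lu r\ru^{-\rho\s}\le\lu r\ru^{1-\rho\s}$ gives $\|\chi_+\op_1(\CA)\chi_-\|_\s^\s\lesssim\lu r\ru^{-\rho\s+1}\log(r_1^{-1}+2)$, which is~\eqref{eq:onedim}; the bound for $\chi_-\op_1(\CA)\chi_+$ follows identically after swapping $\R_\pm$. I expect the main difficulty to lie in the near part: one must arrange the scale-$\lambda$ partition of unity around $\boldsymbol\Xi$ so that its derivatives satisfy bounds of order $\lambda^{-k}$ \emph{uniformly in the configuration of the points} of $\boldsymbol\Xi$ --- hence the cluster-merging step --- and one must isolate the auxiliary fact that a Hankel-type operator $\chi_+\,K(\,\cdot-\cdot\,)\,\chi_-$ with $K$ a (uniformly) Schwartz function lies in $\BS_\s$ with a uniformly bounded quasi-norm; this last fact rests squarely on the sign constraint $x>0>y$ through the elementary inequality $\lu x-y\ru\ge\lu x\ru^{1/2}\lu y\ru^{1/2}$.
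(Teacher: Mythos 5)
Your proof is correct in outline, but it takes a genuinely different route from the paper's, so let me compare them. The paper splits the \emph{operator spatially} at a free parameter $\a$: it writes $\chi_+\op_1(a)\chi_-=\chi_{(0,\a)}\op_1(a)\chi_-+\chi_{(\a,\infty)}\op_1(a)\chi_-$, bounds the first piece via the $d=1$ case of Lemma~\ref{lem:nonsm} (after recognizing it as a unitary conjugate of $[\chi_{(0,1)},\op_\a(a)]\chi_-$, which puts it into the scope of Proposition~\ref{prop:cross_smooth} with scale function $\frac{1}{2}(\dc+\a^{-1})$), getting $\lu r\ru^{1-\rho\s}\log(\a+2)$, bounds the second piece by the same integration-by-parts/Proposition~\ref{prop:full} argument you use for your far part, getting $\a^{-k\s+2}\lu r\ru^{1-\rho\s}r_1^{-k\s}$, and then optimizes over $\a\asymp r_1^{-k\s/(k\s-2)}$ to turn the $r_1^{-k\s}$ into a constant and the $\log\a$ into $\log(r_1^{-1}+2)$. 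You instead split the \emph{symbol in frequency}, handle the far-from-$\boldsymbol\Xi$ piece as the paper handles its far-in-$x$ piece, and for the near piece trade the multi-scale machinery for an explicit dyadic-in-scale decomposition: $O(\log(r_1^{-1}+2))$ annular pieces at scales $\lambda=2^{-l}\in[c\,r_1,1]$, each reduced by modulation and the $\chi_\pm$-preserving dilation $\xi\mapsto\lambda\xi$, $x\mapsto\lambda^{-1}x$ to a single model Hankel-type operator $\chi_+\widehat{c}(\cdot-\cdot)\chi_-$ with $c$ a compactly supported symbol of bounded derivatives, whose $\BS_\s$-norm is $O(1)$ by the same $\lu x-y\ru^{-2N}\lesssim\lu x\ru^{-N}\lu y\ru^{-N}$ device. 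The content is the same — the logarithm is in both cases the number of dyadic scales, displayed directly in your version and produced by $\int(\dc(\xi)+\a^{-1})^{-1}\,d\xi$ inside Lemma~\ref{lem:nonsm} in the paper's — but your proof is more self-contained (it does not invoke Proposition~\ref{prop:cross_smooth} at all, relying only on Proposition~\ref{prop:full}), at the cost of having to set up the scale-$\lambda$ coverings by hand. The two technical points you flag at the end are real but manageable: the innermost piece $\{\dc(\xi)\lesssim r_1\}$ must be included as one further term of the same form (rescale by $r_1$), and the cluster-merging step is exactly what makes the scale-$\lambda$ cutoff derivatives $O(\lambda^{-k})$ uniformly in $\boldsymbol\Xi$ — the paper achieves the corresponding uniformity by instead using the globally Lipschitz scale $\frac{1}{2}(\dc(\bxi)+h)$ in Proposition~\ref{prop:cross_smooth}, which is arguably the cleaner device.
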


\begin{proof}  
Again, it suffices to study scalar symbols~$a$. 
We first estimate the quasi-norm of  
$ \chi_{(0, \a)}\op_1(a) \chi_-$, where~$\a\gtrsim 1$ will be chosen later. 
This operator is easily checked to be unitarily equivalent to 
\begin{align*}
	\chi_{(0, 1)}\op_\a(a) \chi_- = [\chi_{(0, 1)}, \op_\a(a)]\, \chi_-,
\end{align*}
and hence 
it satisfies the bound~\eqref{eq:nonsm}:
\begin{align}\label{eq:int}
	\|\,\chi_{(0, \a)}\op_1(a) \chi_-\|_{\s}^\s 
	\le &\ \|\,[\chi_{(0, 1)},\op_\a(a)]\|_{\s}^\s\notag\\
	\lesssim &\ \lu r\ru^{-\rho\s+1}\log (\a + 2) \:.
\end{align}
To estimate~$\chi_{(\a, \infty)} \op_1(a)\chi_-$ we use Proposition~\ref{prop:full}.  
Let~$\z\in\plainC\infty(\R)$ be a function such that~$\z(x) = 0$ for~$x \le 1/2$ 
and~$\z(x) = 1$ for 
$x\ge 1$. Then we can write
\begin{align*}
	\chi_{(\alpha,\infty)} \op_1(a) \chi_- = \chi_{(\alpha,\infty)} \op_1^{\rm a}(g) \chi_-,\quad g(x, y, \xi) = \z((x - y)/\a) \, a(\xi) \:.
\end{align*}
Let 
\begin{align*}
	L_u = \ -i u^{-1} \p_\xi   ,\quad u\not = 0. 
\end{align*}
Since~$L_u e^{iu\xi} = e^{iu\xi}$, we can integrate by parts as follows:
\begin{align*}
	\int e^{i(x-y)\xi} g(\xi)\,d\xi 
	= &\ i^k \int  e^{i(x-y)\xi} g^{(k)}(x, y, \xi)\, d\xi, \\
	\quad g^{(k)}(x, y, \xi) = &\ \z((x-y)/\a)(x-y)^{-k} \, \p_\xi^k a(\xi) \:.
\end{align*}
Therefore, for any~$k = 0, 1, \dots,$ we have~$\chi_{(\alpha,\infty)} \op_1(a) \chi_- = \chi_{(\alpha,\infty)} \op_1^{\rm a}(g^{(k)}) \chi_-$.
According to~\eqref{eq:amplitude}, the~$\BS_\s$-quasi-norm of 
$\chi_{(\alpha,\infty)} \op_1(a) \chi_-$ then estimates as 
\begin{align}\label{eq:fromprop}
	\|\chi_{(\a, \infty)}\,\op_1(a) \chi_-\|_{\s}^\s \lesssim \1 P_l(g^{(k)})\1_{1, \s} \:,
\end{align}
for~$l = \lfloor\s^{-1}\rfloor+1$.
Let us estimate the right-hand side remembering that~$|x|+|y|=|x-y|\ge \a$ for~$x \in (\alpha,\infty)$ and~$y \in \R^-$:
\begin{align*}
	P_l(x, y, \xi; g^{(k)})\lesssim \frac{|\p_\xi^k a(\xi)|}{|x|^k + |y|^k +\a^k} \:, 
\end{align*}
with an implicit  constant depending on $l$.
By~\eqref{eq:Asing11}, 
\begin{align*}
	P_l (x, y, \xi; g^{(k)})
	\lesssim (|x|^k + |y|^k +\a^k)^{-1} \lu |\xi|+r\ru^{-\rho}
	\,(\dc(\xi)+r_1)^{-k}.
\end{align*}
%
In order to estimate the~$(1, \s)$-quasi-norm write 
\begin{align*}
	\1 P_l (g^{(k)})\1_{1, \s}^\s
	\lesssim \1(|x|^k + |y|^k +\a^k)^{-1}\1_{1, \s}^\s\ 
	\1 \lu |\xi|+r\ru^{-\rho}  (\dc(\xi)+r_1)^{-k} \1_{1, \s}^\s \:, 
\end{align*}
where the first quasi-norm is taken in 
$\plainl{\s}(\plainL{1})(\R^{2})$, and the second one in~$\plainl{\s}(\plainL{1})(\R)$. 
Consider the factors individually assuming that 
$k\s >2$:
\begin{align*}
	\1 (|x|^k + |y|^k +\a^k)^{-1} \1_{1,\s}^\s
	\lesssim  \sum_{\bm\in\Z^2}(\a^k+|\bm|^k)^{-\s} \lesssim \a^{-k\s +2} \:.
\end{align*}
Now, 
\begin{align*}
	\1 \lu|\xi|+r\ru^{-\rho} (\dc(\xi)+r_1)^{-k} \1_{1, \s}^\s
	\lesssim  
r_1^{-k\s} 	
	\sum_{m\in\Z} \lu |m|+r\ru^{-\rho \s} 
	\lesssim  \lu r\ru^{-\rho \s+1}  r_1^{-k\s} \:.
\end{align*}
Therefore 
\begin{align*}
	\1 P_l(g^{(k)}) \1_{1, \s}^\s \lesssim \a^{-k\s +2}\,\lu r\ru^{-\rho \s+1} r_1^{-k\s} \:.
\end{align*}
Together with~\eqref{eq:fromprop} and~\eqref{eq:int} this yields the bound 
\begin{align}\label{eq:alpha}
	\|\chi_+\,\op_1(a) \chi_-\|_{\s}^\s \lesssim 
	\lu r\ru^{-\rho\s+1}\log (\a + 2) +  
	\a^{-k\s +2}\,\lu r\ru^{-\rho \s+1}   r_1^{ -k\s} \:.
\end{align}
Take 
$\a = \max (r_1^{-s}, 1)$, where~$s = k \s (k\s-2)^{-1}>0$, 
so that the right-hand side of~\eqref{eq:alpha} does not exceed 
$\lu r\ru^{-\rho\s+1}\log (r_1^{-1} + 2)$. 
Now~\eqref{eq:alpha} leads to~\eqref{eq:onedim}.
\end{proof}

\section{Estimates for the operator~$D_\a(\CA, \L; f)$ and
approximation by smooth symbols}\label{sect:model}
\label{Sec:EstDalpha}
\subsection{Two abstract results}

We begin with two results for compact operators in an arbitrary separable Hilbert space~$\CH$. 
Let~$A$ be a bounded operator and let~$P$ be an orthogonal projection on~$\CH$. For a function~$f$ 
define the operator 
\begin{align}\label{eq:cd}
\CD(A, P; f) := P f(PAP) P - P f(A) P \:.
\end{align}
In the next few statements, instead Condition~\ref{cond:f2} it will be more convenient to assume 
the following condition. 

\begin{cond}\label{cond:f}
The function~$f\in\plainC2(\R\setminus\{ t_0 \})\cap\plainC{}(\R)$ satisfies the bound 
\begin{align}\label{eq:fnorm}
\1 f\1_2 := \max_{0\le k\le 2}\sup_{t\not = t_0} |f^{(k)}(t)| |t-t_0|^{-\g+k}<\infty
\end{align}
for some~$\g\in (0, 1]$, and it is supported on the interval~$(t_0-R, t_0+R)$ with some finite 
$R>0$.  
\end{cond}

Condition~\ref{cond:f} is clearly less general than Condition~\ref{cond:f2}, but it allows to 
control the dependence of the quantities with which we work, on the size of the support of~$f$. 

The next proposition follows from a more general fact proved in 
\cite{sobolev-functions}, see also \cite[Proposition 2.2]{leschke-sobolev-spitzer}.

\begin{prop}\label{prop:szego} 
Suppose that~$f$ satisfies Condition~\ref{cond:f}  
with some~$\g\in (0, 1]$ and 
some~$t_0\in\R$, $R>0$. Let~$q \in (1/2, 1]$ and assume that   
$\s< \min(2-q^{-1}, \gamma)$.  
Let~$A$ be a bounded self-adjoint operator and let~$P$ be an orthogonal projection 
such that~$PA(\mathds{1}-P)\in \BS_{\s q}$. Then 	
\[ 
	\| \CD(A, P; f) \|_{q}
	\lesssim \1 f\1_2\, R^{\g - \s} \, \|P A (\mathds{1}-P)\|_{\s q}^\s \:, \]
with a positive implicit constant 
independent of the operators~$A$ and~$P$, the function~$f$, and the parameters~$R, t_0$.  
\end{prop}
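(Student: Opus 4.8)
The plan is to reduce the estimate to the analogous bound for \emph{smooth}, compactly supported functions --- which is available from~\cite{sobolev-functions} (see also~\cite[Proposition~2.2]{leschke-sobolev-spitzer}) --- and then to recover the singular case by a dyadic decomposition of~$f$ about its singular point. First I would move the singular point to the origin: the map $f\mapsto\CD(A, P; f)$ is linear, and the substitution $(A, f)\mapsto\bigl(A-t_0\mathds{1},\,f(\,\cdot\,+t_0)\bigr)$ leaves $\CD(A, P; f)$, the operator $PA(\mathds{1}-P)$, the support length $R$, and the quantity $\1 f\1_2$ from~\eqref{eq:fnorm} all unchanged; so I may assume $t_0=0$. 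Taking $k=0$ in~\eqref{eq:fnorm} then gives $f(0)=0$ and $|f(t)|\lesssim\1 f\1_2\,|t|^{\g}$.

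Next I would fix a standard dyadic partition of unity $1=\sum_{j\ge0}\psi_j$ on $(-R, R)\setminus\{0\}$, with $\psi_j\in\plainC{\infty}$ supported where $|t|\asymp 2^{-j}R$ and $|\psi_j^{(k)}|\lesssim(2^{-j}R)^{-k}$, and put $f_j:=f\psi_j$. By the Leibniz rule and~\eqref{eq:fnorm}, each $f_j$ is $\plainC{2}$ on $\R$, is supported in an interval of length $\ell_j\asymp 2^{-j}R$, and satisfies $\sup_t|f_j^{(k)}(t)|\lesssim\1 f\1_2\,\ell_j^{\g-k}$ for $k=0,1,2$; moreover $\sum_{j\le J}f_j\to f$ uniformly on $\R$ (this is where $f(0)=0$ is used). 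The key input, for which I would invoke~\cite{sobolev-functions}, is the following: for a $\plainC{2}$ function $g$ supported in an interval of length $\ell$ with $\sup_t|g^{(k)}(t)|\le M\ell^{-k}$, $k=0,1,2$, and under the present hypotheses on $q$ and $\s$,
\[
\|\CD(A, P; g)\|_{q}\;\lesssim\; M\,\ell^{-\s}\,\|PA(\mathds{1}-P)\|_{\s q}^{\s}\,,
\]
with a constant independent of $A$, $P$ and $g$ (if this is stated for $\plainC{\infty}$ functions one first mollifies $f_j$ on a scale $\ll\ell_j$). Applying it to $g=f_j$ with $M\asymp\1 f\1_2\,\ell_j^{\g}$ gives $\|\CD(A, P; f_j)\|_{q}\lesssim\1 f\1_2\,(2^{-j}R)^{\g-\s}\,\|PA(\mathds{1}-P)\|_{\s q}^{\s}$, whence, since $q\le1$, the $q$-triangle inequality~\eqref{eq:ptriangl} yields
\[
\|\CD(A, P; f)\|_{q}^{q}\;\le\;\sum_{j\ge0}\|\CD(A, P; f_j)\|_{q}^{q}\;\lesssim\;\Bigl(\1 f\1_2\,R^{\g-\s}\,\|PA(\mathds{1}-P)\|_{\s q}^{\s}\Bigr)^{q}\sum_{j\ge0}2^{-jq(\g-\s)}\,.
\]
The geometric series converges exactly because $\s<\g$; this gives the asserted bound $\|\CD(A, P; f)\|_q\lesssim\1 f\1_2\,R^{\g-\s}\,\|PA(\mathds{1}-P)\|_{\s q}^{\s}$, and at the same time shows that $\sum_j\CD(A, P; f_j)$ converges in $\BS_q$, its limit being $\CD(A, P; f)$ by the uniform convergence above (partial sums converge to $\CD(A, P; f)$ in operator norm by continuity of the functional calculus on the bounded self-adjoint operators $A$ and $PAP$, hence the two limits agree).

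I expect the main obstacle to be the smooth-function estimate borrowed in the middle step --- in particular, extracting the Schatten exponent $\s$ in $\|PA(\mathds{1}-P)\|_{\s q}^{\s}$ rather than the naive $2$. Expanding $\CD(A, P; g)$ through an almost analytic extension of $g$ and the Schur-complement identity for $P(z-A)^{-1}P-(z-PAP)^{-1}$ (on $\ran P$) exhibits it as a contour integral of an expression bilinear in $PA(\mathds{1}-P)$ and its adjoint; a crude H\"older bound for Schatten quasi-norms would only produce $\|PA(\mathds{1}-P)\|_{2q}^{2}$ together with an insufficient power of $\ell$, and then the last series would diverge for $\g\le1$. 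Obtaining instead the pair $\bigl(\ell^{-\s},\,\|PA(\mathds{1}-P)\|_{\s q}^{\s}\bigr)$ requires a more careful estimate of the resolvent integral near the real axis, trading resolvent powers for a lower power of the ideal quasi-norm, and this is precisely where the hypothesis $\s<2-q^{-1}$ is needed; I would take this step verbatim from~\cite{sobolev-functions}.
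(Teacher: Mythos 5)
The paper does not prove this proposition; it simply states that it ``follows from a more general fact proved in \cite{sobolev-functions}, see also \cite[Proposition 2.2]{leschke-sobolev-spitzer}.'' Your proposal therefore supplies an argument the authors chose not to write out, and the reduction you describe --- dyadic decomposition of $f$ about $t_0$ into $C^2$ pieces $f_j$ with $\|f_j^{(k)}\|_\infty\lesssim\1 f\1_2\,\ell_j^{\g-k}$, the smooth-function estimate from \cite{sobolev-functions}, and a geometric series converging because $\s<\g$ --- is the standard route and almost certainly what the cited sources do. You correctly use $f(t_0)=0$ (which you note follows from the $k=0$ case of~\eqref{eq:fnorm}) to identify $\sum_j\CD(A,P;f_j)$ with $\CD(A,P;f)$ via operator-norm convergence of partial sums plus $\BS_q$-convergence of the series.

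One small technical gap: each $\psi_j$ is supported in the dyadic annulus $\{|t|\asymp 2^{-j}R\}$, which is a \emph{union of two intervals} (one in $\R_+$, one in $\R_-$), so the piece $f_j=f\psi_j$ is not supported in a single interval of length $\asymp 2^{-j}R$ as you claim when invoking the smooth estimate. This is harmless --- split $\psi_j=\psi_j^++\psi_j^-$ with $\psi_j^\pm$ supported in $\R_\pm$ and apply the smooth estimate to $f\psi_j^+$ and $f\psi_j^-$ separately; the $q$-triangle inequality then only costs a factor $2^{1/q}$. Beyond that, the plan is sound. The estimate $\|\CD(A,P;g)\|_q\lesssim M\ell^{-\s}\|PA(\mathds{1}-P)\|_{\s q}^\s$ for $C^2$ functions with $\sup|g^{(k)}|\le M\ell^{-k}$ supported on a length-$\ell$ interval is indeed the form delivered by \cite{sobolev-functions} (note its scaling invariance under $(g,A)\mapsto(g(\cdot/\lambda),\lambda A)$), and --- as you remark --- the hypothesis $\s<2-q^{-1}$ is precisely what gives the almost-analytic/resolvent argument there the room to produce $\|PA(\mathds{1}-P)\|_{\s q}^\s$ rather than the naive $\|PA(\mathds{1}-P)\|_{2q}^2$, which, as you correctly observe, would render the dyadic sum divergent for $\g\le 1$.
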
  

Next, we give a result comparing the operators~$\CD(A, P; f)$ and~$\CD(B, P; f)$. For this result 
we do not need 
the explicit dependence of the coefficients on the function~$f$.

\begin{prop} \protect{\cite[Corollary 4.3]{LSS_2022}} \label{prop:onfull} 
Let the function~$f$ satisfy Condition~\ref{cond:f} with some~$\g\in (0, 1]$, and let~$\s < \g$. 
Let~$A, B$ be bounded self-adjoint operators and let~$J$ be a bounded operator 
such that 
\[ 
	[A, J] =  [B, J] = 0\,,  \quad (A - B)J = 0\:. \]
Then  
\begin{align}\label{eq:onfull}
	\| \CD(A, P; f) - \CD(B, P; f)\|_{1} 
	\lesssim  &\ \|[(\mathds{1}-J)A, P]\|_{\s}^\s\notag\\ 
	&\ + \|[(\mathds{1}-J)B, P]\|_{\s}^\s + \|[J, P]\|_{\s}^\s + \|[J, P]\|_{1}\:.
\end{align}
The implicit constants in the above bound depend on~$\1 f\1_2$ and 
on the norms~$\|A\|, \|B\|$, and~$\|J\|$, 
and hence they are uniform in the operators~$A, B, J$ whose norms are bounded 
by the same constants.   
\end{prop}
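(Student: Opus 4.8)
This is \cite[Corollary 4.3]{LSS_2022}, and in the present paper it is invoked as stated; the line of argument I would follow is as follows. The analytic work — estimating $\CD(X,P;f)$ for a \emph{single} operator $X$ and a non-smooth $f$ obeying Condition~\ref{cond:f} — is exactly what is already packaged by Proposition~\ref{prop:szego} and the results of \cite{sobolev-functions} behind it: using an almost analytic extension $\tilde f$ of $f$ with $|\bar\p\tilde f(z)|\lesssim\1 f\1_2\,|\im z|^{\g-1}$, one represents $\CD(X,P;f)$ as a $z$-integral whose integrand is a product built from $P(X-z)^{-1}$, $(\mathds{1}-P)(X-z)^{-1}$, $(PXP-z)^{-1}$ and exactly one factor of commutator type $[X,P]$ (up to multiplication by $P$, $\mathds{1}-P$), and the restriction $\s<\g$ is precisely what makes the $z$-integral converge after applying H\"older in the Schatten classes, with the commutator factor placed in $\BS_\s$ and raised to the power $\s$. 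The plan is to run this machinery on the \emph{difference} $\CD(A,P;f)-\CD(B,P;f)$ and to show, using the cancellation forced by $JA=JB$ and $[J,A]=[J,B]=0$, that every commutator factor $[A,P]$ or $[B,P]$ can be replaced by one of the smaller quantities $[(\mathds{1}-J)A,P]$, $[(\mathds{1}-J)B,P]$, or $[J,P]$.

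First I would record the algebraic input: the hypotheses give $AJ=JA=BJ=JB$, whence $A-B=(\mathds{1}-J)A-(\mathds{1}-J)B=(\mathds{1}-J)(A-B)(\mathds{1}-J)$, and $J$ commutes with $(A-z)^{-1}$ and $(B-z)^{-1}$; together with the elementary identities
\[
(\mathds{1}-J)[A,P]=[(\mathds{1}-J)A,P]+[J,P]A,\qquad (\mathds{1}-J)[B,P]=[(\mathds{1}-J)B,P]+[J,P]B.
\]
Expanding the difference of the two $z$-integrands as a telescoping sum over the factors of each product, one finds that changing a resolvent of $A$ to one of $B$ leaves a bare factor $A-B$, while changing the single commutator factor $[A,P]$ to $[B,P]$ leaves $[A-B,P]=[(\mathds{1}-J)A,P]-[(\mathds{1}-J)B,P]$; the bare $A-B=(A-B)(\mathds{1}-J)$ is, in turn, absorbed into a factor $(\mathds{1}-J)[A,P]$ or $(\mathds{1}-J)[B,P]$ once the neighbouring projections $P$ are commuted past $\mathds{1}-J$ (each such move generating a $[J,P]$ term) and the two identities above are applied. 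The net effect is that $\CD(A,P;f)-\CD(B,P;f)$ becomes a finite sum of $z$-integrals, each integrand a product carrying exactly one factor from $\{[(\mathds{1}-J)A,P],\,[(\mathds{1}-J)B,P],\,[J,P]\}$ and all other factors of operator norm controlled by $\|A\|$, $\|B\|$, $\|J\|$ and by negative powers of $|\im z|$.

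Estimating each such integrand as in the proof of Proposition~\ref{prop:szego} — H\"older in the Schatten classes, the distinguished factor measured in $\BS_\s$ and raised to the power $\s$, the remaining factors in operator norm, then the $z$-integral, convergent because $\s<\g$ — and summing over the finitely many terms yields \eqref{eq:onfull}. The term $\|[J,P]\|_1$, as opposed to merely $\|[J,P]\|_\s^\s$, appears in exactly those summands in which a factor $[J,P]$ multiplies a product that is already trace class by virtue of one of the other two small factors; there H\"older delivers a genuine $\|[J,P]\|_1$. The implicit constants depend only on $\1 f\1_2$ and on $\|A\|$, $\|B\|$, $\|J\|$, as asserted. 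I expect the main obstacle to be the bookkeeping in the telescoping step: ensuring that \emph{every} resulting product genuinely acquires one of the three advertised small factors — the resolvents of $PAP$ and $PBP$ are the delicate ones, since $P$ need not commute with $J$ — while keeping the representation economical enough that the accumulated negative powers of $|\im z|$ from the resolvents stay below the decay $|\im z|^{\g-1}$ of $\bar\p\tilde f$.
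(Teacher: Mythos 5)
The paper does not prove this proposition: it imports it verbatim from \cite[Corollary~4.3]{LSS_2022}, so there is no internal argument to compare against. Your sketch captures the right mechanism — the single--commutator almost analytic representation of $\CD(X,P;f)$, telescoping in $X$, and the algebraic identities driven by $[A,J]=[B,J]=0$ and $(A-B)J=0$ — and this is very likely the strategy behind the cited result.

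The one genuine gap is exactly the point you flag at the end as ``the main obstacle'' and then leave hanging. After telescoping, the difference of inner resolvents produces a factor $P(A-B)P=P(\mathds{1}-J)(A-B)(\mathds{1}-J)P$ sitting between $(PAP-z)^{-1}$ and $(PBP-z)^{-1}$, while the lone commutator factor $[A,P]$ (or $[B,P]$) is separated from it by one of these $PXP$--resolvents. Commuting $P$ past $\mathds{1}-J$, as you describe, is not sufficient: you must also move $\mathds{1}-J$ past $(PXP-z)^{-1}$, and $J$ does not commute with $PXP$. The missing (short) observation that makes this work is that, because $[J,X]=0$,
$[J,PXP]=[J,P]XP+PX[J,P]$,
so that $[J,(PXP-z)^{-1}]=-(PXP-z)^{-1}\big([J,P]XP+PX[J,P]\big)(PXP-z)^{-1}$ is itself built from $[J,P]$. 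Without recording this identity, the claim that \emph{every} resulting summand carries one of $[(\mathds{1}-J)A,P]$, $[(\mathds{1}-J)B,P]$, $[J,P]$ is unsubstantiated; with it, the bookkeeping closes as you anticipate. One should still check that the extra resolvent factors this generates (and hence the extra powers of $|\im z|^{-1}$) remain compatible with the $z$--convergence afforded by $\s<\g$, but this is the standard trade-off already present in Proposition~\ref{prop:szego} and does not introduce a new difficulty.
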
 

In the next subsection we apply the above propositions to pseudo-differential operators.  

\subsection{Estimates for~$D_\a(\CA, \L; f)$}
We apply this proposition to the operator~$A = \op_\a(\CA)$ and  projection~$P = \chi_\L$. 
As before, we suppose that~$\CA = \CA(\bxi)$ is a 
Hermitian matrix-valued symbol such that~\eqref{eq:Asing} holds. The bounds in this section 
are uniform in $\CA$ and $\boldsymbol\Xi$ in the sense of Remark \ref{rem:uniform}.

\begin{lem}\label{lem:dalpha}
Suppose that~$f$ satisfies Condition~\ref{cond:f} with some~$\g\in (0, 1]$,  and~$\L$ satisfies Condition~\ref{cond:domain}.
Let~$q\in (1/2, 1]$ and~$0<\s< \min(2-q^{-1}, \gamma)$.  
Assume that~$\rho\s q >d$. 
If~$d\ge 2$, then
\begin{align*}
	\|D_\a(\CA, \L; f)\|_{q}^q \lesssim \1 f\1_2^q\, R^{q(\g - \s)} \a^{d-1}\:,  
\end{align*}
and if~$d = 1$, then
\begin{align*}
	\|D_\a(\CA, \L; f)\|_{q}^q \lesssim \1 f\1_2^q\, R^{q(\g - \s)} \log (\a+2) \:,
\end{align*}
uniformly in $\CA$ and $\boldsymbol \Xi$.
%
\end{lem}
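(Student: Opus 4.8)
The statement is a direct consequence of Proposition~\ref{prop:szego} applied to $A = \op_\a(\CA)$ and $P = \chi_\L$, once we control $\|P A(\mathds 1 - P)\|_{\s q}$, i.e. the commutator $\|[\op_\a(\CA), \chi_\L]\|_{\s q}$. First I would check the hypotheses of Proposition~\ref{prop:szego}: we are given $q\in(1/2,1]$, $\s < \min(2-q^{-1},\g)$, and $f$ satisfies Condition~\ref{cond:f} with exponent $\g$ and support half-width $R$. So the only thing to supply is the Schatten bound on $[\op_\a(\CA),\chi_\L]$ in the class $\BS_{\s q}$. Set $\s' := \s q \in (0,1]$. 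Since $\rho\s q > d$ by assumption, we have $\rho \s' > d$, so Lemma~\ref{lem:nonsm} applies with exponent $\s'$ in place of $\s$ there (taking $r = 0$, so $\lu r\ru = 1$): for $\a\gtrsim 1$,
\[
\|[\op_\a(\CA),\chi_\L]\|_{\s q}^{\s q} \lesssim
\begin{cases}
\a^{d-1} & d\ge 2,\\[2pt]
\log(\a+2) & d = 1,
\end{cases}
\]
uniformly in $\CA$ satisfying \eqref{eq:Asing} and in $\boldsymbol\Xi$ in the sense of Remark~\ref{rem:uniform}. (Here \eqref{eq:Asing} is \eqref{eq:Asing1} with $r=0$.)

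**Combining.** Now $D_\a(\CA,\L;f) = \chi_\L f(W_\a(\CA,\L))\chi_\L - W_\a(f\circ\CA,\L)$; I would note that this is exactly $\CD(\op_\a(\CA), \chi_\L; f)$, since $\chi_\L f(\chi_\L \op_\a(\CA)\chi_\L)\chi_\L = \chi_\L f(W_\a(\CA,\L))\chi_\L$ and $\chi_\L f(\op_\a(\CA))\chi_\L = \chi_\L \op_\a(f\circ\CA)\chi_\L = W_\a(f\circ\CA,\L)$ — the latter using that $\op_\a$ is a ring homomorphism from functions of $\bxi$ to multiplication operators in the Fourier picture (here one may quietly invoke that $f(0)=0$ is not needed for the identity, only $f$ continuous and $\CA$ bounded, though if $f(0)\neq 0$ one replaces $f$ by $f - f(0)$ without changing $\CD$). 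Then Proposition~\ref{prop:szego} with $A = \op_\a(\CA)$, $P = \chi_\L$ gives
\[
\|D_\a(\CA,\L;f)\|_q \lesssim \1 f\1_2\, R^{\g-\s}\,\|[\op_\a(\CA),\chi_\L]\|_{\s q}^{\s}.
\]
Raising to the $q$-th power and inserting the bound from Lemma~\ref{lem:nonsm} above (note $\big(\|[\cdot,\cdot]\|_{\s q}^{\s}\big)^q = \|[\cdot,\cdot]\|_{\s q}^{\s q}$) yields $\|D_\a(\CA,\L;f)\|_q^q \lesssim \1 f\1_2^q R^{q(\g-\s)}\,\a^{d-1}$ for $d\ge 2$ and $\lesssim \1 f\1_2^q R^{q(\g-\s)}\log(\a+2)$ for $d=1$, which is exactly the claim. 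Uniformity in $\CA$ and $\boldsymbol\Xi$ is inherited from Lemma~\ref{lem:nonsm} and from the fact that the implicit constant in Proposition~\ref{prop:szego} is independent of $A$, $P$, $f$, $R$, $t_0$.

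**Main obstacle.** There is no serious obstacle — the lemma is essentially bookkeeping combining two quoted results. The one point requiring minor care is the arithmetic with the Schatten exponents: Proposition~\ref{prop:szego} is stated for a target norm $\|\cdot\|_q$ with the commutator measured in $\BS_{\s q}$, while Lemma~\ref{lem:nonsm} is stated with its own exponent (call it $\s$); one must match $\s_{\text{Lemma}} = \s q$ and verify the standing hypothesis $\rho\,\s_{\text{Lemma}} > d$, which is precisely $\rho\s q > d$ as assumed. The constraint $\s < \min(2 - q^{-1}, \g)$ is what Proposition~\ref{prop:szego} requires and is a hypothesis here, so nothing further is needed. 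The only other thing to state clearly is the identification $D_\a(\CA,\L;f) = \CD(\op_\a(\CA),\chi_\L;f)$, which is immediate from the definitions \eqref{eq:dal} and \eqref{eq:cd} together with the functional-calculus identity $\chi_\L\op_\a(f\circ\CA)\chi_\L = \chi_\L f(\op_\a(\CA))\chi_\L$ (both equal $W_\a(f\circ\CA,\L)$).
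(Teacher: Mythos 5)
Your proof is correct and follows exactly the same route as the paper: apply Proposition~\ref{prop:szego} with $A=\op_\a(\CA)$, $P=\chi_\L$, then estimate the off-diagonal block by the commutator $\|[\op_\a(\CA),\chi_\L]\|_{\s q}$ via Lemma~\ref{lem:nonsm} (with the exponent there set to $\s q$ and $r=0$). The paper's proof is just the two-line version of the same bookkeeping; your extra remarks about the identification $D_\a(\CA,\L;f)=\CD(\op_\a(\CA),\chi_\L;f)$ and the exponent matching are correct and simply make the implicit steps explicit.
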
 

\begin{proof}
By Proposition~\ref{prop:szego},
\begin{align*}
	\|D_\a(\CA, \L; f)\|_{q} \lesssim  &\ \1 f\1_2\, R^{\g - \s}\,
	\|\chi_\L\, \op_\a(\CA) (\mathds{1}-\chi_\L)\|_{\s q}^{\s}\\
	\le &\ \1 f\1_2\, R^{\g - \s}\, 
	\|[\op_\a(\CA), \chi_\L]\|_{\s q}^{\s} \:.
\end{align*}
The required bounds follow now from Lemma~\ref{lem:nonsm}. 
\end{proof}

\begin{lem}\label{lem:mbound}
Let~$d\ge 2$. 
Suppose that~$f$ satisfies Condition~\ref{cond:f} with some~$\g\in (0, 1]$,  and~$\L$ satisfies Condition~\ref{cond:domain}.
Suppose that~$\CA$ satisfies~\eqref{eq:Asing} with some~$\rho >0$ such that~$\rho\g > d$. 
Then for any~$\s \in (d \rho^{-1}, \g)$ we have 
\begin{align}\label{eq:mbound}
	|\CM(\hat\bxi; \be; \CA; f)|
	\lesssim \1 f\1_2\, R^{\g-\s}\, 
&\ \lu \hat\bxi\ru^{-\rho\s+1} \, \log \big(r(\hat\bxi)^{-1} + 2\big),\\	
	&\ \quad 
	\text{for~$\hat\bxi\in \BT_\be,\ \hat\bxi\notin \widehat{\boldsymbol\Xi}_\be$}\:,\notag
\end{align}
where~$\widehat{\boldsymbol\Xi} = \widehat{\boldsymbol\Xi}_\be$ denotes the projection of the
set~$\boldsymbol\Xi$ onto the hyperplane~
$\BT_\be = \{\bxi: \bxi\cdot \be = 0\}$, and~ 
$r(\hat\bxi) = \min\big(\dist(\hat\bxi, \widehat{\boldsymbol\Xi}_\be), 1\big)$. 
The bound is uniform in~$\CA$, $\boldsymbol\Xi$  and~$\be\in\mathbbm S^{d-1}$.

Furthermore, 
\begin{align}\label{eq:mathfr}
|\mathfrak M(\be; \CA; f)|\lesssim \1 f \1_2 R^{\g-\s} \:, 
\end{align}
uniformly in~$\CA$ and~$\be\in\mathbb S^{d-1}$, and  
\begin{align}\label{eq:bconesing}
	|{\sf B}(\CA; f)|\lesssim \1 f \1_2 R^{\g-\s} \:, 
\end{align} 
uniformly in~$\CA$ and $\boldsymbol\Xi$. 
\end{lem}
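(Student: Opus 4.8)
The plan is to reduce the estimate for $\CM(\hat\bxi;\be;\CA;f)$ to the one‑dimensional bound of Lemma~\ref{lem:dis}. First I would note that, because $\op_1$ acts as a Fourier multiplier, $W_1\big(f\circ\CA(\hat\bxi;\,\cdot\,);\R_+\big)=\chi_{\R_+}\,f\big(\op_1(\CA(\hat\bxi;\,\cdot\,))\big)\chi_{\R_+}$, so that $\CM(\hat\bxi;\be;\CA;f)=\tr\CD(A,P;f)$ with $A=\op_1(\CA(\hat\bxi;\,\cdot\,))$ acting on $\plainL2(\R;\C^n)$, $P=\chi_{\R_+}$, and $\CD$ as in \eqref{eq:cd}. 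I would then apply Proposition~\ref{prop:szego} with $q=1$ and $\s\in(d\rho^{-1},\g)$ — admissible since $\s<\g=\min(2-q^{-1},\g)$ when $q=1$ and $\g\le1$ — to obtain
\[
|\CM(\hat\bxi;\be;\CA;f)|\le\|\CD(A,P;f)\|_1\lesssim\1 f\1_2\,R^{\g-\s}\,\big\|\chi_{\R_+}\op_1(\CA(\hat\bxi;\,\cdot\,))\chi_{\R_-}\big\|_\s^\s,
\]
provided the trailing quasi-norm is finite; both its finiteness and the bound I need for it will come from Lemma~\ref{lem:dis}.

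The technical core is to check that the one-dimensional symbol $t\mapsto\CA(\hat\bxi;t)=\CA(\hat\bxi+t\be)$ obeys \eqref{eq:Asing11} with the finite set $\{s_1,\dots,s_N\}\subset\R$ given by $\bxi^{(j)}=\hat\bxi^{(j)}+s_j\be$ (the $\hat\bxi^{(j)}$ being the points of $\widehat{\boldsymbol\Xi}_\be$), with $r=|\hat\bxi|\ge0$ and $r_1=r(\hat\bxi)>0$ — the positivity of $r_1$ is precisely the hypothesis $\hat\bxi\notin\widehat{\boldsymbol\Xi}_\be$. Since $\hat\bxi\cdot\be=0$ we have $|\hat\bxi+t\be-\bxi^{(j)}|^2=|\hat\bxi-\hat\bxi^{(j)}|^2+|t-s_j|^2$, and two elementary inequalities then do the job: $\lu\hat\bxi+t\be\ru\ge2^{-1/2}\lu\,|t|+|\hat\bxi|\,\ru$, which controls the weight in \eqref{eq:Asing}, and, evaluating the distance to $\boldsymbol\Xi$ at its minimising index, $\dist(\hat\bxi+t\be,\boldsymbol\Xi)\ge2^{-1/2}\big(\dist(\hat\bxi,\widehat{\boldsymbol\Xi}_\be)+\dist(t,\{s_j\})\big)$, which after capping at $1$ gives $\dc(\hat\bxi+t\be)\gtrsim r(\hat\bxi)+\min(\dist(t,\{s_j\}),1)$. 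Substituting into \eqref{eq:Asing} produces \eqref{eq:Asing11} with the stated parameters, and Lemma~\ref{lem:dis} — applicable since $\rho\s>d\ge2>1$ — yields
\[
\big\|\chi_{\R_+}\op_1(\CA(\hat\bxi;\,\cdot\,))\chi_{\R_-}\big\|_\s^\s\lesssim\lu\hat\bxi\ru^{-\rho\s+1}\,\log\big(r(\hat\bxi)^{-1}+2\big),
\]
which together with the previous display is exactly \eqref{eq:mbound}. All constants depend on $\boldsymbol\Xi$ only through $N=\card\boldsymbol\Xi$ and are independent of $\be$.

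For \eqref{eq:mathfr} I would integrate \eqref{eq:mbound} over $\BT_\be\cong\R^{d-1}$ via \eqref{eq:frakm}, so the issue reduces to $\int_{\R^{d-1}}\lu\hat\bxi\ru^{-\rho\s+1}\log\big(r(\hat\bxi)^{-1}+2\big)\,d\hat\bxi\lesssim1$, uniformly in $\be$ and in point sets with $N$ bounded. The weight $\lu\hat\bxi\ru^{-\rho\s+1}$ is integrable on $\R^{d-1}$ because $\rho\s-1>d-1$; and, writing $r(\hat\bxi)^{-1}=\max_j\max(|\hat\bxi-\hat\bxi^{(j)}|^{-1},1)$, one has $\log\big(r(\hat\bxi)^{-1}+2\big)\lesssim\log N+\sum_{j=1}^N\log\big(|\hat\bxi-\hat\bxi^{(j)}|^{-1}+3\big)$, and each summand contributes a finite integral against the weight, bounded independently of the location of $\hat\bxi^{(j)}$: on $\{|\hat\bxi-\hat\bxi^{(j)}|\ge1\}$ the logarithm is bounded and the weight integrable, while on $\{|\hat\bxi-\hat\bxi^{(j)}|<1\}$ the weight is $\le1$ and $\int_{|u|<1}\log(|u|^{-1}+3)\,du<\infty$ in $\R^{d-1}$. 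Summing $N\lesssim1$ terms gives \eqref{eq:mathfr}. Finally, \eqref{eq:bconesing} follows at once by integrating \eqref{eq:mathfr} over $\p\L$, since a region satisfying Condition~\ref{cond:domain} has $\mathrm{vol}_{d-1}(\p\L)<\infty$ (the implicit constant accordingly depending on $\L$).

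The step I expect to be the main obstacle is the geometric reduction to one dimension — specifically, recognising that the role of the parameter $r_1$ in \eqref{eq:Asing11} is played precisely by $r(\hat\bxi)$ and not by some smaller quantity when $t$ is close to one of the $s_j$ — together with the location-independence in the last integral estimate, which is what secures uniformity in $\boldsymbol\Xi$ in the sense of Remark~\ref{rem:uniform}.
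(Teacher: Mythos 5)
Your proof follows the same strategy as the paper's: reduce via Proposition~\ref{prop:szego} with $q=1$ to a one-dimensional $\BS_\s$ bound for the cross-term $\chi_+\op_1(\CA(\hat\bxi;\,\cdot\,))\chi_-$, verify that the restricted symbol obeys \eqref{eq:Asing11} with $r=|\hat\bxi|$ and $r_1=r(\hat\bxi)$ so that Lemma~\ref{lem:dis} applies, and then integrate over $\BT_\be$ and $\p\L$ using $\rho\s>d$ and the uniform (location-independent) integrability of the logarithmic singularity. Your write-up additionally makes explicit the two elementary geometric inequalities that the paper leaves implicit when passing from \eqref{eq:Asing} to the one-dimensional bound \eqref{eq:red}, but the argument is otherwise the same.
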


\begin{proof}  
Suppose now that~$d\ge 2$ and that~$\CA$ satisfies~\eqref{eq:Asing}. 
Let 
$\widehat{\boldsymbol\Xi}_\be$ be the projection of~$\boldsymbol\Xi$ 
onto the hyperplane~$\BT_\be$, and let~$\widehat{\boldsymbol\Xi}^\perp_\be$ 
be its projection onto the one-dimensional subspace spanned by~$\be$. 
Then it follows from 
%
\eqref{eq:Asing} that~$\CA(\hat\bxi, t)$ satisfies the bound 
\begin{align}\label{eq:red}
	|\p_t^k \CA(\hat\bxi; t)|
	\lesssim \lu |\hat\bxi|+ |t|\ru^{-\rho} 
(r(\hat\bxi)+ \dc(t))^{-k},	
	\quad k = 0, 1, \dots,
\end{align}
where $r = r(\hat\bxi) = 
\min\big(\dist(\hat\bxi, \widehat{\boldsymbol\Xi}_\be), 1\big)$ and 
$\dc(t) = \min\big(\dist (t, \widehat{\boldsymbol\Xi}_\be^\perp), 1\big)$. 
Using Proposition~\ref{prop:szego} with~$q=1$, and Lemma~\ref{lem:dis}, 
we get 
\begin{align}\label{eq:rhat}
	|\CM(\hat\bxi; \be; \CA; f)|\le &\ \|D_1(\CA(\hat\bxi; \,\cdot\,); \R_+ ; f)\|_{1} \notag\\
	\lesssim &\ 
	\1 f\1_2\ R^{\g - \s} \|\chi_+ \op_1(\CA(\hat\bxi; \,\cdot\,)) 
	( \mathds{1} - \chi_+)\|_{\s}^\s\notag\\
	\lesssim &\ \1 f\1_2\ R^{\g - \s}  
	\lu \hat\bxi\ru^{-\rho\s+1} \, \log \big(r(\hat\bxi)^{-1} + 2\big) \:,
	\end{align}
	which is exactly the bound~\eqref{eq:mbound}. 

Since~$\rho\s >d$ and 
\begin{align*}
r(\hat\bxi)^{-1}\lesssim 
1 + \sum_{\hat\bxi^{(j)}\in\widehat{\boldsymbol\Xi}}\frac{1}{|\hat\bxi - \hat\bxi^{(j)}|},
\end{align*}
the right-hand side of \eqref{eq:rhat} is integrable in~$\hat\bxi$ and, by definition~\eqref{eq:frakm},
\begin{align*}
	|{\mathfrak M}(\be; \CA; f)|\lesssim &\  \1 f\1_2\, R^{\g-\s}\, 
	\int_{\BT_{\be}}\, 
\lu  \hat\bxi \ru^{-\rho\s+1} \, \log \big(r(\hat\bxi)^{-1} + 2\big)\,  d\hat\bxi\\
	\lesssim &\   \1 f\1_2\, R^{\g-\s},
\end{align*}
uniformly in
$\CA$, $\boldsymbol\Xi$ and~$\be\in\mathbb S^{d-1}$. This proves~\eqref{eq:mathfr} which 
entails~\eqref{eq:bconesing} due to the definition~\eqref{eq:bcoef}.
\end{proof} 

\subsection{Trace class continuity of~$D_\a(\CA, \L; f)$}
We begin with an elementary standard 
observation:

\begin{lem}\label{lem:normcon}
Let~$B^{(\eps)}$, $\eps\in [0, 1]$ be a family of uniformly 
bounded self-adjoint operators such that~$\|B^{(\eps)}-B^{(0)}\|\to 0$ 
as~$\eps \searrow 0$. Then for any function~$g\in \plainC{}(\R)$ we have 
\begin{align*}
	\|g(B^{(\eps)}) - g(B^{(0)})\|\to 0,\quad \textup{as}\quad \eps \searrow 0 \:.
\end{align*}
\end{lem}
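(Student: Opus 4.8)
The plan is to reduce the claim to the elementary fact that continuous functions can be uniformly approximated by polynomials on any compact interval, and then to use that polynomials are manifestly norm-continuous in a bounded self-adjoint operator. First I would fix a constant $M$ with $\|B^{(\eps)}\|\le M$ for all $\eps\in[0,1]$, which exists by the uniform boundedness assumption. Then the spectra~$\sigma(B^{(\eps)})$ all lie in the compact interval~$[-M, M]$, so it suffices to know~$g$ on this interval.

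Next, given~$\delta > 0$, I would invoke the Weierstrass approximation theorem to pick a polynomial~$P$ such that~$\sup_{t\in[-M,M]}|g(t) - P(t)| \le \delta$. By the spectral theorem (functional calculus for bounded self-adjoint operators), for any self-adjoint~$B$ with~$\|B\|\le M$ one has~$\|g(B) - P(B)\| \le \sup_{t\in[-M,M]}|g(t)-P(t)| \le \delta$. Applying this to~$B^{(\eps)}$ and to~$B^{(0)}$ gives
\begin{align*}
\|g(B^{(\eps)}) - g(B^{(0)})\| \le \|g(B^{(\eps)}) - P(B^{(\eps)})\| + \|P(B^{(\eps)}) - P(B^{(0)})\| + \|P(B^{(0)}) - g(B^{(0)})\| \le 2\delta + \|P(B^{(\eps)}) - P(B^{(0)})\|.
\end{align*}
Finally I would show that~$\|P(B^{(\eps)}) - P(B^{(0)})\|\to 0$ as~$\eps\searrow 0$: writing~$P(t) = \sum_{j=0}^n c_j t^j$, it is enough to treat each monomial, and the telescoping identity $B^j - C^j = \sum_{i=0}^{j-1} B^i(B-C)C^{j-1-i}$ together with~$\|B^{(\eps)}\|, \|B^{(0)}\| \le M$ gives~$\|(B^{(\eps)})^j - (B^{(0)})^j\| \le j M^{j-1}\|B^{(\eps)} - B^{(0)}\|$, which tends to zero by hypothesis. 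Hence~$\limsup_{\eps\searrow 0}\|g(B^{(\eps)}) - g(B^{(0)})\| \le 2\delta$, and since~$\delta>0$ was arbitrary the limit is zero.

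There is no real obstacle here; the only point requiring a modicum of care is bookkeeping the uniform bound~$M$ so that a single interval~$[-M,M]$ and hence a single polynomial~$P$ works for all~$\eps$ simultaneously. This is an entirely standard argument, which is why it is stated as an "elementary standard observation" in the text.
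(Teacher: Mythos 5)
Your proof is correct and follows essentially the same route as the paper: reduce to polynomials via Weierstrass approximation on a compact interval determined by the uniform operator-norm bound, then establish norm-continuity of monomials through the telescoping identity. The paper is slightly terser (it does not spell out the three-term triangle inequality) but the argument is the same.
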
 

\begin{proof}
Since~$\{B^{(\eps)}\}$ are uniformly bounded, we can assume that~$g$ is supported on 
some compact interval~$I$. Thus it 
would suffice to prove the required convergence for polynomials. 
By virtue of the representation
\begin{align*}
(B^{(\eps)})^l - (B^{(0)})^l = \sum_{k=0}^{l-1} 
(B^{(\eps)})^{l-1-k} \big[B^{(\eps)} - B^{(0)}\big] 
(B^{(0)})^k, \quad l = 1, 2, \dots,
\end{align*}
we have 
\begin{align*}
\|(B^{(\eps)})^l - (B^{(0)})^l \|\lesssim \|B^{(\eps)}  - B^{(0)}\| \to 0 \:,
\quad \eps \searrow 0,
\end{align*}
as required.
\end{proof}

Consider a family of \textit{continuous} symbols 
$\CA, \CA^{(\eps)}\in 
\plainC\infty(\R^d\setminus\boldsymbol\Xi; \C^{n\times n})\cap \plainC{}(\R^d; \C^{n\times n})$, 
$\eps\in (0, 1],$ 
that satisfy~\eqref{eq:Asing} with~$\rho>d$, uniformly in~$\eps$.  Suppose that  
\begin{align}\label{eq:symbolcon}
\sup_{\bxi} | \CA^{(\eps)}(\bxi) - \CA(\bxi)|\to 0,\ \eps \searrow 0 \:.
\end{align}
In the next lemma we use the notation~$D_\a(\CA) = D_\a(\CA, \L; f)$.

\begin{lem} \label{lem:cont}
Let~$d\ge 2$. Suppose that~$f$ satisfies Condition~\ref{cond:f} with some~$\g \in (0, 1]$. 
Suppose that~$\CA, \CA^{(\eps)}$ are continuous and satisfy 
\eqref{eq:Asing} with some~$\rho$ such that~$\rho \g > d$,  
uniformly in~$\eps\in (0, 1]$. Then under the condition~\eqref{eq:symbolcon} we have 
\begin{align*}
	\sup_{\a\gtrsim 1}\,\a^{1-d}\, \|D_\a (\CA^{(\eps)}) - D_\a(\CA)\|_{1}\to 0,\quad
	\eps \searrow 0 \:.
\end{align*}
\end{lem}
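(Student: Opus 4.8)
The plan is to reduce the trace-norm difference to a Schatten-quasi-norm estimate via Proposition~\ref{prop:onfull}, applied with a suitably chosen cutoff operator that localizes away from the singular set~$\boldsymbol\Xi$. Concretely, for a small parameter $h\in(0,1/4]$ let $J_h := \op_\a(\z_h^{(1)})$ be the pseudo-differential operator with the smooth cutoff symbol $\z_h^{(1)}$ from~\eqref{eq:cutoff}, which equals $1$ on $\{\dc(\bxi)>4h\}$ and vanishes on $\{\dc(\bxi)<2h\}$. Then $J_h$ commutes with neither $\op_\a(\CA)$ nor $\op_\a(\CA^{(\eps)})$ exactly, so strictly speaking Proposition~\ref{prop:onfull} does not apply verbatim; the cleaner route is to split
\[
\CA^{(\eps)} - \CA = \big(\CA^{(\eps)} - \CA\big)\z_h^{(1)} + \big(\CA^{(\eps)} - \CA\big)\z_h^{(2)} =: \CB_h^{(\eps)} + \CC_h^{(\eps)},
\]
so that $\CB_h^{(\eps)}$ is supported where $\dc(\bxi)>2h$ and $\CC_h^{(\eps)}$ is supported where $\dc(\bxi)<4h$. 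Using linearity of $D_\a$ in the symbol argument (for fixed $f$ this is not linear, but $D_\a(\CA^{(\eps)})-D_\a(\CA)$ can be bounded by $\|\CD(\op_\a(\CA^{(\eps)}),\chi_\L;f)-\CD(\op_\a(\CA),\chi_\L;f)\|_1$, and the difference of the non-$f$-linear pieces handled separately), I would treat the two regions differently.

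For the piece where $\dc(\bxi)<4h$: here one does not use smallness of $\CA^{(\eps)}-\CA$, only that both symbols obey the bound~\eqref{eq:Asing} uniformly, so $\CB := \CA^{(\eps)}\z_h^{(2)}$ and $\CB' := \CA\z_h^{(2)}$ satisfy~\eqref{eq:Asing1} with $r=0$ but additionally their support shrinks with $h$. Running the argument of Lemma~\ref{lem:nonsm} (specifically the estimate of the $a_h^{(2)}$-part there) gives $\|[\op_\a(\CA^{(\eps)}\z_h^{(2)}),\chi_\L]\|_\s^\s \lesssim$ (volume of an $O(h)$-neighborhood of $\boldsymbol\Xi$ weighted by~$\lu\bxi\ru^{-\rho\s}$), which is $o(1)$ as $h\to 0$, uniformly in $\eps$ and in $\a\gtrsim 1$ — crucially with no $\a^{d-1}$ loss since that part of Lemma~\ref{lem:nonsm} produced a bound independent of~$\a$. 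Feeding this into Proposition~\ref{prop:szego} (with $q=1$) bounds the corresponding contribution to $\a^{1-d}\|D_\a(\cdot)\|_1$ by $\a^{1-d}\cdot\1 f\1_2 R^{\g-\s}\cdot o_h(1)\to 0$; actually one must be a little careful that the $\a$-power works out, but since $\|[\op_\a(\CA\z_h^{(2)}),\chi_\L]\|_\s^\s$ has \emph{no} growth in $\a$ and we divide by $\a^{d-1}$, this term is even better than needed.

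For the piece where $\dc(\bxi)>2h$: here $\CB_h^{(\eps)} = (\CA^{(\eps)}-\CA)\z_h^{(1)}$ is a \emph{smooth} symbol (no singularity since we are away from $\boldsymbol\Xi$) whose derivatives of every order are bounded by $\lu\bxi\ru^{-\rho}h^{-k}$, and by~\eqref{eq:symbolcon} its sup-norm tends to $0$ as $\eps\searrow 0$ for each fixed $h$. So this is essentially the smooth-symbol, norm-convergent situation: apply Proposition~\ref{prop:onfull} with $A=\op_\a(\CA)$, $B=\op_\a(\CA^{(\eps)}\z_h^{(1)}+\CA\z_h^{(2)})$ and $J$ a pseudo-differential cutoff to the region $\dc(\bxi)>h$, or more directly invoke Lemma~\ref{lem:dalpha}/Lemma~\ref{lem:nonsm} for the smooth symbol $\CB_h^{(\eps)}$: Lemma~\ref{lem:nonsm} with $r=0$ and scale $\tau=h$ (not $\dc$) gives $\|[\op_\a(\CB_h^{(\eps)}),\chi_\L]\|_\s^\s \lesssim \a^{d-1}h^{-1}\sup_\bxi|\CA^{(\eps)}-\CA|^\s$ (the $h^{-1}$ from $\int v^\s/\tau$), so the contribution to $\a^{1-d}\|D_\a\|_1$ is $\lesssim \1 f\1_2 R^{\g-\s} h^{-\s/1}\big(\sup|\CA^{(\eps)}-\CA|\big)^{\s}$, which for fixed $h$ tends to $0$ as $\eps\searrow 0$. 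Combining: given $\delta>0$, first pick $h$ small so the $\dc<4h$ contribution is $<\delta$ uniformly in $\eps,\a$; then with that $h$ fixed, pick $\eps_0$ so the $\dc>2h$ contribution is $<\delta$ for $\eps<\eps_0$; done.

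The main obstacle is bookkeeping the non-linearity of $D_\a$ in the symbol — $D_\a(\CA^{(\eps)})-D_\a(\CA)$ splits as the difference of the operator-function pieces $\chi_\L f(W_\a(\CA^{(\eps)},\L))\chi_\L - \chi_\L f(W_\a(\CA,\L))\chi_\L$ plus the difference $W_\a(f\circ\CA,\L)-W_\a(f\circ\CA^{(\eps)},\L)$, and one must see that both are controlled. The first is exactly what Proposition~\ref{prop:onfull} (together with the splitting above) handles; the second requires that $f\circ\CA^{(\eps)} - f\circ\CA$ is itself a symbol satisfying the right decay and converging to $0$ appropriately, which follows from Condition~\ref{cond:f} (Lipschitz-type control of $f$ away from $t_0$ combined with $f$ being supported near $t_0$) plus the uniform bound~\eqref{eq:Asing} and Lemma~\ref{lem:normcon}-style reasoning. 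Once that reduction is in hand the two-region cutoff argument above is routine.
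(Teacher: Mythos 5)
Your proposal takes a genuinely different route from the paper, and unfortunately it has a gap at its core: the nonlinearity of the map $\CA \mapsto D_\a(\CA;f)$ is not handled by the tools you invoke.

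Concretely, you want to bound $\|D_\a(\CA^{(\eps)}) - D_\a(\CA)\|_1$ by splitting $\CA^{(\eps)}-\CA$ into a near-$\boldsymbol\Xi$ piece and a far-from-$\boldsymbol\Xi$ piece. For the far piece you propose either Proposition~\ref{prop:onfull} or a ``more direct'' appeal to Lemma~\ref{lem:dalpha}/\ref{lem:nonsm} applied to the difference symbol $\CB_h^{(\eps)}=(\CA^{(\eps)}-\CA)\z_h^{(1)}$. Neither works. Proposition~\ref{prop:onfull} requires $(A-B)J=0$ with $A=\op_\a(\CA^{(\eps)})$, $B=\op_\a(\CA)$; since all three operators are Fourier multipliers, $(A-B)J$ has symbol $(\CA^{(\eps)}(\bxi)-\CA(\bxi))\,\j(\bxi)$, and for this to vanish identically you would need $\CA^{(\eps)}=\CA$ on the support of $J$'s symbol, which fails for every non-trivial choice of $J$ (the two symbols differ essentially everywhere, they only \emph{converge} as $\eps\searrow0$). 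This is precisely why the paper uses Proposition~\ref{prop:onfull} in Lemma~\ref{lem:compare} to compare $\CA$ with $\CA_h=\CA\z_h$ (which \emph{agree} away from $\boldsymbol\Xi$), but never to compare $\CA^{(\eps)}$ with $\CA$. As for the ``direct'' route: Lemma~\ref{lem:dalpha} would bound $\|D_\a(\CB_h^{(\eps)})\|_1$, but $D_\a(\CA^{(\eps)}_h)-D_\a(\CA_h)\neq D_\a(\CA^{(\eps)}_h-\CA_h)$ because $D_\a$ is not additive in its symbol (take $f(t)=t^2$). So a bound on the commutator of the difference symbol does not translate into a bound on the difference of the $D_\a$'s, and your estimate for this term is not justified. (A side point: your worry about separately handling ``the difference of the non-$f$-linear pieces'' is unfounded — since $\op_\a(\CA)$ is a Fourier multiplier one has $f(\op_\a(\CA))=\op_\a(f\circ\CA)$, so $D_\a(\CA;f)=\CD(\op_\a(\CA),\chi_\L;f)$ identically.)

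The paper's actual proof sidesteps the nonlinearity entirely. It splits the singular values of $T_\eps := D_\a(\CA^{(\eps)})-D_\a(\CA)$ by index rather than splitting the symbol. The head $\sum_{k\le 2\ell}s_k(T_\eps)\le 2\ell\,\|T_\eps\|$ is controlled by the \emph{operator} norm, which by Lemma~\ref{lem:normcon} and the uniform bound $\|W_\a(\CA^{(\eps)})-W_\a(\CA)\|\le\sup_\bxi|\CA^{(\eps)}-\CA|$ tends to $0$ uniformly in $\a$ as $\eps\searrow 0$. The tail is bounded via the Ky Fan inequality~\eqref{eq:kyfan} by tails of the individual singular-value sequences of $D_\a(\CA^{(\eps)})$ and $D_\a(\CA)$, and these are controlled by the uniform Schatten-$q$ bound $\|D_\a(\CA^{(\eps)})\|_q^q\lesssim\a^{d-1}$ (Lemma~\ref{lem:dalpha}) together with $s_k\lesssim k^{-1/q}\|\cdot\|_q$. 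Choosing $\ell=M\a^{d-1}$ makes both contributions, divided by $\a^{d-1}$, small in $M$ and $\delta$. This ``operator norm for the head, Schatten norm for the tail'' interpolation is the essential idea missing from your write-up, and it is what allows the comparison of $D_\a$ at two symbols that never coincide.
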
 

\begin{proof}
Split the sum
\begin{align*}
	\|D_\a (\CA^{(\eps)}) - D_\a(\CA)\|_{1} = \sum_{k=1}^\infty\  
	s_k\big(D_\a (\CA^{(\eps)}) - D_\a(\CA)\big)
\end{align*}
into two parts:
\begin{align*}
	Z_1(\eps, \ell) = &\ \sum_{k=1}^{2\ell}\ s_k\big(D_\a (\CA^{(\eps)}) 
	- D_\a(\CA)\big),\\
	Z_2(\eps, \ell) = &\ \sum_{k = 2\ell+1}^\infty\  
	s_k\big(D_\a (\CA^{(\eps)}) - D_\a(\CA)\big),
\end{align*}
where~$\ell$ is an integer to be specified later. 
Estimate the first sum:
\begin{align}\label{eq:m1}
	Z_1(\eps, \ell)\le &\ 2\ell \| D_\a (\CA^{(\eps)}) - D_\a(\CA)\|\notag\\
	\le &\ 2\ell \bigg[\big\|f\big(W_\a(\CA^{(\eps)})\big)  
	-  f\big(W_\a(\CA)\big)\big\| + \big\| 
	W_\a(f(\CA^{(\eps)})) -  W_\a(f(\CA))\big\| \bigg]. 
\end{align}
In view of~\eqref{eq:symbolcon}, 
\[ \| \op_\a(\CA^{(\eps)}) - \op_\a(\CA)\|\to 0 \:,\quad \eps \searrow 0 \:,  \]
and hence 
\[ \| W_\a(\CA^{(\eps)}) - W_\a(\CA)\|\to 0 \]
as well, so that by Lemma~\ref{lem:normcon}, 
\[\big\|f\big(W_\a(\CA^{(\eps)})\big)  -  f\big(W_\a(\CA)\big)\big\|\to 0 \:.\]
In the same way the second term 
in~\eqref{eq:m1} also tends to zero. 
Thus for an arbitrary~$\d>0$ and sufficiently small~$\eps$ we have 
\begin{align*}
	Z_1(\eps, \ell) < 2\ell  \d.
\end{align*}
Now we estimate~$Z_2$. 
In view of the bound~\eqref{eq:kyfan} the sum~$Z_2(\eps, \ell)$ can be estimated as follows: 
\begin{align*}
	Z_2(\eps, \ell)\le 2\sum_{k=\ell}^\infty s_k\big(D_\a(\CA^{(\eps)})\big) 
	+  2\sum_{k=\ell}^\infty s_k\big(D_\a(\CA)
	\big) \:.
\end{align*}
To estimate the right-hand side we use Lemma~\ref{lem:dalpha}. Let~$q \in (1/2, 1)$ and~$\s < \min(2-q^{-1}, \g)$ 
be such that~$\rho\s q > d-1$. Then, by Lemma~\ref{lem:dalpha}, 
$D_\a(\CA^{(\eps)})\in \BS_q$ and 
\begin{align*}
	\|D_\a(\CA^{(\eps)})\|_{q}^q\lesssim \a^{d-1} \:,
\end{align*}
uniformly in~$\eps$. By~\eqref{eq:ind}, 
this allows us to write the following bound for the singular values:
\begin{align*}
	s_k(D_\a(\CA^{(\eps)}))\lesssim k^{-\frac{1}{q}}\, \a^{\frac{d-1}{q}} \:.
\end{align*}
Consequently, 
\begin{align*}
	\sum_{k=\ell}^\infty s_k(D_\a(\CA^{(\eps)}))\lesssim &\ 
	\a^{\frac{d-1}{q}}\,\sum_{k=\ell}^\infty  
	k^{-\frac{1}{q}} \lesssim  \a^{\frac{d-1}{q}}\, \ell^{1-\frac{1}{q}} \:,\\
	\textup{and hence}\quad Z_2(\eps, \ell)&\ 
	\lesssim \a^{\frac{d-1}{q}}\, \ell^{1-\frac{1}{q}} \:,
\end{align*} 
uniformly in~$\eps$. Thus, together with the bound for~$Z_1(\eps, \ell)$ we obtain that 
\begin{align*}
	\a^{1-d}\|D_\a (\CA^{(\eps)}) - D_\a(\CA)\|_{1} 
	\lesssim &\ \a^{1-d}\, \ell \d + \a^{(d-1)(\frac{1}{q}-1)}\, \ell^{1-\frac{1}{q}}\\
	=  \a^{1-d}\, \ell\d + \big(\a^{1-d} \ell\big)^{1-\frac{1}{q}} \:.
\end{align*}
Take~$\ell = M\a^{d-1}$ with some~$M>0$. Then 
\begin{align*}
	\limsup_{\eps \searrow 0}\,
	\sup_{\a\gtrsim 1}\,\a^{1-d}\|D_\a (\CA^{(\eps)}) - D_\a(\CA)\|_{1} 
	\lesssim  M \d + M^{1-\frac{1}{q}} \:.
\end{align*}
Since~$\d >0$ and~$M>0$ are arbitrary, 
the left-hand side equals zero as required. 
\end{proof}

\subsection{Approximation by smooth symbols}    
Here we assume  as before that~$\CA$ is a Hermitian matrix-valued symbol satisfying~\eqref{eq:Asing}.
To approximate~$\CA$ by a family of smooth symbols we use the following partition of unity.

For a parameter $h\in (0, 1/4]$, 
let the functions $\t^{(j)}_h\in \plainC\infty_0(\R^d)$, 
$j = 1, 2, \dots, N = \card\, \boldsymbol\Xi$, and $\z_h\in \plainC\infty(\R^d)$ be such that 
\begin{align*}
\t^{(j)}_h(\bxi) = &\ 1, \quad  \textup{for}\quad 
|\bxi-\bxi^{(j)}| \le h;\\
\t^{(j)}_h(\bxi) = &\ 0,  \quad  \textup{for}\quad |\bxi-\bxi^{(j)}| \ge 2h,
\end{align*}
and 
\begin{align*}
\sum_{j=1}^N \, \t^{(j)}_h(\bxi) + \z_h(\bxi) = 1,\quad \bxi\in\R^d.
\end{align*}
In particular, $\z_h(\bxi) = 1$ if 
$\dc(\bxi) = \dc(\bxi, \boldsymbol\Xi) := \min\{1, \dist(\bxi, \boldsymbol\Xi)\} \ge 2h$. A standard 
argument easily shows that one can construct such cut-off functions 
that for every $j = 1, 2, \dots, N,$ the following bounds hold: 
\begin{align}\label{eq:tz}
|\nabla_{\bxi}^k\, \t^{(j)}_h(\bxi)| + 
|\nabla_{\bxi}^k \, \z_h(\bxi)|
\lesssim h^{-k},\quad k = 0, 1, \dots.
\end{align}
We compare the operators~$D_\a(\CA; f)$ and~$D_\a(\CA_h; f)$, where  
\begin{align}\label{eq:ar}
\CA_h(\bxi) := \CA(\bxi) \z_h(\bxi),
\end{align}
so that 
\begin{align}\label{eq:ztil}
\CA(\bxi)-\CA_h(\bxi) = 
\CA(\bxi) \,\sum_{k=1}^N \, \t^{(j)}_h(\bxi).
\end{align}
Below we use again the shorthand notation $D_\a(\CA)$ instead of $D_\a(\CA, \L; f)$.

\begin{lem}\label{lem:compare} 
Let~$d\ge 2$. 
Suppose that~$f$ satisfies Condition~\ref{cond:f} and~$\L$ satisfies Condition~\ref{cond:domain}.
Let~$\CA$ satisfy~\eqref{eq:Asing}. 
Assume that~$h\in (0, 1]$ and~$\a h \gtrsim 1$. Then we have
\begin{align}\label{eq:compare}
\|D_\a(\CA) - D_\a(\CA_h) \|_1\lesssim (\a h)^{d-1}\,,
\end{align}
with an implicit constant uniform in~$\CA$ and the set $\boldsymbol\Xi$.
\end{lem}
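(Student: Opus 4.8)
The goal is to bound the trace norm of $D_\a(\CA) - D_\a(\CA_h)$, where $\CA_h = \CA \z_h$ differs from $\CA$ only on the union of balls $B(\bxi^{(j)}, 2h)$ around the singular set $\boldsymbol\Xi$. The natural strategy is to use Proposition~\ref{prop:onfull} with $A = \op_\a(\CA)$, $B = \op_\a(\CA_h)$, $P = \chi_\L$, and a suitable operator $J$ that commutes with both and kills their difference. The obvious candidate is $J = \op_\a(\psi_h)$ for a smooth cutoff $\psi_h$ that equals $1$ away from a neighbourhood of $\boldsymbol\Xi$ (say on $\{\dc(\bxi) > 4h\}$) and is supported in $\{\dc(\bxi) > h\}$ — but $\op_\a(\psi_h)$ is not a projection and need not exactly commute with $\op_\a(\CA)$, so a pseudo-differential version of Proposition~\ref{prop:onfull} is needed, or one must instead work directly. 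I expect the paper's actual argument is more hands-on: since $\CA - \CA_h = \CA \sum_j \t_h^{(j)}$ is supported in $\bigcup_j B(\bxi^{(j)}, 2h)$, one can estimate $D_\a(\CA) - D_\a(\CA_h)$ by controlling the two constituent pieces of $D_\a$ separately.

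\textbf{Key steps.} First, recall $D_\a(\CA) = \chi_\L f(W_\a(\CA, \L))\chi_\L - W_\a(f\circ\CA, \L)$. For the second (multiplication-type) term, $W_\a(f\circ\CA) - W_\a(f\circ\CA_h) = \chi_\L\op_\a(f\circ\CA - f\circ\CA_h)\chi_\L$, and $f\circ\CA - f\circ\CA_h$ is again supported in $\bigcup_j B(\bxi^{(j)}, 2h)$; after rescaling $\bxi \to h\bxi$ as in the proof of Lemma~\ref{lem:nonsm}, the operator $\chi_\L\op_\a(\cdot)\chi_\L$ restricted to such a symbol is unitarily a rescaled $\op_1$ with a compactly supported symbol, and Proposition~\ref{BS:prop} gives the $\BS_1$-bound of order $(\a h)^d \cdot h^{-1}\cdot\dots$ — one must be careful to extract exactly the power $(\a h)^{d-1}$, which suggests the bound comes not from the individual terms (which would only be $\BS_q$ with $q<1$ bounds) but from a cancellation. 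Thus the cleaner route is: apply Proposition~\ref{prop:onfull} with $A,B$ as above and $J = \op_\a(\psi_h)$, noting that $\psi_h$ can be chosen so that $\CA\psi_h = \CA_h\psi_h$ exactly (choose $\psi_h$ supported where $\z_h = 1$); the commutators $[A, J]$ and $[B,J]$ are then not zero but are supported symbolically away from $\boldsymbol\Xi$ with the scale $\dc(\bxi)\asymp h$, hence small. Actually the exact-commutation hypothesis forces a slightly different choice: take $\psi_h$ equal to $1$ on $\{\dc > 4h\}$, vanishing on $\{\dc < 2h\}$; then $(\CA - \CA_h)\psi_h = \CA(\sum_j\t_h^{(j)})\psi_h$, and since $\sum_j \t_h^{(j)}$ vanishes where $\psi_h = 1$ and $\psi_h$ vanishes where $\sum\t_h^{(j)} = 1$... this product is \emph{not} identically zero. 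So I would instead take $J$ to be the \emph{multiplication} operator $\chi_{\{\dc(\bxi) > 2h\}}$ acting on the Fourier side, i.e. a spectral projection of $-i\nabla$, which genuinely commutes with every $\op_\a(\CB)$ and satisfies $(\CA - \CA_h)\cdot(\text{that symbol}) = 0$ since $\CA = \CA_h$ where $\dc > 2h$; this $J$ is a genuine projection, so Proposition~\ref{prop:onfull} applies verbatim.

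\textbf{Carrying it out.} With $J = \mathcal{F}^{-1}\chi_{\{\dc(\bxi)>2h\}}\mathcal{F}$ (scaled appropriately), the right side of \eqref{eq:onfull} becomes $\|[(\mathds 1 - J)\op_\a(\CA), \chi_\L]\|_\s^\s + \|[(\mathds 1 - J)\op_\a(\CA_h), \chi_\L]\|_\s^\s + \|[J, \chi_\L]\|_\s^\s + \|[J, \chi_\L]\|_1$. Now $(\mathds 1 - J)\op_\a(\CA)$ has symbol $\CA(\bxi)\chi_{\{\dc(\bxi)\le 2h\}}$, supported in $\bigcup_j B(\bxi^{(j)}, 2h)$ with $|\CA| \lesssim 1$ there; rescaling $\bxi \mapsto h\bxi$ and invoking Proposition~\ref{BS:prop} (with $\chi_\L$ and a compactly supported plateau), the commutator $\BS_\s^\s$-norm is $\lesssim (\a h)^{d-1}$ for $\s$ close to $1$ — this is the crux and mirrors the $a_h^{(2)}$ estimate in Lemma~\ref{lem:nonsm}. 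The term $\|[J,\chi_\L]\|_1$ needs $J$ to be a reasonable Fourier multiplier; since $\chi_{\{\dc>2h\}}$ is an indicator of a set whose boundary has measure controlled by $h$, standard estimates give $\|[J,\chi_\L]\|_1 \lesssim \a^{d-1}\cdot(\text{measure of the shell}) \lesssim \a^{d-1}h^{d}\cdot h^{-1}\asymp (\a h)^{d-1}$ after the rescaling bookkeeping, and similarly for the $\BS_\s$-norm. Collecting the four terms, the bound \eqref{eq:compare} follows; by Proposition~\ref{prop:onfull} the implicit constant depends only on $\|A\|, \|B\|, \|J\| \le C$ and on $\1 f\1_2$, hence is uniform in $\CA$ and $\boldsymbol\Xi$ (the cardinality $N$ entering only through a finite sum over $j$). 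The main obstacle is verifying that the indicator-multiplier $J$ has the required $\BS_\s$ and $\BS_1$ commutator bounds with $\chi_\L$ with the correct $h$-dependence; if that proves awkward, one falls back to a smooth $\psi_h$ together with a pseudo-differential generalization of Proposition~\ref{prop:onfull}, at the cost of also controlling $\|[J, \op_\a(\CA)]\|_\s$, which by symbolic calculus carries an extra factor $\a^{-1}(\dc)^{-1}\asymp (\a h)^{-1}$ and is therefore harmless.
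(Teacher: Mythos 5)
Your instinct to apply Proposition~\ref{prop:onfull} with a Fourier-multiplier cutoff $J$ is exactly right, and is what the paper does. But you then talk yourself out of the simple version for two reasons that are both false, and the detour you take instead has a genuine gap.

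\textbf{The false premises.} You write that ``$\op_\a(\psi_h)$ is not a projection and need not exactly commute with $\op_\a(\CA)$''. Neither objection holds. First, Proposition~\ref{prop:onfull} does not require $J$ to be a projection — re-read the hypotheses: $J$ is merely a bounded operator with $[A,J]=[B,J]=0$ and $(A-B)J=0$. Second, $\op_\a(\psi_h)$ and $\op_\a(\CA)$ are \emph{both} Fourier multipliers (symbols depending only on $\bxi$), hence they commute \emph{exactly} — there is no symbolic-calculus remainder to worry about, and no ``pseudo-differential version'' of Proposition~\ref{prop:onfull} is needed. Your follow-up claim that $(\CA-\CA_h)\psi_h \neq 0$ is also incorrect: if $\psi_h$ vanishes on $\{\dc<2h\}$ and $\CA-\CA_h = \CA\sum_j\t_h^{(j)}$ vanishes on $\{\dc\ge 2h\}$ (because $\z_h=1$ there), the product vanishes identically.

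\textbf{The gap in what you actually propose.} Having rejected the smooth multiplier, you take $J$ to be the Fourier multiplier by the sharp indicator $\chi_{\{\dc(\bxi)>2h\}}$. The trace-class and $\BS_\s$ bounds for $[J,\chi_\L]$ that you then need are not accessible by the paper's machinery. Lemma~\ref{lem:nonsm} requires symbols in $\plainC\infty(\R^d\setminus\boldsymbol\Xi)$ satisfying \eqref{eq:Asing}, which a jump discontinuity on spheres of radius $2h$ around each $\bxi^{(j)}$ manifestly violates (the singularity is now supported on a codimension-one surface, not on isolated points — compare the remark after Theorem~\ref{thm:mainPhys} about the enhanced area law). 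Proposition~\ref{BS:prop} does not give it either: the Fourier transform of the indicator of a ball decays only like $|\bx|^{-(d+1)/2}$, which is not absolutely integrable for $d\ge2$, so the lattice-norm control you would need is not straightforward. Your heuristic ``$\a^{d-1}\cdot(\text{measure of the shell})$'' is not backed by any lemma in the paper. So the step $\|[J,\chi_\L]\|_1\lesssim(\a h)^{d-1}$ is asserted, not proved, and there is no clear route to it with these tools.

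\textbf{What the paper does.} Take $J=J_{2h}:=\op_\a(\z_{2h})$ with the \emph{smooth} cutoff $\z_{2h}$. The commutation hypotheses of Proposition~\ref{prop:onfull} hold trivially (Fourier multipliers commute), and $(\CA-\CA_h)\z_{2h}=0$ because $\z_h\z_{2h}=\z_{2h}$. The remaining terms $\|[(\mathds 1-J_{2h})\op_\a(\CA_h),\chi_\L]\|_\s^\s$, $\|[(\mathds 1-J_{2h})\op_\a(\CA),\chi_\L]\|_\s^\s$, and $\|[J_{2h},\chi_\L]\|_\s^\s$, $\|[J_{2h},\chi_\L]\|_1$ all involve \emph{smooth} symbols supported in $h$-neighbourhoods of the $\bxi^{(j)}$ (or, for $\mathds 1-J_{2h}$, satisfying bounds of the same type), so each is handled by writing the symbol as $\sum_j\CB^{(j)}_h$, shifting by $\bxi^{(j)}$ and rescaling $\bxi\to h\bxi$ to see that $\op_\a(\CB^{(j)}_h)$ is unitarily equivalent to $\op_{\a h}(\tilde\CB^{(j)}_h)$ with $\tilde\CB^{(j)}_h$ satisfying \eqref{eq:Asing} with arbitrary $\rho$, and then invoking Lemma~\ref{lem:nonsm} (not Proposition~\ref{BS:prop}). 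That yields $(\a h)^{d-1}$ in each case, uniformly in $\CA$ and $\boldsymbol\Xi$. Your proof as written does not close; once you accept that the smooth multiplier already satisfies the exact hypotheses, you are back on the paper's track.
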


\begin{proof}
First we use Proposition~\ref{prop:onfull} with 
\begin{align*}
	P  \coloneqq\chi_\L\,, \quad
	J = J_{2h} \coloneqq \op_\a (\z_{2h}), \quad 
	A = \op_\a(\CA)\,, \quad  B = \op_\a(\CA_h)\:.
\end{align*}
The norms of these operators~$P$ and~$J$ are bounded uniformly in~$h$ and,
since $\z_h\z_{2h}  = \z_{2h}$, we have~$(\CA - \CA_h) \z_{2h} = 0$.
Let us estimate the right-hand side of~\eqref{eq:onfull} by starting 
with a bound for 
\begin{align*}
	\|[(\mathds{1}-J_{2h})\op_\a(\CA_h), P]\|_{\s}^\s
\end{align*}
with an arbitrary~$\s\in (0, 1]$.  
In view of~\eqref{eq:ztil}, we have 
\begin{align*}
	(1-\z_{2h}(\bxi))\,\CA_h(\bxi)  = 
\sum_{j=1}^N\,  \CB^{(j)}_h(\bxi),\quad \textup{with}\quad 
 \CB^{(j)}_h(\bxi)
	= \CA(\bxi) \z_h(\bxi)\, \t^{(j)}_{2h}(\bxi)\:.
\end{align*}
Consider each commutator $[\op_\a(\CB^{(j)}_h), P]$ individually. 
First we rescale and shift the symbol $\CB^{(j)}_h$:
\begin{align*}
\tilde\CB^{(j)}_h(\bxi) :=  &\ \CB^{(j)}_h(\bxi^{(j)} + h \bxi)\\
	= &\ \CA(\bxi^{(j)} + h \bxi) \z_h(\bxi^{(j)} + h \bxi)\, \t^{(j)}_{2h}
	(\bxi^{(j)} + h \bxi)\:.
\end{align*}
The following unitary equivalence is easily checked:
\begin{align}\label{eq:unieq}
\op_\a(\CB^{(j)}_h) = e^{i\a \bxi^{j)}\cdot \bx} \, 
\op_{\a h}(\tilde\CB^{(j)}_h)\, e^{-i\a \bxi^{j)}\cdot \by},
\end{align}
By the definition of the cut-offs~$\t_h^{(j)}$ and~$\z_h$ we have 
\begin{align*}
|\nabla_{\bxi}^k\, \z_h(\bxi^{(j)} + h \bxi)\, 
\t^{(j)}_{2h}(\bxi^{(j)} + h \bxi)|
\lesssim 1,\quad k = 0, 1, \dots.
\end{align*}
Furthermore, since
\begin{align*}
\frac{1}{\dc(\bxi^{(j)} + h \bxi, \boldsymbol\Xi)}\lesssim 
\frac{1}{h\, \dc\big(\bxi, \widetilde{\boldsymbol\Xi}^{(j)}_h\big)},\quad 
\widetilde{\boldsymbol\Xi}^{(j)}_h = \frac{1}{h}\big(
\boldsymbol\Xi-\bxi^{(j)}\big),
\end{align*}
it follows from \eqref{eq:Asing} that
\begin{align*}
|\nabla_{\bxi}^k\, \CA(\bxi^{(j)} + h\bxi )|
\lesssim \frac{1}{\dc(\bxi, \widetilde{\boldsymbol\Xi}^{(j)}_h)^k},\quad |\bxi|<2h \:.
\end{align*}
Therefore, 
\begin{align*}
|\nabla_{\bxi}^k\, \tilde\CB^{(j)}_h\,(\bxi)|
\lesssim \frac{\chi_{2h}(\bxi)}{\dc\big(\bxi, \widetilde{\boldsymbol\Xi}^{(j)}_h\big)^k}, 
\end{align*}
where $\chi_{2h}$ is the indicator of the ball $\{\bxi: |\bxi|< 2h\}$, so $\tilde\CB^{(j)}_h$ 
satisfies \eqref{eq:Asing} with an arbitrary $\rho >0$ 
and the singular set $\widetilde{\boldsymbol\Xi}^{(j)}_h$. 
Now we can use Lemma \ref{lem:nonsm} and the relation \eqref{eq:unieq} 
to conclude that
\begin{align*}
\big\|[\op_\a( \CB^{(j)}_h), P]\big\|_{\s}^\s
	= \big\|[\op_{\a h}(\tilde \CB^{(j)}_h), \chi_\L]\big\|_{\s}^\s
	\lesssim (\a h)^{d-1} \:.
\end{align*}
Therefore, for each $\s\in (0, 1]$ we have 
\begin{align*}
	\|[(\mathds{1}-J_{2h})\op_\a(\CA_h), P]\|_{\s}^\s\lesssim (\a h)^{d-1},
\end{align*}
as required. Estimating~$\big\|[(\mathds{1}-J_{2h})\op_\a(\CA), P]\big\|_{\s}$ and 
$\|[J_{2h}, P]\|_\s = \|[\mathds{1} -J_{2h}, P]\|_\s~$, $\s \in (0, 1]$, is done in the same way.
Substituting the obtained bounds into~\eqref{eq:onfull}, we get the required estimate~\eqref{eq:compare}.
\end{proof} 

\section{Proof of Widom's formula for smooth symbols and non-smooth test functions} 
 \label{Sec:ProofSmooth}
We now prove Theorem~\ref{thm:main} for smooth symbols~$\CA$, and functions~$f$ satisfying 
Condition~\ref{cond:f2}.

\begin{thm}\label{thm:mainsmooth} 
Suppose that~$d\ge 2$. Let the function~$f$ satisfy Condition~\ref{cond:f2} 
with some~$\g\in (0, 1]$, and let~$\L$ be a bounded~$\plainC1$-region. Let 
$\CA\in\plainC\infty(\R^d; \C^{n\times n})$ be a Hermitian matrix-valued symbol such that 
\eqref{eq:wid} holds with some~$\rho$ such that~$\rho\g > d$. 
Then the formula~\eqref{eq:widom1} holds.
\end{thm}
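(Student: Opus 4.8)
The plan is to deduce the statement from Widom's formula for polynomials (Proposition~\ref{prop:widom}) by approximating the non-smooth test function $f$ on two scales and absorbing the errors into the Schatten--von Neumann estimates of Sections~\ref{Sec:Schatten}--\ref{Sec:EstDalpha}. As preliminary reductions, I would first note that, since $\op_\a(\CA)$ is unitarily equivalent to the Fourier multiplier with (bounded) symbol $\CA$, there is a fixed compact interval $I$ containing the spectra of $W_\a(\CA,\L)$, of all the matrices $\CA(\bxi)$, and of all the one-dimensional truncations $W_1(\CA(\hat\bxi;\,\cdot\,);\R_+)$; hence $D_\a(\CA,\L;u)$ and ${\sf B}(\CA;u)$ depend on $u$ only through $u\restriction I$. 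Multiplying $f$ by a fixed $\psi\in\co(\R)$ equal to $1$ near $I$, I may therefore assume that $f$ has compact support; it still satisfies Condition~\ref{cond:f2}. I would also use throughout that $D_\a(\CA,\L;u)=\CD(\op_\a(\CA),\chi_\L;u)$ (because $u(\op_\a(\CA))=\op_\a(u\circ\CA)$) and that $u\mapsto D_\a(\CA,\L;u)$, $u\mapsto\tr D_\a(\CA,\L;u)$ and $u\mapsto{\sf B}(\CA;u)$ are linear. If ${\sf T}=\emptyset$ then $f$ is affine and the claim is immediate from Proposition~\ref{prop:widom} and the remark after it, so I assume ${\sf T}=\{t_1,\dots,t_K\}\neq\emptyset$.

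The core step is a two-scale splitting $f=p+h_0+h_1+\dots+h_K$, valid on $I$. At the \emph{inner} scale $\d>0$ I would peel off the singularities: choose $\chi_l\in\co(\R)$ equal to $1$ near $t_l$, supported in $(t_l-\d,t_l+\d)$ (disjoint for $\d$ small) with $|\chi_l^{(k)}|\lesssim\d^{-k}$, and set $h_l:=(f-f(t_l))\chi_l$ and $g_\d:=f-\sum_{l=1}^{K}h_l$. A routine computation from Condition~\ref{cond:f2} (in the form $|f(t)-f(t_l)|\lesssim|t-t_l|^{\g}$ and $|f^{(k)}(t)|\lesssim|t-t_l|^{\g-k}$ near $t_l$) shows that each $h_l$ satisfies Condition~\ref{cond:f} with singular point $t_l$, support radius $R_l=\d$, and, crucially, $\1 h_l\1_2\lesssim 1$ \emph{uniformly in}~$\d$; while $g_\d\in\plainC2(\R)$ is compactly supported (it is constant near each $t_l$ and equals $f$ away from $\bigcup_l(t_l-\d,t_l+\d)$). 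At the \emph{outer} scale I would approximate $g_\d$ by a polynomial: fix $R_0$ (independent of $\d$) with $\supp f\cup\supp\psi\cup\bigcup_l(t_l-\d,t_l+\d)\subset(-R_0,R_0)\supset I$; for any $\eta>0$, by density of polynomials in $\plainC2[-R_0,R_0]$ choose a polynomial $p=p_{\d,\eta}$ with $\|g_\d-p\|_{\plainC2[-R_0,R_0]}<\eta$ and set $h_0:=(g_\d-p)\psi$, so that $h_0$ satisfies Condition~\ref{cond:f} with bounded support radius and $\1 h_0\1_2\lesssim\eta$. Since $\psi\equiv1$ on $I$ one then has $f=p+\sum_{l=0}^{K}h_l$ on $I$, and hence, by the spectral localization and linearity,
\begin{align*}
\tr D_\a(\CA,\L;f)&=\tr D_\a(\CA,\L;p)+\sum_{l=0}^{K}\tr D_\a(\CA,\L;h_l),\\
{\sf B}(\CA;f)&={\sf B}(\CA;p)+\sum_{l=0}^{K}{\sf B}(\CA;h_l),
\end{align*}
each term being finite: for $p$ by Proposition~\ref{prop:widom} and its remark, for the $h_l$ by Lemma~\ref{lem:dalpha} (with $q=1$) and Lemma~\ref{lem:mbound}. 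In particular $D_\a(\CA,\L;f)$ is trace class.

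It then remains to let $\a\to\infty$. Proposition~\ref{prop:widom} (and its remark, in case $p(0)\neq0$) gives $\a^{1-d}\tr D_\a(\CA,\L;p)\to{\sf B}(\CA;p)$. Choosing $\s\in(d/\rho,\g)$ --- nonempty because $\rho\g>d$ --- Lemma~\ref{lem:dalpha} with $q=1$ and $d\ge2$ yields $\|D_\a(\CA,\L;h_l)\|_1\lesssim\1 h_l\1_2\,R_l^{\g-\s}\,\a^{d-1}$ uniformly in $\a\gtrsim1$, so $\limsup_\a\a^{1-d}|\tr D_\a(\CA,\L;h_0)|\lesssim\eta$ and $\limsup_\a\a^{1-d}|\tr D_\a(\CA,\L;h_l)|\lesssim\d^{\g-\s}$ for $l\ge1$; similarly Lemma~\ref{lem:mbound} gives $|{\sf B}(\CA;h_0)|\lesssim\eta$ and $|{\sf B}(\CA;h_l)|\lesssim\d^{\g-\s}$. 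Altogether
\[ \limsup_{\a\to\infty}\big|\,\a^{1-d}\tr D_\a(\CA,\L;f)-{\sf B}(\CA;f)\,\big|\lesssim\eta+K\,\d^{\g-\s}, \]
and letting $\eta\searrow0$ (for fixed $\d$) and then $\d\searrow0$ gives $0$, which is~\eqref{eq:widom1}.

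Now that the genuinely hard analytic input --- the Schatten-class commutator and $D_\a$ bounds for non-smooth symbols and non-smooth functions (Proposition~\ref{prop:szego} together with Lemmas~\ref{lem:nonsm}, \ref{lem:dalpha}, \ref{lem:mbound}) --- is already in place, the remaining difficulty is largely one of organization: the decomposition must be set up so that each error goes to zero for a \emph{different} reason. The singular peels $h_l$ ($l\ge1$) vanish because their support shrinks, $R_l^{\g-\s}=\d^{\g-\s}\to0$, while $\1 h_l\1_2$ stays bounded --- this is precisely where $\s<\g$, and hence the hypothesis $\rho\g>d$, enters; whereas the polynomial remainder $h_0$ is made small directly in the norm $\1\cdot\1_2$. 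The three limits $\a\to\infty$, $\eta\to0$, $\d\to0$ must then be taken in that order. A secondary point to watch is that $f$, and therefore $g_\d$, is only $\plainC2$ rather than $\plainC\infty$, so Widom's formula cannot be invoked for $g_\d$ itself and the additional polynomial-approximation layer is genuinely necessary.
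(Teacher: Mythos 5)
Your proof is correct and follows the same two-scale strategy as the paper's, using the same key inputs (Proposition~\ref{prop:widom} for polynomials, Lemma~\ref{lem:dalpha} and Lemma~\ref{lem:mbound} for the error terms, and the choice $\s\in(d/\rho,\g)$ permitted by $\rho\g>d$). The organizational difference is that the paper proceeds in two sequential reductions --- Step~1 treats $\plainC2$ functions via polynomial approximation, Step~2 reduces Condition~\ref{cond:f2} to the $\plainC2$ case by peeling off the singularity at a single point $t_0$ and remarks that the multi-point case follows ``without loss of generality'' --- whereas you perform a single $K+2$-term decomposition $f=p+h_0+h_1+\cdots+h_K$ and then take the three limits $\a\to\infty$, $\eta\to0$, $\d\to0$ in one pass. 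Your version is a bit more explicit on two points the paper leaves implicit: the reduction from $K$ singular points to one (you subtract $f(t_l)$ and verify $\1 h_l\1_2\lesssim1$ uniformly in $\d$), and the cutoff $\psi$ needed to make $g_\d-p$ compactly supported so that Condition~\ref{cond:f} (and hence Lemma~\ref{lem:dalpha}) applies to $h_0$. The ideas and estimates are otherwise identical, and your accounting of which error vanishes for which reason (shrinking support $R_l=\d$ for $h_l$, $l\ge1$, versus small norm $\1 h_0\1_2\lesssim\eta$) matches the paper's.
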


\begin{proof}  
The proof is conducted in two steps following the idea of \cite{leschke-sobolev-spitzer}. 
In Step 1 we prove the theorem for 
$f\in \plainC2$, and in Step 2 we extend the result to the functions 
$f$ satisfying Condition~\ref{cond:f2}.

\underline{Step 1.} First suppose that~$f\in \plainC2(\R)$. 
Since the operator~$\op_\a(\CA)$ is bounded, 
we may assume that~$f\in\plainC2_0(\R)$. Thinking of~$f$ as a function 
satisfying 
Condition~\ref{cond:f} with a~$t_0$ outside its support, 
and with~$\gamma = 1$, 
we conclude that~$\1 f\1_2\lesssim \|f\|_{\plainC2}$, where~$\1 f\1_2$ is defined in~\eqref{eq:fnorm}. 
Let~$I$ be a closed interval containing the support of~$f$. 
Let~$f_\d, \d\in (0, 1]$, 
be a family of polynomials such that~$\|f-f_\d\|_{\plainC2(I)} \to 0$ as~$\d\to 0$.  
Estimate:
\begin{align*}
	\limsup_{\a\to\infty}\,\big|\a^{1-d}\,\tr D_\a(\CA; f ) - {\sf B}(\CA; f)\big|
	\le &\ \sup_{\a\ge 1}\,\a^{1-d}\,
	\big\|D_\a(\CA; f) - D_\a(\CA; f_\d)\big\|_1 \\
	&\ + \limsup_{\a\to\infty}\big|\a^{1-d}\,\tr D_\a(\CA; f_\d) - {\sf B}(\CA; f_\d)\big|\\
	&\ +\big|{\sf B}(\CA; f_\d) - {\sf B}(\CA; f)\big| \:.
\end{align*}
Since~$f_\d$ is a polynomial, we can apply 
Proposition~\ref{prop:widom} which implies that 
the second term equals zero for each~$\d >0$.  
Estimate the first and the third term using 
Lemmata~\ref{lem:dalpha} and~\ref{lem:mbound}:
\begin{gather*}
	\sup_{\a\ge 1}\,\a^{1-d}\, \big\| D_\a(\CA; f )  
	- \ D_\a(\CA; f_\d)\big\|_{1} = \sup_{\a\ge 1} \,\a^{1-d}\, \big\| D_\a(\CA; f - f_\d)\big\|_{1}
	\lesssim 
	\|f-f_\d\|_{\plainC2(I)}\to 0,\\
	\big|{\sf B}(\CA; f )  - \ {\sf B}(\CA; f_\d)\big|= | {\sf B}(\CA; f - f_\d)|
	\lesssim \|f-f_\d\|_{\plainC2(I)}\to 0 \:,
\end{gather*} 
as~$\d\to 0$. Therefore the theorem holds for~$f\in \plainC2(\R)$.

\underline{Step 2.} 
To complete the proof of the theorem we assume without loss of generality 
that~$f$ satisfies Condition~\ref{cond:f} with~$t_0=0$. 
Let~$\z\in \plainC\infty(\R)$ be such that~$\z(t) = 1$ if 
$|t|\ge 2$,~$\z(t) = 0$ if~$|t| \le 1$ and~$0\leq \z \leq 1$. Denote~$\z_r(t) = \z(t r^{-1})$, $r \in (0, 1]$. 
Split~$f$ in the sum
\begin{align*}
	f(t) = &\ g_r^{(1)}(t)+ g_r^{(2)}(t),\\
	g_r^{(1)}(t) = &\ f(t) \z_r(t),\quad  g_r^{(2)}(t) = f(t) (1-\z_r(t)) \:, 
\end{align*}
so that~$g_r^{(2)}$ is supported on the interval~$[-2r, 2r]$. Estimate:
\begin{align}
			\limsup_{\a\to\infty}\,\big|\a^{1-d}\,\tr D_\a(\CA; f ) - {\sf B}(\CA; f)\big|
		\le &\ \sup_{\a\ge 1}\,\a^{1-d}\,\big\|D_\a(\CA; f) - D_\a(\CA; g_r^{(1)})\big\|_1\notag\\
		& + \limsup_{\a\to\infty}\big|\a^{1-d}\,\tr D_\a(\CA;  g_r^{(1)}) - {\sf B}(\CA; g_r^{(1)})\big|\notag\\
		&+ \big|{\sf B}(\CA;  g_r^{(1)}) - {\sf B}(\CA; f)\big| \:. \label{eq:step2}
\end{align}
Since~$g_r^{(1)}\in \plainC2$, the second term on the right-hand side 
equals zero for each~$r >0$, as proved in Step 1.  
Calculating the derivatives   
\begin{align*}
	\big(g_r^{(2)}\big)'(t) &= f'(t)\:(1-\z_r)(t) - f(t)\:\z'(tr^{-1})\:r^{-1} \:, \\
	\big(g_r^{(2)}\big)''(t) &= f''(t)\:(1-\z_r)(t) 
	-2\:f'(t)\:\z'(tr^{-1})\:r^{-1}- f(t)\:\z''(tr^{-1})\:r^{-2} \:,
\end{align*}
we estimate
\begin{align*}
	\sup_{t \neq 0} \Big| \big(g_r^{(2)}\big)'(t)\Big||t|^{-\gamma+1} &\leq 
	\sup_{t \neq 0}\Big[ \big|f'(t)\big|\:|t|^{-\gamma+1} 
	+  \big|f(t)\big|\: |t|^{-\gamma}\: \big|\z'(tr^{-1})\big|\: \frac{|t|}{r} \Big] \:,\\
	\sup_{t \neq 0} \Big| \big(g_r^{(2)}\big)''(t)\Big||t|^{-\gamma+2} &\leq 
	\sup_{t \neq 0} \Big[ \big|f''(t)\big|\:|t|^{-\gamma+2}
	+ 2\: \big|f'(t)\big|\:|t|^{-\gamma+1}\: \big|\z'(tr^{-1})\big|\: \frac{|t|}{r}  \\
	&\phantom{\leq 
		\sup_{t \neq 0} \Big[ } + \big|f(t)\big|\:|t|^{-\gamma} \:\big|\z''(tr^{-1})\big|\: \Big(\frac{|t|}{r}\Big)^2 \Big]  \:.
\end{align*}
On the support of~$g_r^{(2)}$ we have~$\frac{|t|}{r}\leq 2$, so we conclude that
\[ 
\1 g_r^{(2)}\1_2  \lesssim \1 f \1_2\:. 
\]
Using this we can finally estimate the first and the third term on the right hand side of 
\eqref{eq:step2} using 
Lemmata~\ref{lem:dalpha} and~\ref{lem:mbound} with arbitrary~$\s\in (0, \g)$:
\begin{gather*}
	\sup_{\a\ge 1}\,
	\a^{1-d}\, \big\| D_\a(\CA; f )  - \ D_\a(\CA; g_r^{(1)})\big\|_{1}
	= \ \sup_{\a\ge 1}\,\a^{1-d}\, \| D_\a(\CA; g_r^{(2)})\|_{1}
	\lesssim 
	\1 f \1_2 \, r^{\g-\s}\to 0 \:,\\
	\big|{\sf B}(\CA; f )  - \ {\sf B}(\CA; g_r^{(1)})\big|= | {\sf B}(\CA; g_r^{(2)})|
	\lesssim 
	\1 f \1_2 \, r^{\g-\s}\to 0 \:,
\end{gather*} 
as~$r\to 0$. Therefore the right-hand side of~\eqref{eq:step2} equals zero. 

This completes the proof of the theorem. 
\end{proof}

\section{Proof of  Widom's formula for non-smooth symbols and non-smooth test functions}
\label{Sec:ProofGeneral}
In this section give the proofs of Theorems~\ref{thm:main}, \ref{thm:maineps} and of Proposition~\ref{prop:area}.

\subsection{Continuity of the asymptotic coefficient~${\sf B}(\CA; f)$}

Below the symbol~$\CA_h$ is as defined in~\eqref{eq:ar}. 

\begin{lem}\label{lem:bcont} 
Let $d\ge 2$, and let $f$ and~$\L$ be as in Theorem~\ref{thm:main}. 
Suppose that~$\CA$ satisfies \eqref{eq:Asing}. 
Then  
\begin{align*}
	{\sf B}(\CA_h; f)\to {\sf B}(\CA; f) \quad \textup{as~$h \to 0$} \:, 
\end{align*}
uniformly in~$\CA$, $\boldsymbol\Xi$  and~$f$. 
\end{lem}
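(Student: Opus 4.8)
The plan is to reduce to the one--dimensional coefficient $\mathfrak M(\be;\,\cdot\,;f)$ and to exploit that $\CA_h$ and $\CA$ differ only on a shrinking neighbourhood of the singular set $\boldsymbol\Xi$. By the definition~\eqref{eq:bcoef} of~${\sf B}$ it suffices to prove that $\mathfrak M(\be;\CA_h;f)\to\mathfrak M(\be;\CA;f)$ as $h\to0$, uniformly in $\be\in\mathbb S^{d-1}$ and in $\CA,\boldsymbol\Xi,f$; integrating over $\p\L$ then yields the statement.

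First I would record that the smoothed symbol $\CA_h=\CA\z_h$ from~\eqref{eq:ar} still satisfies~\eqref{eq:Asing} with the \emph{same} singular set $\boldsymbol\Xi$ and the \emph{same} implicit constants, uniformly in $h\in(0,1/4]$. This is a Leibniz--rule computation: on the support of any derivative $\nabla^k\z_h$ with $k\ge1$ one has $\dc(\bxi)\asymp h$ by the construction of the cut-offs, so the factors $h^{-k}$ supplied by~\eqref{eq:tz} are absorbed into $\dc(\bxi)^{-k}$. Consequently Lemma~\ref{lem:mbound} applies verbatim to $\CA_h$, and the bound~\eqref{eq:mbound} for $|\CM(\hat\bxi;\be;\CA_h;f)|$ holds uniformly in $h$. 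Since here $f$ is only assumed to satisfy Condition~\ref{cond:f2} rather than Condition~\ref{cond:f}, I would first reduce to compactly supported $f$ (the relevant operators are bounded), and then split $f$ into a fixed finite sum of pieces, each supported near one point $t_l$ together with one $\plainC2$ piece, exactly as in Step~2 of the proof of Theorem~\ref{thm:mainsmooth}; all estimates below are then applied to each piece and summed.

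Next comes the localisation in $\hat\bxi$. Since $\CM(\hat\bxi;\be;\CA;f)$ depends on $\CA$ only through its restriction to the line $\hat\bxi+\R\be$, since $\CA_h=\CA$ outside the $2h$--neighbourhood of $\boldsymbol\Xi$, and since the orthogonal projection $\R^d\to\BT_\be$ is $1$--Lipschitz, that line avoids the $2h$--neighbourhood of $\boldsymbol\Xi$ whenever $\dist(\hat\bxi,\widehat{\boldsymbol\Xi}_\be)\ge2h$; hence $\CM(\hat\bxi;\be;\CA_h;f)=\CM(\hat\bxi;\be;\CA;f)$ for such $\hat\bxi$. By~\eqref{eq:frakm} the difference $\mathfrak M(\be;\CA_h;f)-\mathfrak M(\be;\CA;f)$ is therefore an integral over the set $\{\hat\bxi\in\BT_\be:\dist(\hat\bxi,\widehat{\boldsymbol\Xi}_\be)<2h\}$, which is contained in a union of $N=\card\boldsymbol\Xi$ balls of radius $2h$ in $\BT_\be\cong\R^{d-1}$.

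It remains to bound this shrinking integral, using~\eqref{eq:mbound} as a dominating function for both $\CM$--terms. On those balls $\lu\hat\bxi\ru^{-\rho\s+1}\le1$ because $\rho\s>d$, and the log--weight in~\eqref{eq:mbound} is integrable with $\int_{B(0,2h)\subset\R^{d-1}}\log(|\by|^{-1}+2)\,d\by\lesssim h^{d-1}(1+\log(1/h))$; summing over the $N$ points gives $|\mathfrak M(\be;\CA_h;f)-\mathfrak M(\be;\CA;f)|\lesssim \1 f\1_2\,R^{\g-\s}\,N^2\,h^{d-1}\big(1+\log(1/h)+\log N\big)$, uniformly in $\be$, in $\CA$, and in $\boldsymbol\Xi$ with $\card\boldsymbol\Xi\le N$; this tends to $0$ as $h\to0$, and integrating over $\p\L$ finishes the proof. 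The main point requiring care --- more a bookkeeping issue than a genuine obstacle --- is to keep all constants uniform in $\boldsymbol\Xi$ throughout (in the sense of Remark~\ref{rem:uniform}), since the cardinality $N$ now enters the estimate, and to organise cleanly the reduction from Condition~\ref{cond:f} to Condition~\ref{cond:f2} via the finite decomposition of $f$ mentioned above.
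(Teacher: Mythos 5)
Your proof is correct and follows essentially the same route as the paper: the reduction to Condition~\ref{cond:f}, the observation that $\CA_h$ satisfies~\eqref{eq:Asing} uniformly in $h$ (so Lemma~\ref{lem:mbound} applies), the localisation $\CM(\hat\bxi;\be;\CA_h;f)=\CM(\hat\bxi;\be;\CA;f)$ off a $2h$-neighbourhood of $\widehat{\boldsymbol\Xi}_\be$, and integrating the dominating bound~\eqref{eq:mbound} over the shrinking set. The only extra material you supply is the explicit quantitative rate $h^{d-1}(1+\log(1/h))$ in place of the paper's bare dominated-convergence statement, which is a harmless refinement.
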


\begin{proof}  
It suffices to prove the lemma for functions~$f$ satisfying Condition~\ref{cond:f}. 
By definition~\eqref{eq:ar}, $\CA_h(\bxi)=\CA(\bxi)$ for all 
$\bxi\in\R^d$ such that~$\dist(\bxi, \boldsymbol\Xi)>2h$, and hence 
$\CA_h(\hat\bxi; t)=\CA(\hat\bxi; t)$ for all 
$\hat\bxi\in\BT_{\be}$ such that~$\dist(\hat\bxi, \widehat{\boldsymbol\Xi})>2h$. Consequently,
\begin{align*}
\CM(\hat\bxi; \be; \CA_h; f) = \CM(\hat\bxi; \be; \CA; f), \quad \dist(\hat\bxi, \widehat{\boldsymbol\Xi})>2h \:.
\end{align*}
Since~$\CA_h$ satisfy the bounds~\eqref{eq:Asing} uniformly in~$h>0$, 
using~\eqref{eq:mbound} with some~$\s\in (d\rho^{-1}, \g)$, 
for the corresponding integrals~\eqref{eq:frakm} we obtain the convergence  
\begin{align*}
|{\mathfrak M}(\be; \CA_h; f) - &\ {\mathfrak M}(\be; \CA; f)|\\
\lesssim &\ 
\1 f\1_2\, 
	\int\limits_{\hat\bxi\in \BT_{\be}, \dist(\hat\bxi, \widehat{\boldsymbol\Xi})<2h}\, 
	\lu\hat\bxi\ru^{-\rho \s + 1} 
	\log \big(\dist(\hat\bxi, \widehat{\boldsymbol\Xi})^{-1} + 2\big)\,  d\hat\bxi\to 0 \:,
\end{align*}
as~$h\to 0$, uniformly in~$\CA$, $\be\in\mathbb S^{d-1}$ and functions~$f$ satisfying 
Condition~\ref{cond:f}.   
This, in turn, 
implies the uniform convergence  
${\sf B}(\CA_h; f)\to {\sf B}(\CA; f)$, as claimed. 
\end{proof}

\begin{lem}\label{lem:cont1}
Let $d\ge 2$, and let $f$ and~$\L$ be as in Theorem~\ref{thm:main}. 
Suppose that the family~$\CA^{(\eps)}$ satisfies 
Condition~\ref{cond:Aeps}. Then 
\begin{align*}
{\sf B}(\CA^{(\eps)}; f)\to {\sf B}(\CA; f),\quad \textup{as}\quad \eps \searrow 0 \:.
\end{align*}  
\end{lem}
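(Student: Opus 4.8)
The plan is to reduce the statement to the already-established results for \emph{smooth} symbols — Lemma~\ref{lem:cont} (trace-norm continuity of $D_\a$ under uniform convergence of smooth symbols) and Theorem~\ref{thm:mainsmooth} (Widom's formula for smooth symbols) — exploiting that away from $\boldsymbol\Xi$ the family $\CA^{(\eps)}$ converges uniformly, so after excising a fixed neighbourhood of $\boldsymbol\Xi$ we are in the smooth setting, while the excised part is controlled, uniformly in $\eps$, by Lemma~\ref{lem:bcont}.

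First I would reduce to the case that $f$ satisfies the simpler Condition~\ref{cond:f}. Since $D_\a(\CA, \L; f)$, and hence $\CM(\hat\bxi; \be; \CA; f)$, $\mathfrak M(\be; \CA; f)$ and ${\sf B}(\CA; f)$, depends linearly on $f$, and since all the symbols involved are uniformly bounded (so that only the restriction of $f$ to a fixed compact interval enters), a partition of unity on $\R$ subordinate to the singular set ${\sf T}$ of $f$ writes $f = \sum_j f_j$ with each $f_j$ localized at a point $t_j$ satisfying Condition~\ref{cond:f} with the same $\g$, and the remaining summand lying in $\plainC2_0(\R)$, hence satisfying Condition~\ref{cond:f} with $\g = 1$ (the hypothesis $\rho\g > d$ with $\g\le 1$ forces $\rho > d$, so this summand is admissible as well). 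Thus it suffices to prove the lemma for a single $f$ obeying Condition~\ref{cond:f}, just as at the start of the proof of Lemma~\ref{lem:bcont}.

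Next, fix $h\in(0, 1/4]$ small and let $\z_h$ be the cut-off used in \eqref{eq:ar}, so that $\z_h$ vanishes on a neighbourhood of $\boldsymbol\Xi$, $\supp\z_h\subset\{\bxi:\dc(\bxi)\ge h\}$, and \eqref{eq:tz} holds. Put $\CA^{(\eps)}_h := \CA^{(\eps)}\z_h$ and $\CA_h := \CA\z_h$. These are genuine $\plainC\infty(\R^d; \C^{n\times n})$ symbols, and from \eqref{eq:tz}, the uniform (in $\eps$) bounds in Condition~\ref{cond:Aeps}, and $\dc(\bxi)\ge h$ on $\supp\z_h$, one checks that they satisfy \eqref{eq:wid} with implicit constants depending on $h$ but not on $\eps$; moreover the uniform convergence in Condition~\ref{cond:Aeps} (applied with radius $h/2$) gives $\sup_{\bxi}|\CA^{(\eps)}_h(\bxi) - \CA_h(\bxi)|\to 0$ as $\eps\searrow 0$. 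Hence Lemma~\ref{lem:cont} applies to the family $\{\CA^{(\eps)}_h\}$ with limit $\CA_h$ and yields $\sup_{\a\gtrsim1}\a^{1-d}\|D_\a(\CA^{(\eps)}_h) - D_\a(\CA_h)\|_1\to 0$. Since $\CA^{(\eps)}_h$ and $\CA_h$ are smooth, Theorem~\ref{thm:mainsmooth} identifies ${\sf B}(\CA^{(\eps)}_h; f) = \lim_{\a\to\infty}\a^{1-d}\tr D_\a(\CA^{(\eps)}_h, \L; f)$ and the analogous formula for $\CA_h$; combining these with $|\tr T|\le \|T\|_1$ gives ${\sf B}(\CA^{(\eps)}_h; f)\to {\sf B}(\CA_h; f)$ as $\eps\searrow 0$, for every fixed $h$.

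Finally I would conclude by the triangle inequality applied to the splitting
\begin{align*}
{\sf B}(\CA^{(\eps)}; f) - {\sf B}(\CA; f) = &\ \big({\sf B}(\CA^{(\eps)}; f) - {\sf B}(\CA^{(\eps)}_h; f)\big) + \big({\sf B}(\CA^{(\eps)}_h; f) - {\sf B}(\CA_h; f)\big)\\
&\ + \big({\sf B}(\CA_h; f) - {\sf B}(\CA; f)\big) \:.
\end{align*}
The middle term tends to $0$ as $\eps\searrow 0$ for each fixed $h$, by the step above. The first and the third term are bounded, uniformly in $\eps$, by a quantity $\omega(h)$ with $\omega(h)\to 0$ as $h\to 0$: this is exactly the content of Lemma~\ref{lem:bcont}, whose estimate is uniform over all symbols satisfying \eqref{eq:Asing} with the same implicit constants — in particular over the whole family $\{\CA^{(\eps)}\}$ and over $\CA = \CA^{(0)}$ itself. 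Letting first $\eps\searrow 0$ and then $h\to 0$ then yields ${\sf B}(\CA^{(\eps)}; f)\to {\sf B}(\CA; f)$. I expect the only genuine work to be the verification in the third paragraph — that the truncated symbols $\CA^{(\eps)}_h$ are smooth with $\eps$-uniform $\plainC\infty$-bounds and converge uniformly — since this is precisely what licenses the reduction to Lemma~\ref{lem:cont} and Theorem~\ref{thm:mainsmooth}; everything else is a rerun of estimates already at hand.
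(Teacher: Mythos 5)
Your proposal is correct and follows essentially the same route as the paper: assume $f$ satisfies Condition~\ref{cond:f}, split ${\sf B}(\CA^{(\eps)};f)-{\sf B}(\CA;f)$ using the truncations $\CA_h$ and $\CA^{(\eps)}_h$, control the outer terms uniformly in $\eps$ via Lemma~\ref{lem:bcont}, and handle the middle term by combining Theorem~\ref{thm:mainsmooth} with Lemma~\ref{lem:cont}. Your write-up is slightly more explicit in two places — the reduction from Condition~\ref{cond:f2} to Condition~\ref{cond:f} by a partition of unity, and the verification that the truncated symbols obey \eqref{eq:wid} with $\eps$-uniform constants and converge uniformly — both of which the paper leaves implicit, but the substance is identical.
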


\begin{proof} As before, assume that~$f$ satisfies Condition~\ref{cond:f}. 
For an arbitrary $h>0$ estimate:
\begin{align}\label{eq:br}
\limsup_{\eps \searrow 0}|{\sf B}(\CA^{(\eps)}; f) - {\sf B}(\CA; f)|
\le &\ \sup_{\eps\in (0, 1]}|{\sf B}(\CA^{(\eps)}; f) 
- {\sf B}(\CA^{(\eps)}_h; f)|\notag\\
+ &\ \limsup_{\eps \searrow 0}|{\sf B}(\CA^{(\eps)}_h; f) - {\sf B}(\CA_h; f)|
+  |{\sf B}(\CA_h; f) - {\sf B}(\CA; f)| \:.
\end{align} 
The symbols~$\CA^{(\eps)}_h$ and~$\CA_h$ are both smooth, and hence by Theorem~\ref{thm:mainsmooth}, for each of them the formula~\eqref{eq:widom1} holds. Furthermore, 
these symbols  satisfy~\eqref{eq:symbolcon} as~$\eps \searrow 0$. Therefore, by Lemma~\ref{lem:cont}, 
\begin{align*}
|{\sf B}(\CA^{(\eps)}_h; f) - {\sf B}(\CA_h; f)|
\le \sup_{\a\ge 1} \a^{1-d}\| D_\a(\CA^{(\eps)}_h) - D_\a(\CA_h)\|_1\to 0 \:,
\end{align*}
as~$\eps \searrow 0$. 
Thus the second term on the right-hand side of 
\eqref{eq:br} equals zero. 
By Lemma~\ref{lem:bcont}, 
the first and the third terms tend to zero as~$h\to 0$ uniformly in~$\eps$, and hence 
${\sf B}(\CA^{(\eps)}; f)\to {\sf B}(\CA; f)$, as claimed.
\end{proof}
  
 \subsection{Proof of Theorems~\ref{thm:main} and~\ref{thm:maineps}}
Since Theorem~\ref{thm:main} is a special case of Theorem~\ref{thm:maineps}, 
we concentrate on the proof of Theorem~\ref{thm:maineps}. Thus assume
that the family 
$\CA^{(\eps)}$ satisfies Condition~\ref{cond:Aeps}. 
For an arbitrary~$h\in (0, 1]$ we can write, 
as~$\a\to\infty$ and~$\eps \searrow 0$, 
\begin{align*}
\limsup &\,  \big|\a^{1-d}\,\tr\, D_\a (\CA^{(\eps)}; f) 
- {\sf B}(\CA; f)\big|\\
\le  &\ \limsup_{\a \to \infty}\,
\sup_{\eps \in [0, 1]}\,\a^{1-d}\| D_\a (\CA^{(\eps)}; f) 
- D_\a (\CA^{(\eps)}_h; f)\|_{1} \\
&\ + \limsup_{\eps \searrow 0}\,\sup_{\a\ge 1}\a^{1-d}\,\| D_\a (\CA^{(\eps)}_h; f) 
- D_\a (\CA_h; f) \|_{1} \\
&\ + \limsup_{\a\to\infty}\big|\a^{1-d} \,\tr\, D_\a (\CA_h; f) - {\sf B}(\CA_h; f)\big| \\
&\ + |{\sf B}(\CA_h; f) - {\sf B}(\CA; f)| \:.
\end{align*}
By  definition~\eqref{eq:ar} and by~\eqref{eq:Asing}, 
$\CA_h$ and~$\CA_h^{(\eps)}$ satisfy~\eqref{eq:wid} uniformly in 
$\eps\in [0, 1]$ (but not in~$h$!) and in view of Condition~\ref{cond:Aeps}, we have 
\begin{align*}
\sup_{\bxi}|\CA_h^{(\eps)}(\bxi) - \CA_h (\bxi)|\to 0,\quad\textup{as}\quad \eps \searrow 0 \:.  
\end{align*} 
Thus the second term in the bound above equals zero 
by virtue of Lemma~\ref{lem:cont}. The third term equals zero due to Theorem~\ref{thm:mainsmooth}. 
By Lemma~\ref{lem:compare}, 
the first term 
is bounded by~$h^{d-1}$ uniformly in~$\eps$, so that  
\begin{align*}
\limsup\,\big|\a^{1-d}\,\tr\, D_\a (\CA^{(\eps)}; f) 
- {\sf B}(\CA; f)\big| 
\lesssim  h^{d-1} + |{\sf B}(\CA_h; f) - {\sf B}(\CA; f)|,\ \quad \a\to\infty, \eps \searrow 0 \:.
 	\end{align*}
Since~$h \in (0, 1]$ is arbitrary, taking~$h\to 0$ and using Lemma~\ref{lem:bcont} we obtain 
\eqref{eq:double} thereby completing the proof of Theorem~\ref{thm:maineps}, and hence that 
of Theorem~\ref{thm:main}. 
\qed

\subsection{Proof of Proposition~\ref{prop:area}} 
The coefficient~$\mathfrak M(\be; \CA; f)$ is finite by~\eqref{eq:mathfr}. 
Let~$\be, \bb\in \mathbb S^{d-1}$ be two arbitrary unit vectors. 
Let~$\BR \in \mathrm{SO}(d)$ be a matrix such that  
$\BR\, \be = \bb$, 
and let~$\BQ = \BQ_{\BR}\in \mathrm{SU}(n)$ be 
such that~\eqref{eq:inva} holds. Thus
\begin{align*}
\CA(\hat\bxi; t) = \CA(\hat\bxi + t \be) = \BQ\, \CA(\BR\, \hat\bxi + t \bb) \BQ^{-1} \:,
\end{align*}
and hence, by cyclicity of trace, $\CM(\hat\bxi; \be; \CA; f) = \CM(\BR\,\hat\bxi; \bb; \CA; f)$. Integrating in~$\hat\bxi$, we get
\begin{align*}
\mathfrak M(\be; \CA; f) = &\ \frac{1}{(2\pi)^{d-1}}
\, \int_{\BT_\be} \CM(\mathbf R\, \hat\bxi; \bb; \CA; f)\, d\hat\bxi \\
 = &\ \frac{1}{(2\pi)^{d-1}}
\, \int_{\BT_\bb} \CM(\hat\bxi; \bb; \CA; f)\, d\hat\bxi = \mathfrak M(\bb; \CA; f) \:. 
 \end{align*}
Thus~$\mathfrak M(\be; \CA; f)$ is indeed $\be$-independent. Now it is clear that the formula 
\eqref{eq:bcoef} rewrites as~\eqref{eq:split}, as claimed.  
\qed

\section{Positivity of the coefficient~${\sf B}(\CA; f)$}
\label{sec:Positivity} 
The goal of this section is to investigate under which conditions on the function~$f$ 
and on the matrix-valued symbol~$\CA$ the asymptotic coefficient~${\sf B}(\CA; f)$ defined in 
\eqref{eq:bcoef} is strictly positive. 

\subsection{An abstract result}
Our starting point is the following well-known abstract fact from 
%
%
\cite{Berezin}.
Below~$\mathcal{H}$ is a 
complex separable Hilbert space, 
$P$ an orthogonal projection on~$\mathcal{H}$ and~$A$ a self-adjoint operator on~$\mathcal{H}$. 
The operator~$\CD(A, P; f)$ is defined in~\eqref{eq:cd}.

\begin{prop}
\label{PropPos}
Suppose that the spectrum of~$A$ is contained in the interval~$I \subset \R$,
and~$f: I \rightarrow \R$ is a concave function. 
 Assume that~$\CD(A, P; f)$ is 
 trace class and that~$PAP$ compact. Then~$\tr \CD(A, P; f) \geq 0$.
\end{prop}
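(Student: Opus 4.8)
The plan is to diagonalize $PAP$ on the range of $P$ and reduce the statement to a pointwise consequence of concavity. Since $PAP$ is compact and self-adjoint, its restriction to the subspace $\ran P$ admits an orthonormal eigenbasis $\{\phi_n\}$, say $PAP\,\phi_n=\mu_n\phi_n$ with $\mu_n\in\R$; completing $\{\phi_n\}$ by an orthonormal basis of $\ran(\mathds 1-P)$ gives an orthonormal basis of $\mathcal H$. Because $P\phi_n=\phi_n$ we have $\mu_n=\langle\phi_n,A\phi_n\rangle$, so each $\mu_n$ lies in the closed convex hull of $\sigma(A)\subseteq I$, hence in $I$, and $f(\mu_n)$ is well defined. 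As $\CD(A,P;f)=P(\cdots)P$ is self-adjoint, trace class (by hypothesis), and annihilates $\ran(\mathds 1-P)$, its trace equals the absolutely convergent sum $\tr\CD(A,P;f)=\sum_n\langle\phi_n,\CD(A,P;f)\phi_n\rangle$. I stress that I do not need $Pf(PAP)P$ or $Pf(A)P$ to be individually trace class — only their difference — and it is precisely this that lets me pass to the basis sum and then argue term by term.

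Next I would evaluate the diagonal entries. Since $\phi_n$ is an eigenvector of $PAP$, the continuous functional calculus gives $f(PAP)\phi_n=f(\mu_n)\phi_n$, whence $\langle\phi_n,\CD(A,P;f)\phi_n\rangle=f(\mu_n)-\langle\phi_n,f(A)\phi_n\rangle$. If $\mu_n$ is interior to $I$, concavity of $f$ yields a supporting line: there are $a,b\in\R$ with $f(t)\le a+bt$ for all $t\in I$ and $f(\mu_n)=a+b\mu_n$; applying the spectral theorem to $A$ (whose spectrum lies in $I$) gives $f(A)\le a\,\mathds 1+bA$ as quadratic forms, so $\langle\phi_n,f(A)\phi_n\rangle\le a+b\,\langle\phi_n,A\phi_n\rangle=a+b\mu_n=f(\mu_n)$. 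If instead $\mu_n$ is an endpoint of $I$, then $\mu_n=\langle\phi_n,A\phi_n\rangle$ equals $\inf\sigma(A)$ or $\sup\sigma(A)$, which forces $\phi_n$ to be an eigenvector of $A$ with eigenvalue $\mu_n$; then $\langle\phi_n,f(A)\phi_n\rangle=f(\mu_n)$. In both cases $\langle\phi_n,\CD(A,P;f)\phi_n\rangle\ge0$.

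Summing the nonnegative diagonal entries then gives $\tr\CD(A,P;f)\ge0$, which is the claim. An equivalent route for the middle step is Jensen's inequality: $\langle\phi_n,f(A)\phi_n\rangle=\int_I f(\lambda)\,d\langle\phi_n,E_A(\lambda)\phi_n\rangle\le f\!\big(\int_I\lambda\,d\langle\phi_n,E_A(\lambda)\phi_n\rangle\big)=f(\mu_n)$, since $\langle\phi_n,E_A(\cdot)\phi_n\rangle$ is a probability measure on $I$ and $f$ is concave. There is no serious obstacle here — this is essentially the abstract fact of Berezin cited in the text — and the only points demanding care are the bookkeeping just mentioned (that every $\mu_n\in I$, and that trace-class-ness of the difference alone suffices to split the trace into a basis sum) together with the harmless endpoint case in the application of concavity.
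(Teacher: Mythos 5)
The paper states this proposition without proof, citing Berezin (1972), so there is no internal argument to compare against. Your reconstruction is correct and is essentially the standard proof of Berezin's trace inequality: diagonalize the compact self-adjoint operator $PAP$ on $\ran P$, compute the trace in that eigenbasis (augmented by any orthonormal basis of $\ran(\mathds 1 - P)$, on which $\CD(A,P;f)$ vanishes), and reduce each diagonal entry to the scalar Jensen inequality for the probability measure $\langle \phi_n, E_A(\cdot)\phi_n\rangle$ supported on $\sigma(A)\subseteq I$ with barycenter $\mu_n$. The two technical observations you flag explicitly are exactly the ones that matter: trace-class of the difference alone is what licenses the basis sum $\tr\CD = \sum_n \langle\phi_n,\CD\phi_n\rangle$, and the inclusion $\mu_n\in[\min\sigma(A),\max\sigma(A)]\subseteq I$ is what makes $f(\mu_n)$ and the identity $f(PAP)\phi_n=f(\mu_n)\phi_n$ meaningful. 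Your endpoint discussion is also sound, though the Jensen route you mention at the end handles interior points and endpoints uniformly and makes the supporting-line case split unnecessary; it is the cleanest way to close.
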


Using this proposition we can prove the following bound in the spirit of 
\cite[Theorem~A.1]{laptev_safarov}. In the theorem below the space 
$\plainW{2,\infty}_{\tiny \rm loc}(I)$ and the notion 
of essential supremum ($\esssup$) are understood in the sense of Lebesgue measure.

\begin{thm}
\label{thm:ls}
Suppose that the spectrum of~$A$ is contained in the interval~$I \subset \R$ and that~$AP\in\BS_2$
and~$\CD(A, P; f)\in \BS_1$. 
Assume also that 
$f: I \rightarrow \R$ is a~$\plainW{2,\infty}_{\tiny \rm loc}(I)$-function such that 
\begin{align*}
\esssup_{t\in I} f''(t) = - k_0,\quad \textup{for some~$k_0 >0$}\:.
\end{align*}
Then, with the notation~$f_0(t) = - \frac{1}{2}\, t^2$, we have 
\begin{align}\label{eq:qua}
\tr \CD(A, P; f)\ge k_0 \tr \CD(A, P; f_0) = \frac{k_0}{2}\,\|(\mathds{1}-P) A P\|_2^2.
\end{align}
\end{thm}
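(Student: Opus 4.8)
The idea is to apply Proposition~\ref{PropPos} not to $f$ itself but to the function $g(t) := f(t) - k_0 f_0(t) = f(t) + \tfrac{k_0}{2} t^2$. First I would check that $g$ inherits enough regularity and, crucially, concavity on $I$: since $f \in \plainW{2,\infty}_{\tiny \rm loc}(I)$ we have $g'' = f'' + k_0 \le 0$ almost everywhere by the hypothesis $\esssup_{I} f'' = -k_0$, and a $\plainW{2,\infty}_{\tiny \rm loc}$ function with a.e.\ nonpositive second derivative is concave on $I$. So $g$ is a concave (locally Lipschitz-derivative) function on $I$.

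Next I would verify that the operator-theoretic hypotheses of Proposition~\ref{PropPos} are met for $g$. The map $f \mapsto \CD(A,P;f)$ is linear in $f$, so
\[
\CD(A,P;g) = \CD(A,P;f) - k_0\, \CD(A,P;f_0).
\]
By assumption $\CD(A,P;f) \in \BS_1$. For the quadratic piece one computes directly, using $f_0(t) = -\tfrac12 t^2$, that
\[
\CD(A,P;f_0) = P f_0(PAP) P - P f_0(A) P = -\tfrac12\bigl(PAPAP - PA^2P\bigr) = \tfrac12\, P A (\mathds 1 - P) A P = \tfrac12\, \bigl((\mathds 1 - P) A P\bigr)^*\bigl((\mathds 1 - P) A P\bigr),
\]
which is trace class precisely because $AP \in \BS_2$ forces $(\mathds 1 - P) A P \in \BS_2$, and its trace equals $\tfrac12 \|(\mathds 1 - P) A P\|_2^2$. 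Hence $\CD(A,P;g) \in \BS_1$. Also $PAP$ is compact: $PAP = (PA)(PA)^*$ up to the obvious manipulation, and $AP \in \BS_2$ makes $PAP$ trace class, in particular compact. And $\mathrm{spec}(A) \subset I$ is given.

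Then Proposition~\ref{PropPos} applied to the concave $g$ on $I$ yields $\tr \CD(A,P;g) \ge 0$, i.e.
\[
\tr \CD(A,P;f) \;\ge\; k_0 \,\tr \CD(A,P;f_0) \;=\; \frac{k_0}{2}\,\|(\mathds 1 - P) A P\|_2^2,
\]
which is exactly \eqref{eq:qua}. The only genuinely delicate point is the passage from ``$f'' \le -k_0$ a.e.'' to concavity of $g$ and the applicability of Berezin's result (Proposition~\ref{PropPos}) to a function that is merely $\plainW{2,\infty}_{\tiny \rm loc}$ rather than $\plainC2$; I would handle this either by a standard mollification argument (approximate $g$ by smooth concave functions $g_\delta$ on compact subintervals, noting that $\mathrm{spec}(A)$ may be taken inside a compact interval after truncating $f$ suitably, and pass to the limit using norm-continuity of $h \mapsto \CD(A,P;h)$ in a $\plainC2$-type norm together with Lemma~\ref{lem:normcon}), or by invoking the fact that Berezin's inequality is classically stated for continuous concave functions on an interval, which $g$ certainly is. This approximation step is where the bulk of the (routine) technical care goes; everything else is bookkeeping with the linearity of $\CD$ and the explicit formula for $\CD(A,P;f_0)$.
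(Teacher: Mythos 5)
Your proof is correct and follows essentially the same route as the paper: set $g=f-k_0 f_0$, note $g''\le 0$ a.e.\ so $g$ is concave, apply Berezin's inequality (Proposition~\ref{PropPos}) to $g$, and compute $\tr\CD(A,P;f_0)=\tfrac12\|(\mathds 1-P)AP\|_2^2$ explicitly. The ``delicate point'' you flag at the end is not actually an issue: Proposition~\ref{PropPos} is stated for concave functions (not $\plainC2$ functions), and a $\plainW{2,\infty}_{\tiny \rm loc}$ function with nonpositive second derivative a.e.\ is concave, so no mollification argument is needed.
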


\begin{proof}
We essentially follow the proof of \cite[Theorem~A.1]{laptev_safarov}.  
Denote~$\CD(f) = \CD(A, P; f)$ and 
\begin{align*}
g(t) = f(t) - k_0 f_0(t) \:, 
\end{align*}
so that~$\esssup_I g''(t) = 0$. Thus~$g$ is concave on~$I$, and by Proposition~\ref{PropPos}, 
\begin{align*}
\tr\,\CD(f) - k_0 \tr\,\CD(f_0) = \tr\,\CD(g) \ge 0 \:.
\end{align*}
The first trace on the left-hand side is finite by assumption, and the second one equals 
\begin{align*}   
 2\tr\,\CD(f_0) = \tr\big( P A^2 P - PAPAP\big) 
= \tr PA(\mathds{1} - P) AP = \|(\mathds{1}-P) A P\|_2^2 \:,
\end{align*} 
and hence it is also finite. This leads to the required bound~\eqref{eq:qua}.
\end{proof}

\subsection{Application to pseudo-differential operators}
Now we can apply the above results to the operator~$D_\a(\CA; f)$. 
We do not intend to consider the most general functions~$f$ satisfying Condition~\ref{cond:f2} 
with some~$\g\in (0, 1]$,  
but assume that~$f$ is real-valued, the set~$\mathbf T$ consists of two points, i.e. 
$\mathbf T = \{t_1, t_2\}$ with~$t_1 < t_2$, and that 
\begin{align*}
\esssup_{t\in (t_1, t_2)} \, f''(t) = -k_0,\ \quad \textup{where}\ k_0 >0 \:.
\end{align*}  

\begin{thm} \label{thm:lspdo}
Let~$d\ge 2$, and let~$f$ be as described above. 
Let~$\L\subset\R^d$ be a bounded region with a~$\plainC1$-boundary. 
Suppose that~$\CA$ is a non-zero Hermitian matrix-valued symbol 
that 
satisfies~\eqref{eq:Asing} with some~$\rho >0$ such that~$\rho\g > d$. 
Assume also that 
for all~$\bxi\notin \boldsymbol\Xi$ the spectrum of~$\CA(\bxi)$ belongs to the interval~$[t_1, t_2]$. 
Then~${\sf B}(\CA; f) >0$. 
\end{thm}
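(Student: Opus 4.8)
The plan is to derive the strict positivity from a Berezin-type concavity bound (in the spirit of Theorem~\ref{thm:ls}), combined with the one-parameter asymptotics of Theorem~\ref{thm:main}. First I would set $f_0(t):=-\tfrac12 t^2$ and $g:=f-k_0 f_0$. Since $f$ is continuous and $f''\le -k_0$ a.e.\ on $(t_1,t_2)$, the function $g$ is continuous on $I:=[t_1,t_2]$ and satisfies $g''=f''+k_0\le 0$ a.e.\ there, hence is concave on $I$. Fix $\a\gtrsim 1$ and put $A:=\op_\a(\CA)$, $P:=\chi_\L$. Then $A$ is unitarily equivalent to multiplication by $\CA(\bxi)$, so $\mathrm{spec}(A)\subseteq[t_1,t_2]=I$; $PAP=W_\a(\CA,\L)$ is Hilbert-Schmidt (because $\L$ is bounded and $\CA\in\plainL2$), hence compact; $(\mathds{1}-P)AP=[\op_\a(\CA),\chi_\L]\chi_\L\in\BS_2$ by Lemma~\ref{lem:nonsm}; and $\CD(A,P;f)=D_\a(\CA,\L;f)\in\BS_1$ (implicit in Theorem~\ref{thm:main}). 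Consequently $\CD(A,P;f_0)=\tfrac12 PA(\mathds{1}-P)AP\in\BS_1$, so $\CD(A,P;g)\in\BS_1$, and Proposition~\ref{PropPos} applied to the concave function $g$ gives $\tr\CD(A,P;g)\ge 0$. Using linearity in $f$ together with $\tr\CD(A,P;f_0)=\tfrac12\|(\mathds{1}-\chi_\L)\op_\a(\CA)\chi_\L\|_2^2$ (as in the proof of Theorem~\ref{thm:ls}), this yields
\begin{align*}
\tr D_\a(\CA,\L;f)=k_0\tr\CD(A,P;f_0)+\tr\CD(A,P;g)\ge\frac{k_0}{2}\|(\mathds{1}-\chi_\L)\op_\a(\CA)\chi_\L\|_2^2\ge 0 .
\end{align*}

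Next I would let $\a\to\infty$. Theorem~\ref{thm:main} gives $\a^{1-d}\tr D_\a(\CA,\L;f)\to{\sf B}(\CA;f)$. For the right-hand side I would pick $\tilde f_0\in\plainC2_0(\R)$ with $\tilde f_0=f_0$ on $I$; since $A$ and $\chi_\L A\chi_\L$ have spectra in $I$ we have $\CD(A,P;\tilde f_0)=\CD(A,P;f_0)$, and $\tilde f_0\in\plainC2_0(\R)$ satisfies Condition~\ref{cond:f2} (with $\g=1$; recall $\rho>d$), so Theorem~\ref{thm:main} also applies to $\tilde f_0$ and gives $\tfrac12\a^{1-d}\|(\mathds{1}-\chi_\L)\op_\a(\CA)\chi_\L\|_2^2=\a^{1-d}\tr D_\a(\CA,\L;\tilde f_0)\to{\sf B}(\CA;\tilde f_0)$. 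Combining these, ${\sf B}(\CA;f)\ge k_0\,{\sf B}(\CA;\tilde f_0)\ge 0$, so it remains to prove ${\sf B}(\CA;\tilde f_0)>0$.

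For this I would unfold ${\sf B}(\CA;\tilde f_0)$ via \eqref{eq:bcoef}, \eqref{eq:frakm}. The identity $\tr\CD(\,\cdot\,;f_0)=\tfrac12\|(\mathds{1}-P)(\,\cdot\,)P\|_2^2$, applied now in one dimension with $P=\chi_+$ and the operator $\op_1(\CA(\hat\bxi;\,\cdot\,))$ (whose spectrum, and that of its compression, lie in $I$), gives
\begin{align*}
\CM(\hat\bxi;\be;\CA;\tilde f_0)=\tr\CD(\op_1(\CA(\hat\bxi;\,\cdot\,)),\chi_+;f_0)=\tfrac12\|\chi_-\op_1(\CA(\hat\bxi;\,\cdot\,))\chi_+\|_2^2\ge 0 ,
\end{align*}
whence $\mathfrak M(\be;\CA;\tilde f_0)\ge 0$ and ${\sf B}(\CA;\tilde f_0)=\int_{\p\L}\mathfrak M(\bn_\bx;\CA;\tilde f_0)\,dS_\bx$. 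The Hilbert-Schmidt operator $\chi_-\op_1(b)\chi_+$, $b:=\CA(\hat\bxi;\,\cdot\,)$, has integral kernel $\chi_-(x)K(x-y)\chi_+(y)$ with $K(s):=\tfrac{1}{2\pi}\int_\R e^{is\xi}b(\xi)\,d\xi$, so it vanishes iff $K$ vanishes a.e.\ on $\R_-$; but $b$ is Hermitian, so $K(-s)=K(s)^*$ a.e., and then $K$ vanishing on $\R_-$ forces $K\equiv 0$, i.e.\ $b\equiv 0$. Hence $\CM(\hat\bxi;\be;\CA;\tilde f_0)=0$ only if $\CA(\hat\bxi+t\be)=0$ for a.e.\ $t$; since $\CA\not\equiv 0$, Fubini gives a set of positive measure of $\hat\bxi\in\BT_\be$ on which $\CM(\hat\bxi;\be;\CA;\tilde f_0)>0$, so $\mathfrak M(\be;\CA;\tilde f_0)>0$ for \emph{every} $\be\in\mathbb S^{d-1}$, in particular for $\be=\bn_\bx$, $\bx\in\p\L$. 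As $\bx\mapsto\mathfrak M(\bn_\bx;\CA;\tilde f_0)$ is non-negative, measurable (and bounded by Lemma~\ref{lem:mbound}) and everywhere strictly positive, while $\mathrm{vol}_{d-1}(\p\L)>0$, this gives ${\sf B}(\CA;\tilde f_0)>0$, and therefore ${\sf B}(\CA;f)\ge k_0\,{\sf B}(\CA;\tilde f_0)>0$.

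The steps I expect to be most delicate are, first, the bookkeeping of Schatten-class memberships needed to invoke Proposition~\ref{PropPos} for the non-smooth symbol $\CA$ (in particular $(\mathds{1}-P)AP\in\BS_2$ via Lemma~\ref{lem:nonsm}), together with the need to circumvent the fact that $f\notin\plainW{2,\infty}_{\mathrm{loc}}$ near $t_1,t_2$ — which is why I apply Proposition~\ref{PropPos} to the merely concave $g$ rather than quoting Theorem~\ref{thm:ls}, and replace the polynomial $f_0$ by a compactly supported $\plainC2$ surrogate so that Theorem~\ref{thm:main} is applicable; and second, the Hermitian-symmetry identity $K(-s)=K(s)^*$, which is the genuinely new input: it is what shows the one-dimensional ``corner'' operator $\chi_-\op_1(b)\chi_+$ is non-zero whenever $b\not\equiv 0$, turning the sign inequality into a strict one.
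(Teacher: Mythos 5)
Your proof is correct and follows essentially the same route as the paper: reduce to the concave case via Berezin's result (Proposition~\ref{PropPos}), pass to the asymptotic coefficient through Theorem~\ref{thm:main}, and then show strict positivity of the quadratic-reference coefficient via the Hermitian symmetry of the one-dimensional Wiener--Hopf ``corner'' operator. The paper packages the first step as an invocation of Theorem~\ref{thm:ls} and the last step as a pair of lemmata (Lemmata~\ref{lem:nonvanish} and~\ref{lem:posf}); you unfold Theorem~\ref{thm:ls} and apply Proposition~\ref{PropPos} directly to $g=f-k_0 f_0$, and you use a Fubini/positive-measure argument where the paper works on a small ball of $\hat\bxi$'s — both cosmetic variants of the same idea. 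Two aspects of your write-up are genuine improvements in care rather than different mathematics: (i) you notice that $f$ need not be $\plainW{2,\infty}_{\mathrm{loc}}$ up to the endpoints $t_1,t_2$ (since $f''$ may blow up there, as it does for $\eta_\varkappa$), so applying Proposition~\ref{PropPos} directly to the concave $g$ avoids a literal reading of the hypothesis of Theorem~\ref{thm:ls}; and (ii) you explicitly replace the polynomial $f_0$ by a compactly supported $\plainC2$ surrogate $\tilde f_0$ so that Theorem~\ref{thm:main} is formally applicable — the paper does this implicitly (cf.\ Step~1 of Theorem~\ref{thm:mainsmooth}). Your identification of $\check{\CA}(-x)=\check{\CA}(x)^*$ as the key nonvanishing mechanism is exactly the paper's Lemma~\ref{lem:nonvanish}. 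No gaps.
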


We precede the proof with two lemmata.

\begin{lem}\label{lem:nonvanish}
Let~$\CA\in \plainC2(\R; \C^{n\times n})$, be a Hermitian matrix-valued symbol satisfying
\begin{align}\label{eq:od}
\left|\frac{d^k}{d\xi^k} \CA(\xi)\right|\lesssim \lu\xi\ru^{-\rho} \:,
\end{align}
for some~$\rho >1$ and~$k = 0, 1, 2$. Then the operator~$\chi_{\R_-} \op_1(\CA) \chi_{\R_+}$ is 
Hilbert-Schmidt. 

If~$\CA(\xi)$ is a non-zero operator function, i.e.\ if 
$\CA$ is not identically zero matrix, then  
the Hilbert-Schmidt norm~$\|\chi_{\R_-} \op_1(\CA) \chi_{\R_+}\|_2$ 
is strictly positive.  
\end{lem}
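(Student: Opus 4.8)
The plan is to compute the Hilbert--Schmidt norm of $T := \chi_{\R_-}\op_1(\CA)\chi_{\R_+}$ explicitly in terms of the convolution kernel of $\op_1(\CA)$, and then read off both finiteness and strict positivity from elementary properties of that kernel.

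First I would introduce $g(z) = \frac{1}{2\pi}\int_\R e^{iz\xi}\,\CA(\xi)\,d\xi$, so that $\op_1(\CA)$ acts with integral kernel $(x,y)\mapsto g(x-y)$ and $T$ has kernel $\chi_{\R_-}(x)\,g(x-y)\,\chi_{\R_+}(y)$. A Fubini computation — fix $y>0$, integrate $|g(x-y)|^2$ over $x<0$, then substitute $z=x-y$ and integrate in $y$ — gives
\[
\|T\|_2^2 = \int_{x<0}\int_{y>0} |g(x-y)|^2\,dx\,dy = \int_{-\infty}^0 |g(z)|^2\,|z|\,dz .
\]
For finiteness I would use \eqref{eq:od}: since $\rho>1$ we have $\CA,\CA',\CA''\in\plainL1(\R)$, so $g$ is bounded (by $\|\CA\|_1/(2\pi)$), and integrating by parts twice — the boundary terms vanish because $\CA$ and $\CA'$ are continuous and tend to zero at infinity — yields $z^2 g(z) = -\frac{1}{2\pi}\int e^{iz\xi}\CA''(\xi)\,d\xi$, hence $|g(z)|\lesssim \min(1,|z|^{-2})$. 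Consequently $\int_{-\infty}^0 |g(z)|^2|z|\,dz\lesssim \int_0^\infty \min(1,r^{-2})^2\,r\,dr<\infty$, which proves $T\in\BS_2$.

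For strict positivity I would exploit the Hermitian symmetry. Taking the matrix adjoint under the integral sign and using $\CA(\xi)^*=\CA(\xi)$ gives $g(-z)=g(z)^*$, so $|g(-z)|=|g(z)|$ for every $z$ (the Hilbert--Schmidt norm is adjoint-invariant). Therefore
\[
\|T\|_2^2 = \int_{-\infty}^0 |g(z)|^2\,|z|\,dz = \int_0^\infty |g(-r)|^2\,r\,dr = \int_0^\infty |g(r)|^2\,r\,dr .
\]
If this vanished, then $g(r)=0$ for a.e.\ $r>0$, hence $g\equiv 0$ on $[0,\infty)$ by continuity, hence $g\equiv 0$ on $(-\infty,0]$ as well by the relation $g(-z)=g(z)^*$, so $g\equiv 0$ on $\R$. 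But $g\in\plainL1(\R)$ by the bound $|g(z)|\lesssim(1+z^2)^{-1}$, and $\CA$ is recovered from $g$ by the (injective) Fourier transform $\CA(\xi)=\int e^{-iz\xi}g(z)\,dz$; thus $\CA\equiv 0$, contradicting the hypothesis. Hence $\|T\|_2>0$.

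I do not anticipate a serious obstacle: the routine points are the two integrations by parts (legitimized by the $k=0,1,2$ cases of \eqref{eq:od}) and the Fubini computation of $\|T\|_2^2$. The only genuinely load-bearing step is the passage from ``$g$ vanishes a.e.\ on a half-line'' to ``$\CA\equiv 0$'', where the Hermitian symmetry $g(-z)=g(z)^*$ is indispensable — without it $g$ could be supported on a half-line while $\CA\not\equiv0$ (for instance the scalar symbol $\CA(\xi)=(\xi+i)^{-1}$), so this symmetry, rather than any hard estimate, is the crux of the positivity claim.
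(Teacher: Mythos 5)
Your argument is correct, and the crux — that Hermiticity of $\CA$ forces $g(-z)=g(z)^*$, so $g$ cannot vanish on one half-line without vanishing identically — is exactly the key idea the paper uses. The details differ a little: you push through the Fubini computation to get the clean identity $\|T\|_2^2=\int_{-\infty}^0|g(z)|^2|z|\,dz$, which together with $|g(-z)|=|g(z)|$ lets you conclude by Fourier injectivity ($g\equiv 0\Rightarrow\CA\equiv 0$). The paper does not integrate out to this closed form; instead it picks a matrix entry $\CA_{k,l}$ that is nonzero on some interval, uses Hermiticity entrywise ($\check\CA_{k,l}(-x)=\overline{\check\CA_{l,k}(x)}$) to build the even, not-identically-zero function $F(x)=\tfrac12(|\check\CA_{k,l}(x)|^2+|\check\CA_{l,k}(x)|^2)$, and lower-bounds $\|T\|_2^2\geq\int_J\int_{-\infty}^x F(t)\,dt\,dx$ over an interval $J\subset\R_-$ where $F>0$. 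Your route is a bit shorter and more transparent; the paper's avoids the Fubini step at the cost of a more hands-on lower bound. Both hinge on the same symmetry, and your remark about the counterexample $\CA(\xi)=(\xi+i)^{-1}$ correctly pinpoints why that symmetry is indispensable.
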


\begin{proof}
Denote the kernel of the operator~$\op_1(\CA)$ by 
\begin{align*}
\check{\CA}(x) = \frac{1}{2\pi}\int \:e^{ix\xi}\: \CA(\xi) \:d\xi \:.
\end{align*}
Because of~\eqref{eq:od}, 
\begin{align*}
|\check\CA(x)|\lesssim \lu x\ru^{-2} \:,
\end{align*}
so that the (squared) Hilbert-Schmidt norm 
\begin{align*}
\|\chi_{\R_-} \op_1(\CA) \chi_{\R_+} \|_2^2 
= &\ \int_{-\infty}^0 \int_{0}^\infty |\check\CA(x-y)|^2 dy \:dx \\[0.3cm]
= &\ 
\sum_{k, l} \int_{-\infty}^0 \int_{0}^\infty |\check\CA_{k, l}(x-y)|^2\, dy \:dx
\end{align*}
is finite. 
Assume now that there is an interval~$I\subset\R$ such that~$\CA(\xi)\not = 0$ for all~$\xi\in I$.
Without loss of generality we may assume that the matrix entry~$\CA_{k, l}(\xi)$ with some~$k, l$, 
is not zero for all~$\xi\in I$. Since~$\CA$ is Hermitian, we also have 
\[ 
\CA_{l, k}(\xi) = \overline{\CA_{k, l}(\xi)}\not = 0, \quad \xi\in I \:. \]
As a consequence,  
$\check\CA_{k, l}(-x) = \overline{\check\CA_{l, k}(x)}$, so that the  function 
\begin{align*}
F(x) := \frac{1}{2}\, \big(|\check\CA_{k, l}(x)|^2 + |\check\CA_{l, k}(x)|^2\big)
\end{align*}
is even and not identically zero. Therefore 
there is an interval~$J\subset \R_-$ such that~$F(x) >0$ for all~$x\in J$. 
Estimating 
\begin{align*}
\|\chi_{\R_-} \op_1(\CA) \chi_{\R_+} \|_2^2 
\ge  &\  \int_{-\infty}^0 \int_{0}^\infty F(x-y)\, dy\, dx \\[0.3cm]
\ge &\  
\int_J \int_{-\infty}^x F(t)\, dt\, dx \:, 
\end{align*} 
we conclude that the Hilbert Schmidt norm on the left-hand side is strictly positive, as required. 
\end{proof}

Using the above lemma we can now show the positivity of the asymptotic coefficient 
${\sf B}(\CA; f_0)$ for the function~$f_0(t) = -t^2/2$. 

\begin{lem}\label{lem:posf}
Let~$\CA\in \plainC\infty(\R^d\setminus \boldsymbol\Xi; \C^{n\times n})$, $d\ge 2$,   
be a Hermitian operator-valued symbol satisfying~\eqref{eq:Asing} with 
$\rho > d$ that is not identically equal to the zero matrix.
Then for the function~$f_0(t) = -t^2/2$ we have 
${\sf B}(\CA; f_0) >0$. 
\end{lem}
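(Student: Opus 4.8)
The plan is to show that the non‑negative integrand of ${\sf B}(\CA;f_0)$ is strictly positive on a set of positive measure. Recall from~\eqref{eq:bcoef} that ${\sf B}(\CA;f_0)=\int_{\partial\L}\mathfrak M(\bn_\bx;\CA;f_0)\,dS_\bx$, and from~\eqref{eq:frakm} that $\mathfrak M(\be;\CA;f_0)=(2\pi)^{1-d}\int_{\BT_\be}\CM(\hat\bxi;\be;\CA;f_0)\,d\hat\bxi$ with $\CM(\hat\bxi;\be;\CA;f_0)=\tr\CD\big(\op_1(\CA(\hat\bxi;\,\cdot\,)),\chi_{\R_+};f_0\big)$, where I use that for the polynomial $f_0$ one has $D_1(\CB;\R_+;f_0)=\CD(\op_1(\CB),\chi_{\R_+};f_0)$. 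So it suffices to prove, for every fixed unit vector $\be$: (i) $\CM(\hat\bxi;\be;\CA;f_0)\ge 0$ for a.e.~$\hat\bxi\in\BT_\be$; and (ii) $\CM(\hat\bxi;\be;\CA;f_0)>0$ on a subset of $\BT_\be$ of positive $(d-1)$‑dimensional measure. Since $\L$ is a bounded non‑empty $\plainC1$‑region, $\mathrm{vol}_{d-1}(\partial\L)>0$, so this yields $\mathfrak M(\bn_\bx;\CA;f_0)>0$ for every $\bx\in\partial\L$ and hence ${\sf B}(\CA;f_0)>0$.

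The main ingredient is the algebraic identity already recorded in the proof of Theorem~\ref{thm:ls}: for $f_0(t)=-t^2/2$, any self‑adjoint $A$ and any orthogonal projection $P$ with $(\mathds 1-P)AP\in\BS_2$,
\[
\CD(A,P;f_0)=\tfrac12\,\big[(\mathds 1-P)AP\big]^{*}\,(\mathds 1-P)AP\in\BS_1,\qquad
\tr\CD(A,P;f_0)=\tfrac12\,\|(\mathds 1-P)AP\|_2^2 .
\]
Taking $A=\op_1(\CA(\hat\bxi;\,\cdot\,))$ and $P=\chi_{\R_+}$ gives $\CM(\hat\bxi;\be;\CA;f_0)=\tfrac12\,\|\chi_{\R_-}\op_1(\CA(\hat\bxi;\,\cdot\,))\chi_{\R_+}\|_2^2\ge0$. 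To make this rigorous I first check that for $\hat\bxi\notin\widehat{\boldsymbol\Xi}_\be$ the one‑variable symbol $t\mapsto\CA(\hat\bxi+t\be)$ is $\plainC\infty$ (the line avoids $\boldsymbol\Xi$), Hermitian, and satisfies hypothesis~\eqref{eq:od} of Lemma~\ref{lem:nonvanish} with the same $\rho>d>1$: indeed $\dc(\hat\bxi+t\be)\ge r(\hat\bxi)>0$ because projecting onto $\BT_\be$ cannot increase the distance to $\boldsymbol\Xi$, and $\lu\hat\bxi+t\be\ru\ge\lu t\ru$, so restricting~\eqref{eq:Asing} to the line gives $|\p_t^k\CA(\hat\bxi+t\be)|\le C(\hat\bxi)\lu t\ru^{-\rho}$ for $k=0,1,2$. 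By the first assertion of Lemma~\ref{lem:nonvanish}, $\chi_{\R_-}\op_1(\CA(\hat\bxi;\,\cdot\,))\chi_{\R_+}$ is Hilbert--Schmidt, so $\CM$ is well defined at such $\hat\bxi$ and the displayed identity holds; this proves~(i).

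For~(ii) I would invoke the second assertion of Lemma~\ref{lem:nonvanish}: if $t\mapsto\CA(\hat\bxi+t\be)$ is not the identically zero matrix, then $\|\chi_{\R_-}\op_1(\CA(\hat\bxi;\,\cdot\,))\chi_{\R_+}\|_2>0$ and hence $\CM(\hat\bxi;\be;\CA;f_0)>0$. Since $\CA\not\equiv0$ and $\CA$ is continuous on $\R^d\setminus\boldsymbol\Xi$, there is an open ball $B\subset\R^d\setminus\boldsymbol\Xi$ on which $\CA$ is nowhere zero. For the fixed $\be$, the orthogonal projection of $B$ onto $\BT_\be$ is an open $(d-1)$‑ball, hence of positive measure; deleting from it the finite set $\widehat{\boldsymbol\Xi}_\be$, for each remaining $\hat\bxi$ the line $\hat\bxi+t\be$ meets $B$, so $\CA(\hat\bxi;\,\cdot\,)\not\equiv0$ and $\CM(\hat\bxi;\be;\CA;f_0)>0$. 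Thus $\CM(\,\cdot\,;\be;\CA;f_0)>0$ on a positive‑measure subset of $\BT_\be$, giving $\mathfrak M(\be;\CA;f_0)>0$ for every $\be$, and the conclusion follows as in the first paragraph.

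All the analytic work is packaged in Lemma~\ref{lem:nonvanish} and in the identity borrowed from Theorem~\ref{thm:ls}; the only steps needing care are the elementary verification that the restriction of $\CA$ to a generic line inherits the one‑dimensional decay~\eqref{eq:od}, and the Fubini/continuity argument passing from ``$\CA\not\equiv0$ on $\R^d$'' to ``$\CA(\hat\bxi;\,\cdot\,)\not\equiv0$ for a positive‑measure set of $\hat\bxi$''. I do not expect a genuine obstacle here.
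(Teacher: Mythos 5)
Your proof is correct and follows essentially the same route as the paper's: reduce to showing $\mathfrak M(\be;\CA;f_0)>0$ for each $\be$, identify $2\CM(\hat\bxi;\be;\CA;f_0)=\|\chi_{\R_-}\op_1(\CA(\hat\bxi;\,\cdot\,))\chi_{\R_+}\|_2^2\ge0$ via the computation in Theorem~\ref{thm:ls}, verify the one-dimensional decay hypothesis of Lemma~\ref{lem:nonvanish} along lines $\hat\bxi+t\be$ avoiding $\boldsymbol\Xi$, and use the second assertion of Lemma~\ref{lem:nonvanish} together with continuity and $\CA\not\equiv0$ to exhibit a positive-measure set of $\hat\bxi$ with $\CM(\hat\bxi;\be;\CA;f_0)>0$. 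The paper states this somewhat more compactly (choosing directly a ball $\hat B\subset\BT_\be\setminus\widehat{\boldsymbol\Xi}_\be$ over which $\CA(\hat\bxi;\,\cdot\,)\not\equiv0$), but the substance is identical; your extra verification of $\dc(\hat\bxi+t\be)\ge r(\hat\bxi)$ and $\lu\hat\bxi+t\be\ru\ge\lu t\ru$ just makes explicit what the paper leaves implicit.
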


Recall that the coefficient~$\sf B$ is finite due to Lemma~\ref{lem:mbound}.

\begin{proof} 
By definition~\eqref{eq:bcoef} it suffices to show that~$\mathfrak M(\be; \CA; f)>0$ 
for each~$\be\in \mathbb S^{d-1}$. Fix a vector~$\be$ and 
rewrite the integrand in 
\eqref{eq:frakm} 
using the notation~$A = \op_1(\CA(\hat\bxi;\,\cdot\,)), P = \chi_{\R_+}$. 
As in the proof of Theorem~\ref{thm:ls}, we obtain 
\begin{align*}
2 \CM(\hat\bxi; \be; \CA; f_0) = &\ \tr\big( P A^2 P - PAPAP\big)\\
&\ \tr PA(\mathds{1} - P) AP = \|(\mathds{1}-P) A P\|_2^2\ge 0 \:.
\end{align*}
Since~$\CA$ is a non-zero symbol, there is a ball 
$\hat B\subset \BT_\be\setminus \hat{\boldsymbol\Xi}$ 
such that for all~$\hat\bxi\in \hat B$ 
the symbol~$\CA(\hat\bxi; \,\cdot\,)$ is~$\plainC\infty(\R; \C^{n\times n})$, it is non-zero 
and satisfies
\begin{align*}
\left|\frac{d^k}{d\xi^k} \CA(\hat\bxi; \xi)\right|\lesssim \lu\xi\ru^{-\rho},\quad k = 0, 1, \dots, 
\quad\xi\in\R \:,
\end{align*}
with a constant uniform in~$\hat\bxi\in \hat B$. Thus by Lemma~\ref{lem:nonvanish}, 
$\CM(\hat\bxi; \be; \CA; f_0)>0$ for all~$\hat\bxi \in\hat B$. This leads to the positivity 
of~${\sf B}(\CA; f_0)$. 
\end{proof}

\subsection{Proof of Theorem~\ref{thm:lspdo} }
In order to use Theorem~\ref{thm:ls} we check that~$\op_\a(\CA)\chi_\L$ is Hilbert-Schmidt, i.e.\
\begin{align*}
\|\op_\a(\CA) \chi_\L\|_2^2 = \bigg(\frac{\a}{2\pi}\bigg)^d \int |\CA(\bxi)|^2\, d\bxi \ 
\int_{\L} \, d\bx < \infty \:,
\end{align*}
where we have used that~$|\CA(\bxi)|\lesssim \lu \bxi\ru^{-\rho}$ with~$\rho\ge\rho \gamma > d$.
Now, by Theorem~\ref{thm:ls},
\begin{align*}
\tr D_\a(\CA; f)\ge k_0\,\tr   D_\a(\CA; f_0),\quad f_0(t) = -\frac{1}{2}\, t^2.
\end{align*}
Using the asymptotics~\eqref{eq:widom1}  established in Theorem~\ref{thm:main}, we
obtain~${\sf B}(\CA; f)\ge k_0\, {\sf B}(\CA; f_0)$. The latter is positive by Lemma~\ref{lem:posf}. This completes the proof. 
\qed

\subsection{Corollaries for the functions~$\eta_\varkappa$} 

Let the functions~$\eta_\varkappa$ be as defined in~\eqref{eq:eta_gamma}. 
Each function~$\eta_\varkappa$ satisfies Condition~\ref{cond:f2} with~${\sf T} = \{0, 1\}$, 
with~$\gamma = \min \{\varkappa, 1\}$, if~$\varkappa\not = 1$, and 
with arbitrary~$\gamma <1$ if~$\varkappa = 1$.

\begin{cor}\label{cor:poseta}
Let~$\CA$ be as in Theorem~\ref{thm:lspdo} and such that~$0\le \CA(\bxi)\le \mathds{1}$ 
for all $\bxi\notin\boldsymbol\Xi$. If~$\varkappa\in (0, 2)$, then~${\sf B}(\CA; \eta_\varkappa) >0$.  
\end{cor}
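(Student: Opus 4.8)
The plan is to deduce the corollary directly from Theorem~\ref{thm:lspdo}, applied with the test function $f=\eta_\varkappa$ and the interval $[t_1,t_2]=[0,1]$. As noted just above the corollary, $\eta_\varkappa$ satisfies Condition~\ref{cond:f2} with ${\sf T}=\{0,1\}$ and with $\g=\min\{\varkappa,1\}$ if $\varkappa\neq1$ (any $\g<1$ if $\varkappa=1$), and $\CA$, $\L$ are assumed to be as in Theorem~\ref{thm:lspdo}; so all hypotheses of that theorem will be in force once we observe two things. First, the extra assumption $0\le\CA(\bxi)\le\mathds{1}$ forces the spectrum of $\CA(\bxi)$ into $[0,1]=[t_1,t_2]$ for every $\bxi\notin\boldsymbol\Xi$. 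Second, one must verify the quantitative concavity bound
\begin{align*}
\esssup_{t\in(0,1)}\eta_\varkappa''(t)=-k_0\qquad\text{for some }k_0>0 \:.
\end{align*}
Granting the second point, Theorem~\ref{thm:lspdo} yields ${\sf B}(\CA;\eta_\varkappa)>0$, which is the claim.

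Thus the real work is the elementary inequality above, and the range $1<\varkappa<2$ is where I expect the main obstacle. I would write $\eta_\varkappa=(1-\varkappa)^{-1}\log g$ with $g(t)=t^\varkappa+(1-t)^\varkappa$, so that on $(0,1)$ one has $\eta_\varkappa''=(1-\varkappa)^{-1}\bigl(g''g-(g')^2\bigr)g^{-2}$, and hence $\eta_\varkappa''<0$ on $(0,1)$ is equivalent to $g''g-(g')^2$ having the sign of $\varkappa-1$. A short homogeneity computation gives, with $b=1-t$, $x=t/(1-t)$ and $p=\varkappa-1$,
\begin{align*}
g''g-(g')^2=\varkappa\, b^{2\varkappa-2}\bigl[p\,x^{p-1}(1+x)^2-(x^p-1)^2\bigr] \:.
\end{align*}
For $0<\varkappa<1$ (so $p<0$) the bracket is manifestly negative; for $\varkappa=1$ one has directly $\eta_1''(t)=-[t(1-t)]^{-1}$; and for $1<\varkappa<2$ (so $0<p<1$) the substitution $x=e^{2s}$ turns the bracket into $4e^{2ps}\bigl(p\cosh^2 s-\sinh^2(ps)\bigr)$, and $p\cosh^2 s-\sinh^2(ps)\ge p>0$ follows at once from the convexity of $\cosh$ via $\cosh(2ps)\le p\cosh(2s)+(1-p)$. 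In all cases $\varkappa\in(0,2)$ this yields $\eta_\varkappa''<0$ throughout $(0,1)$.

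It then remains to pass from ``$\eta_\varkappa''<0$ on $(0,1)$'' to ``$\esssup_{(0,1)}\eta_\varkappa''<0$''. Here I would observe that $\eta_\varkappa''$ is continuous on $(0,1)$ and, from the expansion of $g$ at the endpoints, $\eta_\varkappa''(t)\sim-\varkappa\,t^{\varkappa-2}\to-\infty$ as $t\to0^+$ and symmetrically as $t\to1^-$; this is precisely where the hypothesis $\varkappa<2$ is used, since for $\varkappa=2$ one gets $\eta_2''(t)\to0$ at the endpoints instead. Therefore $\eta_\varkappa''$ attains a strictly negative maximum $-k_0$ on $(0,1)$, and plugging this into Theorem~\ref{thm:lspdo} gives ${\sf B}(\CA;\eta_\varkappa)>0$. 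The genuinely delicate point in this argument is the strict log-convexity of $g=t^\varkappa+(1-t)^\varkappa$ for $1<\varkappa<2$ (equivalently the positivity of $p\cosh^2 s-\sinh^2(ps)$); everything else is routine bookkeeping.
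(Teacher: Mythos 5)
Your proposal is correct, and it takes a genuinely different route from the paper for the substantive case $\varkappa\in(1,2)$, while agreeing with it on the easy cases $\varkappa\le 1$ and on the reduction to the inequality $\esssup_{(0,1)}\eta_\varkappa''<0$ needed to invoke Theorem~\ref{thm:lspdo}. After the shared identity (your homogeneity computation is the change of variables $x=t/(1-t)$ applied to the paper's formula~\eqref{2nd der}; indeed your bracket equals $b^{-2p}g_p(t)$ with $p=\varkappa-1$, so the two expressions are the same object), the arguments diverge. The paper proves the \emph{uniform} lower bound $g_p(t)\ge c>0$ on all of $(0,1)$ in a single stroke, via the Young inequality applied to $[2t(1-t)]^{1-p}$, and then divides by the bounded positive factor $[t^\varkappa+(1-t)^\varkappa]^2$. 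You instead establish only \emph{pointwise} strict negativity of $\eta_\varkappa''$, using the slick substitution $x=e^{2s}$ and convexity of $\cosh$ to show $p\cosh^2 s-\sinh^2(ps)\ge p>0$, and then upgrade to the uniform bound by a separate, soft argument: continuity of $\eta_\varkappa''$ on $(0,1)$ together with the endpoint asymptotics $\eta_\varkappa''(t)\sim-\varkappa\,t^{\varkappa-2}\to-\infty$ (valid precisely because $\varkappa<2$). Both arguments are complete. Yours is more modular and arguably more transparent in the $(1,2)$ range, because $p\cosh^2 s-\sinh^2(ps)\ge p$ has a one-line proof; the paper's argument is more self-contained in that it produces an explicit numerical lower bound $c=(1-M_p)2^{1-p}$ for $g_p$ without any appeal to the behavior near the endpoints. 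Your observation that the same asymptotics fails at $\varkappa=2$ (where $\eta_2''(t)\to 0$ at the endpoints) is a nice sanity check on the sharpness of the range $\varkappa\in(0,2)$, which the paper's Young-inequality route does not make visible.
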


\begin{proof}
It suffices to 
show that for~$\varkappa\in (0, 2)$ the derivative~$\esssup_{t\in (0, 1)}\eta_\varkappa''(t) <0$. 
If~$\varkappa = 1$, then one easily finds that~$\eta_1''(t) = -t^{-1}(1-t)^{-1}\le -4$. 
For~$\varkappa\not = 1$ we use a slightly modified version of the proof 
of \cite[Lemma 3.1]{LSS_2022}. 
One checks directly that 
\begin{align}\label{2nd der}
	\eta_\varkappa''(t)
	[t^\varkappa+(1-t)^\varkappa\big]^2 
	= -\varkappa[t(1-t)]^{\varkappa-2} - \frac{\varkappa}{1-\varkappa} 
	[t^{\varkappa-1} - (1-t)^{\varkappa-1}]^2\,.
\end{align}
For~$\varkappa < 1$ the right-hand side is clearly negative for 
$t\in (0, 1)$ and~$\esssup\eta_\varkappa''(t) < 0$, as required. 

It remains to consider the case~$\varkappa\in{(1, 2)}$. We rewrite~\eqref{2nd der} as   
\begin{align*}
	\eta_\varkappa''(t)[t^\varkappa+(1-t)^\varkappa]^2 
	= &\  -\frac{\varkappa}{\varkappa-1} g_{\varkappa-1}(t)\,,\\ 
	g_p(t)\ceq &\   p[t(1-t)]^{p-1} - [t^p - (1-t)^p]^2\,,
\end{align*}
for~$p:= \varkappa-1\in {(0, 1)}$. 
Since $[t^\varkappa+(1-t)^\varkappa]^2$ is strictly positive and bounded 
for~$t\in [0, 1]$, it suffices to show that~$g_p(t)\ge c$ 
with some positive~$c$. This claim is equivalent to 
\begin{align}\label{ineq3}
	[t(1-t)]^{1-p}[t^{2p} + (1-t)^{2p} + c] \le 2t(1-t) + p\,.
\end{align}
Using the notation 
\begin{align*}
	M_p\ceq 2^{p-1}\max_{t\in[0, 1]} [t^{2p} + (1-t)^{2p}] = 
	\begin{cases}
		2^{-p}& \quad\textup{if}\quad 0<p<1/2\\ 
		2^{p-1}&\quad\textup{if}\quad 1/2\le p<1
	\end{cases}\,,
\end{align*} 
the (elementary example of the) Young inequality
\begin{align*}
	ab\le \frac{a^u}{u} + \frac{b^v}{v}\,,\quad a, b \ge 0\,,\quad u, v> 1\,,\, \frac{1}{u}+\frac{1}{v} = 1
\end{align*}
for~$a = [2t(1-t)]^{1-p}\,, u = (1-p)^{-1}$ and~$b = 1, v = p^{-1}$\, yields   
 \begin{align*}
	[(t(1-t)]^{1-p} [t^{2p} + (1-t)^{2p}+c] \le &\ \big( M_p+ 2^{p-1} c\big) [2t(1-t)]^{1-p}\\
	\le &\  \big(M_p+2^{p-1} c\big)\big[(1-p)(2t(1-t)) + p\big] \,.
\end{align*}  
Since~$M_p <1$ for~$p\in (0, 1)$, the number 
\begin{align*}
c = (1-M_p) 2^{1-p}
\end{align*}
is positive. With this choice of~$c$ the right-hand side of the above inequality coincides with 
\[
(1-p)(2t(1-t)) + p \le 2t(1-t)+p\,,
\]
so~\eqref{ineq3} holds. This completes the proof of the inequality 
$\esssup_{t\in (0, 1)}\eta_\varkappa''(t) <0$ and hence entails that~${\sf B}(\CA; \eta_\varkappa)>0$. 
\end{proof}

\section{Proof of the main theorem}
\label{Sec:Appl}
We are now in a position to complete the proof of Theorem~\ref{thm:mainPhys}.
In order to use Theorems~\ref{thm:main} and~\ref{thm:maineps} 
we begin with the relation derived already in the introduction: 
 \begin{align*}
S_\varkappa(\Pi^{(\eps)}, L\L) = \tr D_\a (\CA^{(\eps)}, \L; \eta_\varkappa),
\quad \a = L\eps^{-1},
\end{align*}
 where the symbol~$\CA^{(\eps)}$ is given by 
 \begin{align*} 
\CA^{(\eps)}(\bxi) = &\ \mathcal P^{(\eps)}(\bxi \eps^{-1})
= 
\frac{1}{2}\: \Big( \mathds{1}_{\C^4} + \frac{\sum_{\beta=1}^{3}\xi_\beta \gamma^\beta  - \eps m }
{\sqrt{\bxi^2+\eps^2m^2}} \gamma^0 \Big)
\: 
\phi\big(\sqrt{\bxi^2+(\eps m)^2}\big)
%
%
\:.
\end{align*}
The symbol~$\CA^{(\eps)}$ is Hermitian $4\times 4$-matrix-valued and it 
satisfies Condition~\ref{cond:Aeps} with the limiting symbol 
\[ 
	\CA(\bxi)=\frac{1}{2}\: \Big( \mathds{1}_{\C^4} + \sum_{\beta=1}^{3}\frac{\xi_\beta}{|\bxi|}\gamma^\beta 
	\gamma^0 \Big)
	\: 
\phi(|\bxi|)	
	\]
with the finite set~${\Xi=\{\mathbf{0}\}}$ and for arbitrary~$\rho>0$. 
Moreover, as we have already observed earlier, 
each function~$\eta_\varkappa$ satisfies Condition~\ref{cond:f2} with~${\sf T} = \{0, 1\}$, 
with~$\gamma = \min \{\varkappa, 1\}$, if~$\varkappa\not = 1$, and 
with arbitrary~$\gamma <1$ if~$\varkappa = 1$. 
Thus, according to Theorem~\ref{thm:main}, as~$L \to \infty$ and~$\eps>0$ is fixed, we have 
\begin{align*}
\lim \, (L\eps^{-1})^{-2}S_\varkappa(\Pi^{(\eps)}, L \Lambda) 
= 	\lim  \a^{-2}  \, \tr\, D_{\a}(\CA^{(\eps)}, \L; \eta_\varkappa) =  
{\sf B}(\CA^{(\eps)}; \eta_\varkappa).
\end{align*}
Similarly, if~$\eps \searrow 0$ and~$\a = L\eps^{-1}\to\infty$, then 
Theorem~\ref{thm:maineps} leads to the formula 
\begin{align*}
\lim \, L^{-2} \eps^2 \,S_\varkappa(\Pi^{(\eps)}, L \Lambda)  =  {\sf B}(\CA; \eta_\varkappa) \:.   
\end{align*}
To complete the proof of~\eqref{eq:areaLaw} and~\eqref{eq:areaLawL} we will check that 
the symbols~$\CA^{(\eps)}, \CA$ satisfy the conditions of Proposition~\ref{prop:area}. 
The following lemma is the first step in this direction.  

\begin{lem}
\label{lem:RotDirac}
Let~$\BR\in \mathrm{SO}(3)$ be arbitrary. 
Then there exists a matrix~$\BQ = \BQ_{\BR}\in \mathrm{SU}(4)$ 
such that for any~$\mathbf{v} \in \R^3$:
\[ 
\BQ\: \sum_{\beta=1}^{3} (\mathbf R \mathbf v)_\beta\, \gamma^\beta \:\BQ^{-1}  
= \sum_{\beta=1}^3 v_\beta \gamma^\beta \:,\quad \textup{and} \quad 
\BQ \gamma^0 \BQ^{-1} = \gamma^0.
\]
\end{lem}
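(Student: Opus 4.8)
The plan is to construct $\BQ$ explicitly from the spin double cover $\mathrm{SU}(2)\to\mathrm{SO}(3)$, exploiting the block form of the Dirac matrices fixed in Section~\ref{secgreen}. Recall the classical fact that for every $U\in\mathrm{SU}(2)$ there is a unique $\mathbf R_U\in\mathrm{SO}(3)$ with $U(\vec\sigma\cdot\mathbf w)U^{-1}=\vec\sigma\cdot(\mathbf R_U\mathbf w)$ for all $\mathbf w\in\R^3$, that $U\mapsto\mathbf R_U$ is a group homomorphism, and that it is surjective (with kernel $\{\pm\mathds 1_{\C^2}\}$). Given $\BR\in\mathrm{SO}(3)$, the first step is to pick $U\in\mathrm{SU}(2)$ with $\mathbf R_U=\BR^{-1}$, and to set $\BQ:=\diag(U,U)\in\C^{4\times 4}$, the block-diagonal $4\times4$ matrix whose two diagonal blocks both equal $U$.

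Then I would verify the three assertions. That $\BQ\in\mathrm{SU}(4)$ is immediate: $\BQ$ is unitary because $U$ is, and $\det\BQ=(\det U)^2=1$. Invariance of $\gamma^0$ follows from the fact that $\gamma^0=\diag(\mathds 1_{\C^2},-\mathds 1_{\C^2})$ is block diagonal with the same block sizes as $\BQ$, so $\BQ\gamma^0\BQ^{-1}=\diag(UU^{-1},-UU^{-1})=\gamma^0$. For the covariance relation I would use $\vec\gamma=\left(\begin{smallmatrix}0&\vec\sigma\\-\vec\sigma&0\end{smallmatrix}\right)$, so that $\sum_{\beta=1}^3(\BR\mathbf v)_\beta\gamma^\beta=\left(\begin{smallmatrix}0&\vec\sigma\cdot(\BR\mathbf v)\\-\vec\sigma\cdot(\BR\mathbf v)&0\end{smallmatrix}\right)$; conjugating by $\BQ=\diag(U,U)$ replaces the off-diagonal blocks by $\pm U\bigl(\vec\sigma\cdot(\BR\mathbf v)\bigr)U^{-1}=\pm\,\vec\sigma\cdot(\mathbf R_U\BR\mathbf v)=\pm\,\vec\sigma\cdot\mathbf v$, using $\mathbf R_U=\BR^{-1}$, which gives exactly $\sum_{\beta=1}^3 v_\beta\gamma^\beta$.

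I do not expect a real obstacle here: the only input is the surjectivity of the spin cover, which is standard. If a self-contained argument is preferred, the same $\BQ$ may be obtained as $\BQ=\exp\bigl(\tfrac14\sum_{j,k=1}^3\omega_{jk}\gamma^j\gamma^k\bigr)$, where $\omega$ is any real antisymmetric $3\times3$ matrix with $e^{\omega}=\BR^{-1}$ (the exponential map onto $\mathrm{SO}(3)$ is surjective); the covariance relation then follows by differentiating $t\mapsto e^{-tX}\gamma^l e^{tX}$ with $X=\tfrac14\sum_{j,k}\omega_{jk}\gamma^j\gamma^k$ and using the Clifford relations $\{\gamma^\mu,\gamma^\nu\}=2g^{\mu\nu}\mathds 1_{\C^4}$, while $\BQ\in\mathrm{SU}(4)$ and $\BQ\gamma^0\BQ^{-1}=\gamma^0$ come from $X$ being anti-Hermitian, traceless and block diagonal. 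The only fiddly point in either route is keeping track of the normalization and of the $\BR$ versus $\BR^{-1}$ convention so that the final identity comes out in precisely the form stated in the lemma.
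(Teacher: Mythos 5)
Your proof is correct, and it is genuinely different from the paper's. The paper gives two arguments: the first simply invokes the general Lorentz covariance $U\,\Lambda^i_j\gamma^j U^{-1}=\gamma^i$ of the Dirac equation and specializes to rotations, noting that $\BQ$ then commutes with $\gamma^0$ and hence lies in $\mathrm{SU}(4)$; the second is a hands-on Euler-angle computation, writing a rotation about the $z$-axis explicitly and conjugating each $\gamma^\beta$ by a concrete diagonal phase matrix, then appealing to Euler decomposition. Your main argument instead exploits the block structure of the Dirac representation directly: both $\gamma^0$ and $\BQ=\diag(U,U)$ are block diagonal (so $[\BQ,\gamma^0]=0$ is immediate), while $\vec\gamma$ is block off-diagonal with $\pm\vec\sigma$ blocks, and the covariance identity reduces to the classical $\mathrm{SU}(2)\to\mathrm{SO}(3)$ adjoint relation $U(\vec\sigma\cdot\mathbf w)U^{-1}=\vec\sigma\cdot(\BR_U\mathbf w)$ applied with $\BR_U=\BR^{-1}$. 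This is cleaner than the paper's Euler-angle computation and more self-contained than the paper's appeal to full Lorentz covariance; the one external ingredient you use is surjectivity of the double cover $\mathrm{SU}(2)\to\mathrm{SO}(3)$, which is standard. Your bookkeeping of $\BR$ versus $\BR^{-1}$ is correct: with $\mathbf w=\BR\mathbf v$ the conjugation gives $\vec\sigma\cdot(\BR_U\BR\mathbf v)$, which equals $\vec\sigma\cdot\mathbf v$ precisely when $\BR_U\BR=\mathds 1$. The alternative you sketch via $\BQ=\exp\bigl(\tfrac14\sum\omega_{jk}\gamma^j\gamma^k\bigr)$ is essentially the same spin-group argument that underlies the paper's first proof, so it offers less novelty but would also work.
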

\begin{proof} This lemma is a specialization of the Lorentz invariance of the Dirac equation
to spatial rotations. General proofs can be found for example in~\cite[Section~3-4]{sakurai}
or~\cite[Lemma~1.3.1]{intro}. To make the connection clear, we recall that for any
orthochronous proper Lorentz transformation $\Lambda \in \mathrm{SO}(1,3)$ there is
matrix~$U \in \mathrm{SU}(2,2)$ which is unitary with respect to the spin inner product~\eqref{sip}
such that~$U \Lambda^i_j \gamma^j U^{-1} = \gamma^i$ for all~$i \in \{0,\ldots3\}$.
Specializing this result to the case~$\Lambda=\BR$ of rotations, the resulting matrix~$U=: \BQ$ commutes
with~$\gamma^0$, implying that it is also in~${\mathrm{SU}}(4)$, concluding the proof.

For the reader who is not so familiar with Dirac spinors, we now give an alternative, more computational
proof. Using the method of Euler angles, it suffices to prove this lemma for rotations around the three coordinate axes, because any other rotation may be written as a product of those three rotations.
Let us consider for example the case where ~$\BR$ is a rotation around the $z$-axis
(for rotations around the $x$- and~$y$-axes, the computation is similar).
Then~${\mathbf{R}}$ is given by
\[ \mathbf R= \begin{pmatrix}
	\cos \theta & -\sin \theta &0  \\
	\sin \theta & \cos \theta  &0 \\
	0 & 0 & 1
\end{pmatrix}\:, \]
where~$ \theta \in \R$ is the rotation angle.
We claim that 
\[  \BQ:= \begin{pmatrix}
	e^{-i\theta/2} & & & \\
	& e^{i\theta/2} &  &  \\
	&  & e^{-i\theta/2} &  \\
	&  &  & e^{i\theta/2}
\end{pmatrix} \:
\]
is the sought matrix.
Indeed, note that
\begin{flalign*}
\BQ\:  \gamma^0\: \BQ^{-1}  &=\gamma^0  \:, 
\quad \BQ\:  \gamma^3\: \BQ^{-1}   =\gamma^3  \:, \\[0.2cm]
\BQ\: \gamma^1 \:\BQ^{-1} &= \begin{pmatrix}
	& & &  e^{i\theta} \\
	& & e^{-i\theta}  & \\
	& -e^{i\theta} & & \\
	-e^{-i\theta} & & & 
\end{pmatrix}\:, \\[0.2cm]
\BQ\: \gamma^2\: \BQ^{-1} &= \begin{pmatrix}
	& & &  -ie^{i\theta} \\
	& & ie^{-i\theta}  & \\
	& ie^{i\theta} & & \\
	-ie^{-i\theta} & & & 
\end{pmatrix} \:.
\end{flalign*}
Then, by a straightforward computation we see that
\begin{flalign*}
&\BQ\:\sum_{\beta=1}^{3} (\mathbf R \mathbf v)_\beta \: \gamma^\beta\: \BQ^{-1 }\\
&= v_1 \Big( \cos\theta \:\BQ \:\gamma^1\: \BQ^{-1} 
+ \sin\theta \:\BQ \:\gamma^2\: \BQ^{-1} \Big)
 + v_2 \Big( -\sin\theta \:\BQ\:\gamma^1\: \BQ^{-1} 
 + \cos\theta \: \BQ\:\gamma^2\: \BQ^{-1} \Big) + v_3 \gamma^3 \\
&= \sum_{\beta=1}^{3} v_\beta \gamma^\beta \:,
\end{flalign*}
which concludes the proof.
\end{proof}

%
Lemma~\ref{lem:RotDirac} ensures that 
the symbols~$\CA$ and~$\CA^{(\eps)}$ satisfy the conditions of Proposition~\ref{prop:area}. 
Therefore,
\begin{align*}
{\sf B}(\CA^{(\eps)}; \eta_\varkappa) = \mathfrak M_\varkappa^{(\eps)}\, 
\mathrm{vol}_2(\partial \Lambda) ,\quad  \,
{\sf B}(\CA; \eta_\varkappa) =  \mathfrak M_\varkappa\, \mathrm{vol}_2(\partial \Lambda) \:,
\end{align*}
where
\begin{align*}
\mathfrak M_\varkappa^{(\eps)} :=  \mathfrak M(\be; \CA^{(\eps)}; \eta_\varkappa),
\quad \mathfrak M_\varkappa :=  \mathfrak M(\be; \CA; \eta_\varkappa) \:,
\end{align*}
with an arbitrary unit vector~$\be$, and the coefficients $\mathfrak{M}$ are defined in \eqref{eq:frakm}. 
By continuity of the asymptotic coefficient established in Lemma~\ref{lem:cont1}, we have 
$\mathfrak M_\varkappa^{(\eps)} \to \mathfrak M_\varkappa$ as~$\eps \searrow 0$.

Finally, 
since the matrix symbols~$\CA$ and~$\CA^{(\eps)}$ satisfy the bounds~$0\le \CA(\bxi)\le \mathds{1}$ 
and~$0\le \CA^{(\eps)}(\bxi)\le \mathds{1}$ for 
all~$\bxi\not = \bold{0}$, it follows from 
Corollary~\ref{cor:poseta} that~${\sf B}(\CA; \eta_\varkappa) > 0$ 
and~${\sf B}(\CA^{(\eps)}; \eta_\varkappa)>0$ for~$\varkappa\in (0, 2)$. 
This immediately implies that 
$\mathfrak M_\varkappa^{(\eps)}>0$ and~$\mathfrak M_\varkappa > 0$ for 
$\varkappa\in (0, 2)$, as claimed. 
The proof of Theorem~\ref{thm:mainPhys} is now complete. 

\Thanks{\em{Acknowledgments:}}
M.L.\ gratefully acknowledges support by the Studienstiftung des deutschen Volkes and the Marianne-Plehn-Programm. The authors are grateful to the referees for their instructive remarks.



\begin{thebibliography}{10}

\bibitem{cfsweblink}
\emph{Link to web platform on causal fermion systems:
  \href{https://www.causal-fermion-system.com}{\textrm{www.causal-fermion-system.com}}}.

\bibitem{amico-fazio}
L.~Amico, R.~Fazio, A.~Osterloh, and V.~Vedral, \emph{Entanglement in many-body
  systems}, arXiv:quant-ph/0703044, Rev. Modern Phys. \textbf{80} (2008),
  no.~2, 517--576.

\bibitem{Berezin}
F.A. Berezin, \emph{Convex functions of operators.}, Mat. Sb. (N.S.)
  \textbf{88(130)} (1972), 268--276. \MR{300121}

\bibitem{birman-karadzov}
M.Sh. Birman, G.E. Karadzhov, and M.Z. Solomjak, \emph{Boundedness conditions
  and spectrum estimates for the operators {$b(X)a(D)$} and their analogs},
  Estimates and asymptotics for discrete spectra of integral and differential
  equations ({L}eningrad, 1989--90), Adv. Soviet Math., vol.~7, Amer. Math.
  Soc., Providence, RI, 1991, pp.~85--106.

\bibitem{birman-solomjak2}
M.Sh. Birman and M.Z. Solomjak, \emph{Estimates for the singular numbers of
  integral operators}, Uspehi Mat. Nauk \textbf{32} (1977), no.~1(193), 17--84,
  271.

\bibitem{birman-solomjak}
\bysame, \emph{Spectral theory of selfadjoint operators in {H}ilbert space},
  Mathematics and its Applications (Soviet Series), D. Reidel Publishing Co.,
  Dordrecht, 1987, Translated from the 1980 Russian original by S.
  Khrushch\"{e}v and V. Peller. \MR{1192782}

\bibitem{bollmann-mueller}
L.~Bollmann and P.~M\"uller, \emph{Widom’s formula for discontinuous
  matrix-valued symbols}, arXiv:2311.06036 [math.SP] (2023).

\bibitem{casini-huerta}
H.~Casini and M.~Huerta, \emph{Analytic results on the geometric entropy for
  free fields}, arXiv:0707.1300 [hep-th], J. Stat. Mech. Theory Exp. (2008),
  no.~1, P01012, 9.

\bibitem{cfs}
F.~Finster, \emph{The {C}ontinuum {L}imit of {C}ausal {F}ermion {S}ystems},
  arXiv:1605.04742 [math-ph], Fundamental Theories of Physics, vol. 186,
  Springer, 2016.

\bibitem{fermientropy}
F.~Finster, R.~Jonsson, M.~Lottner, A.~Much, and S.~Murro, \emph{Notions of
  fermionic entropies for causal fermion systems}, arXiv:2408.01710 [math-ph]
  (2024).

\bibitem{intro}
F.~Finster, S.~Kindermann, and J.-H. Treude, \emph{An {I}ntroductory {C}ourse
  on {C}ausal {F}ermion {S}ystems}, in preparation,
  \href{https://causal-fermion-system.com/intro-public.pdf}{www.causal-fermion-system.com/intro-public.pdf}
  (2024).

\bibitem{galanda2023relative}
S.~Galanda, A.~Much, and R.~Verch, \emph{Relative entropy of fermion excitation
  states on the {CAR} algebra}, arXiv:2305.02788 [math-ph], Math. Phys. Anal.
  Geom. \textbf{26} (2023), 21.

\bibitem{helling-leschke-spitzer}
R.~Helling, H.~Leschke, and W.~Spitzer, \emph{A special case of a conjecture by
  {W}idom with implications to fermionic entanglement entropy}, arXiv:0906.4946
  [math-ph], Int. Math. Res. Not. IMRN (2011), no.~7, 1451--1482.

\bibitem{hollands-islam}
S.~Hollands, O.~Islam, and K.~Sanders, \emph{Relative entanglement entropy for
  widely separated regions in curved spacetime}, arXiv:1711.02039 [math-ph], J.
  Math. Phys. \textbf{59} (2018), no.~6, 062301, 13.

\bibitem{hollands-sanders}
S.~Hollands and K.~Sanders, \emph{Entanglement {M}easures and their
  {P}roperties in {Q}uantum {F}ield {T}heory}, arXiv:1702.04924 [quant-ph],
  Springer Briefs in Mathematical Physics, vol.~34, Springer, Cham, 2018.

\bibitem{hormanderI}
L.~H{\"o}rmander, \emph{The {A}nalysis of {L}inear {P}artial {D}ifferential
  {O}perators. {I}}, second ed., Grundlehren der Mathematischen Wissenschaften,
  vol. 256, Springer-Verlag, Berlin, 1990.

\bibitem{horodecki}
R.~Horodecki, P.~Horodecki, M.~Horodecki, and K.~Horodecki, \emph{Quantum
  entanglement}, arXiv:quant-ph/0702225, Rev. Modern Phys. \textbf{81} (2009),
  no.~2, 865--942.

\bibitem{klich}
I.~Klich, \emph{Lower entropy bounds and particle number fluctuations in a
  {F}ermi sea}, arXiv:quant-ph/0406068, J. Phys. A \textbf{39} (2006), no.~4,
  L85--L91.

\bibitem{laptev_safarov}
A.~Laptev and Y.~Safarov, \emph{Szeg{\H o} type limit theorems}, J. Funct.
  Anal. \textbf{138} (1996), no.~2, 544--559. \MR{1395969}

\bibitem{leschke-sobolev-spitzer2}
H.~Leschke, A.V. Sobolev, and W.~Spitzer, \emph{Large-scale behaviour of local
  and entanglement entropy of the free {F}ermi gas at any temperature},
  arXiv:1501.03412 [quant-ph], J. Phys. A \textbf{49} (2016), no.~30, 30LT04,
  9.

\bibitem{leschke-sobolev-spitzer}
\bysame, \emph{Trace formulas for {W}iener-{H}opf operators with applications
  to entropies of free fermionic equilibrium states}, arXiv:1605.04429
  [math.SP], J. Funct. Anal. \textbf{273} (2017), no.~3, 1049--1094.

\bibitem{LSS_2022}
\bysame, \emph{R\'{e}nyi entropies of the free {F}ermi gas in multi-dimensional
  space at high temperature}, Toeplitz operators and random matrices---in
  memory of {H}arold {W}idom, arXiv:2201.11087 [math-ph], Oper. Theory Adv.
  Appl., vol. 289, Birkh\"{a}user/Springer, Cham, 2022, pp.~477--508.

\bibitem{longo-xu}
R.~Longo and F.~Xu, \emph{Relative entropy in {CFT}}, arXiv:1712.07283 [math],
  Adv. Math. \textbf{337} (2018), 139--170.

\bibitem{ohya-petz}
M.~Ohya and D.~Petz, \emph{Quantum {E}ntropy and its {U}se}, Texts and
  Monographs in Physics, Springer-Verlag, Berlin, 1993.

\bibitem{radzikowski}
M.J. Radzikowski, \emph{Micro-local approach to the {H}adamard condition in
  quantum field theory on curved space-time}, Commun. Math. Phys. \textbf{179}
  (1996), no.~3, 529--553.

\bibitem{sakurai}
J.J. Sakurai and J.~Napolitano, \emph{Advanced {Q}uantum {M}echanics}, second
  ed., Addison-Wesley Publishing Company, 1994.

\bibitem{simon2005}
B.~Simon, \emph{Trace ideals and their applications}, second ed., Mathematical
  Surveys and Monographs, vol. 120, American Mathematical Society, Providence,
  RI, 2005.

\bibitem{sobolev-schatten}
A.V. Sobolev, \emph{On the {S}chatten-von {N}eumann properties of some
  pseudo-differential operators}, arXiv:1310.2083 [math.SP], J. Funct. Anal.
  \textbf{266} (2014), no.~9, 5886--5911.

\bibitem{sobolev-functions}
\bysame, \emph{Functions of self-adjoint operators in ideals of compact
  operators}, arXiv:1504.07261 [math.SP], J. Lond. Math. Soc. (2) \textbf{95}
  (2017), no.~1, 157--176.

\bibitem{thaller}
B.~Thaller, \emph{The {D}irac {E}quation}, Texts and Monographs in Physics,
  Springer-Verlag, Berlin, 1992.

\bibitem{Widom1980}
H.~Widom, \emph{Szeg{\H o}'s limit theorem: the higher-dimensional matrix
  case}, J. Functional Analysis \textbf{39} (1980), no.~2, 182--198.

\bibitem{widomtrace}
\bysame, \emph{A trace formula for {W}iener-{H}opf operators}, J. Operator
  Theory \textbf{8} (1982), no.~2, 279--298.

\bibitem{widombook}
\bysame, \emph{Asymptotic {E}xpansions for {P}seudodifferential {O}perators on
  {B}ounded {D}omains}, Lecture Notes in Mathematics, vol. 1152,
  Springer-Verlag, Berlin, 1985.

\bibitem{witten-entangle}
E.~Witten, \emph{On entanglement properties of quantum field theory},
  arXiv:1803.04993 [hep-th], Rev. Mod. Phys. \textbf{90} (2018), 045003.

\end{thebibliography}
\providecommand{\bysame}{\leavevmode\hbox to3em{\hrulefill}\thinspace}
\providecommand{\MR}{\relax\ifhmode\unskip\space\fi MR }
\providecommand{\MRhref}[2]{%
  \href{http://www.ams.org/mathscinet-getitem?mr=#1}{#2}
}
\providecommand{\href}[2]{#2}

\end{document}